%% file: main.tex
\documentclass[12pt]{article}
\input{preamble}

\hypersetup{
  pdftitle={Contracting for Decision-Relevant Beliefs},
  pdfauthor={Bo Cowgill},
  pdfkeywords={belief elicitation, mechanism design, multidimensional screening,
    state-contingent contracts, implementation rent, decision markets}
}

\begin{document}
\onehalfspacing

\title{\Huge Contracting for Decision-Relevant Beliefs\thanks{The author
thanks Fabrizio Dell'Acqua, Kris Gulati, Rubi Hudson, Brian Jabarian, Scott Kominers,
Barton Lee, Don Moore, Chad Syverson, Zikai Xu, and Eric Zitzewitz for constructive discussions
and feedback, and participants at the Institutions and Innovation Conference
(Columbia), Manifest Conference (Berkeley), and the Stigler Fellows Conference
(Chicago Booth) for helpful comments. I thank the Kauffman Foundation Emerging
Scholars Program, the Columbia Center for Political Economy, the Institute for
Humane Studies (grant no.\ IHS018366), the NET Institute, the Stellar Development
Foundation, and the Columbia-Ethereum Center for Blockchain Protocol Design. I
also acknowledge the Russell Sage Foundation, where I was a Visiting Scholar
during part of this project. Refine.ink and GPT\mbox{-}5.6 Pro were used to check
the paper for consistency and clarity.}}

\author{Bo Cowgill\thanks{Assistant Professor, University of Toronto, IZA, and CESifo. \href{mailto:bo@bocowgill.org}{bo@bocowgill.org}.}}

\date{\today\phantom{-}(First Version: August 2026) \\ \ \\ \emph{Working Paper}\normalsize}


\maketitle\thispagestyle{empty}
\begin{abstract}
\begin{singlespace}
Forecasts guide decisions, but an expert who privately values the decision may
misreport beliefs to influence it. We study contracts that jointly screen the
expert's belief about a verifiable state and her stake in a binary action. When
two distinct stake types can each hold every belief, exact belief-only IC
forces the action to ignore the report. Joint screening permits responsiveness but may require
implementation rent. For thresholds, minimum rent equals the
distribution-weighted positive curvature of the belief-dependent cutoff.
Optimal threshold design is rent-adjusted concavification. We give sufficient
conditions for unrestricted threshold optimality and bound possible gains when
they fail.

\vspace{5mm}
\noindent \emph{JEL Classification:} D82, D83, D86
\vspace{1.5mm} \\
\emph{Keywords:} belief elicitation; mechanism design; multidimensional
screening; state-contingent contracts; implementation rent; decision markets
\end{singlespace}
\end{abstract}

\pagebreak
\section{Introduction}
\label{sec:introduction}

Forecasts are most valuable when they change decisions. Yet when a
forecaster has a private stake in the decision, an optimistic report can be
hard to interpret. It may reflect information about the state, an attempt to
influence the action, or both.
This paper asks how a principal should elicit and use beliefs when the expert's
preferences over the resulting decision are also private.

We study a principal choosing an action whose value depends on an
uncertain state. An agent privately observes her posterior belief about that
state and also privately knows her state-independent payoff from the action.
The realized state is subsequently verifiable, so payments can reward accurate
forecasts. The difficulty is that the forecast also influences an action the
agent values.

Our first result makes this conflict sharp. Consider any mechanism that asks
only for the agent's belief while allowing both the action and arbitrary
state-contingent transfers to depend on her report. If two distinct preference
types can each hold every belief, incentive compatibility requires the
decision to ignore the belief report (Theorem
\ref{thm:belief-only-impossibility}). The mechanism may still reward forecast
accuracy, but it cannot use the forecast to choose the action---the reason for
eliciting it in the first place. This conclusion holds however small the
difference between those preference types.

The theorem rules out responsive belief-only mechanisms, not all informative
reporting. We therefore study direct mechanisms that allow reports of both
beliefs and stakes.
Joint screening separates into two incentive problems.
First, fixing the belief report, standard screening applies: higher-stake
agents must be weakly more likely to receive the action, and the envelope
formula pins down expected payments up to a belief-specific utility floor
\(R(p)\).

Second, to govern deviations across belief reports, the
principal uses how payments vary with the realized state. This exposure---what
we call the transfer tilt---changes the payoff from a belief misreport while
leaving truthful expected payment unchanged. The tilt is analogous to the
payoff differential in a scoring rule: it makes a report more or less
attractive depending on which state occurs.

Together, these requirements admit a compact geometric characterization:
global BIC and interim IR hold exactly when truthful utility is a nonnegative
convex function of the agent's preference and belief, whose slopes encode the
allocation and transfer tilt (Proposition
\ref{prop:convex-utility-characterization}).
This feasibility characterization and its transfer-recovery formulas are
distribution-free: they do not require knowledge of \(F\) or \(G\).
Distributional knowledge enters when the principal evaluates expected
implementation cost and selects a profit-optimal allocation.

These two incentive problems change the principal's objective. If belief
reports could be treated separately, the principal would set every floor
\(R(p)\) to zero. Global IC instead links the floors: lowering the floor on one
report can induce types assigned to that report to choose another. We call the
least expected floor
compatible with an allocation its \emph{implementation rent}.
The principal therefore maximizes virtual surplus net of this endogenous cost
(Theorem \ref{thm:global-reduction}). Truthful belief elicitation can thus
change which information-contingent decisions are worth implementing.

We make this cost transparent for threshold decisions. Suppose the principal
takes the action when the reported preference type exceeds a
belief-dependent cutoff \(c(p)\). A fixed cutoff is globally implementable
with floor \(R\) exactly when \(R\ge0\) and both \(R\) and \(R-c\) are convex.
Solving for the least-cost floor gives, for every finite-rent cutoff,
\[
\operatorname{Rent}(c)
=
\int_{\mathcal P}J_G(t)\,d(D^2c)_+(t),
\]
where \(J_G\) is a Jensen kernel determined by the belief distribution and
\((D^2c)_+\) records the cutoff's positive curvature (Theorem
\ref{thm:finite-rent-curvature}). Implementation rent is therefore the price
of upward changes in the cutoff's slope. When the support of beliefs spans the
report interval, a cutoff has zero rent exactly when it is concave.

We then choose the cutoff. The pointwise benchmark maximizes virtual surplus
separately at each belief, ignoring deviations across belief reports. For an
interior benchmark under the paper's density and support conditions, zero rent
is necessary and sufficient for global optimality; finite positive rent makes
distortion profitable even among thresholds (Theorem
\ref{thm:pointwise-cutoff-optimality}). The optimal threshold instead trades
off the virtual-surplus loss from departing from that benchmark against the
curvature cost of the new cutoff. We call this \emph{rent-adjusted
concavification}. It is a soft analogue of ironing: positive curvature of
the cutoff schedule \(c(p)\)---upward changes in its slope---is priced
rather than prohibited and retained only when its virtual-surplus benefit
justifies the resulting rent.

Rent-adjusted concavification identifies the best deterministic threshold. It
also solves the unrestricted problem when \(B(z,p)\) is concave in \(z\): any
globally BIC/IR randomized allocation can then be replaced by a threshold with
the same floor and weakly higher payoff (Theorem
\ref{thm:threshold-sufficiency}). Without this curvature condition, the least
concave majorant of \(B\) bounds the possible gain from randomization and yields
a contact test for global optimality (Proposition
\ref{prop:concave-envelope-bound}). The online appendix gives an additional
candidate-specific test. Pointwise ironing alone is insufficient because
cross-belief constraints depend on the entire cutoff lottery, not only its
mean. Thus the curvature of \(c(p)\) across beliefs determines implementation
cost, while the curvature of \(B(z,p)\) in \(z\) determines when thresholds
suffice. Under quadratic cutoff loss, the threshold problem becomes an
attained convex program. In the dependent capacity-expansion example,
preference--belief dependence makes expansion especially attractive at
intermediate beliefs and bends the pointwise cutoff. Straightening it
sacrifices little virtual surplus, eliminates implementation rent, and raises
the firm's payoff by about 7.1 percent. Strict concavity of cutoff virtual
surplus makes the resulting optimal threshold globally optimal among all
BIC/IR mechanisms.

The paper first relates to belief elicitation and forecast-based decisions.
Proper scoring rules elicit probability reports
\citep{winkler1967quantification,gneiting2007strictly}, while prediction and
decision markets aggregate forecasts and link them to actions
\citep{wolfers2004prediction,cowgill2015corporate,hanson1999decision}.
Outside stakes can prevent probability reports from being separated from
utility \citep{kadane1988separating}. Decision-scoring mechanisms design
truthful forecasting incentives when reports guide actions, but do not screen
a private stake in the action \citep{othman2010decision,chen2014eliciting}.
\citet{boutilier2012eliciting} is the closest comparison: he allows the expert to value the induced
decision and characterizes truthful compensation when that utility is known;
when it is uncertain, he bounds misreporting and decision loss. We instead jointly screen the private stake
and belief and characterize exact global BIC/IR, the least implementation
rent, and how that rent distorts the principal's decision.

Second, the paper relates to strategic communication and delegation with
interested experts \citep{crawford1982strategic,dessein2002authority,
alonso2008optimal}. \citet{gerardi2009aggregation} elicit private
signals and preferences from multiple experts by slightly distorting decisions,
thereby approximately implementing a target rule. We study the complementary single-expert
contracting problem. The realized state is contractible, and the principal
seeks exact joint screening at minimum expected rent and the resulting
profit-maximizing decision.

Finally, the paper contributes to multidimensional mechanism design, where
convexity characterizes incentive compatibility but generally does not make
optimal mechanisms tractable
\citep{mcafee1988multidimensional,rochet1998ironing,armstrong1999multi}.
Ex-post payoff information can expand efficient implementation
\citep{mezzetti2004mechanism}; \citet{bergemann2025data} use post-allocation data
to jointly elicit private preferences and information and implement efficient
allocations through data-driven VCG mechanisms. We instead study a single-agent
problem in which a profit-maximizing principal chooses an allocation and its
minimum implementation rent. Beliefs enter expected state-contingent payments
linearly, making this problem unusually tractable: for threshold allocations,
global incentive compatibility reduces to a one-dimensional curvature problem.

The paper proceeds as follows. Section~\ref{sec:setup} presents the environment
and the belief-only impossibility result (Theorem \ref{thm:belief-only-impossibility}). Sections~\ref{sec:preference-margin}--
\ref{sec:global-reduced-problem} develop the preference and belief margins of
global incentive compatibility and reduce the principal's problem to virtual
surplus net of implementation rent. Section~\ref{sec:single-threshold}
characterizes the implementability and cost of a fixed threshold.
Section~\ref{sec:optimal-threshold} develops rent-adjusted concavification,
gives sufficient conditions under which thresholds solve the unrestricted
problem, and bounds possible gains when those conditions fail. Section
\ref{sec:conclusion} concludes.

\section{Setup}
\label{sec:setup}

A principal chooses whether to take an action $a\in\mathcal{A}=\{0,1\}$. The action is beneficial to the principal's objectives in some states of the world, but not in others. Let \(\omega\in\Omega=\{0,1\}\) denote the state of the world. Let $v(\omega)$ denote the principal's payoff from taking action \(a=1\) in state \(\omega\), with the payoff from \(a=0\) normalized to \(0\). We assume \(v(1)>0>v(0)\).

The principal does not know the state of the world. A single agent (\(N=1\)) is privately endowed with information about the state. The agent also has preferences over the action. Let \(\theta\in\Theta\) denote the value of the project to the agent, where \(\Theta=[\underline\theta,\bar\theta]\subseteq\mathbb R\). The agent has quasilinear preferences. Before transfers, her payoff from the action is \(\theta a\), so her payoff from \(a=0\) is normalized to zero. The interval \(\Theta\) may include both negative and positive values, allowing agents to either benefit from the project or incur a cost from it.

Let \(\mathcal I\) denote all of the agent's private information, including her preference type \(\theta\), and define
\[
p:=\Pr(\omega=1\mid\mathcal I).
\]
We take \((\theta,p)\) to be the payoff-relevant summary of this information. Thus \(p\) is the agent's posterior belief that \(\omega=1\), and \(1-p\) is her posterior belief that \(\omega=0\). Throughout the main analysis, beliefs take values in the nondegenerate open interval
\[
\mathcal P=(\underline p,\bar p),
\qquad
0\le \underline p<\bar p\le 1.
\]
Write \(\overline{\mathcal P}=[\underline p,\bar p]\) for its closure.\footnote{Closed or half-closed belief domains can be accommodated by additionally requiring the convex functions used in the implementability results to admit finite one-sided supporting slopes at every included boundary point. We use an open belief domain to avoid these inessential endpoint qualifications.} We take the induced posterior \(p\) as the primitive belief coordinate elicited by the mechanism.

The agent's private information is the pair
\((\theta,p)\in\Theta\times\mathcal P\). For each \(p\in\mathcal P\), let
\(F(\cdot\mid p)\) denote the conditional distribution of \(\theta\) given
\(p\).

\subsection{Limits of Belief-Only Elicitation}\label{subsec:belief-only-impossibility}

\begin{assump}[State contractibility]
\label{assump:state-contractibility}
The realized state \(\omega\) is publicly observed (or contractible) after the
decision, regardless of which action is chosen, so transfers may depend on it.
\end{assump}

\Needspace{6\baselineskip}
\begin{mdframed}[style=runningexamplebox]
\begin{runexamp}[Capacity expansion: benchmark]
A manufacturer must decide whether to expand production capacity for an
existing product. The division manager privately observes both her forecast
\(p\) that demand for the product will be high and her personal value
\(\theta\in[0,1]\) from expansion. Market demand for the existing product is
subsequently observed whether or not the firm expands. Expansion gives the
firm payoff (1) under high demand and (-1) under low demand; not expanding
gives payoff zero. The manager may value expansion because it increases her
division's budget, headcount, or importance within the firm.

For the benchmark, beliefs have full support on \(\mathcal P=(0,1)\), and
\(\theta\mid p\) is uniform on \([0,1]\) for every \(p\). Thus preferences and
beliefs are independent. The firm's expected payoff from expansion is
\(V(p)=2p-1\).

The firm wants expansion to respond to \(p\), while the manager wants to
influence an action she values. Because the same continuum of preference types
can hold every belief, the belief-only impossibility result below applies. We
return to this benchmark throughout the analysis and later enrich it to
illustrate positive implementation rent.
\end{runexamp}
\end{mdframed}

\Needspace{8\baselineskip}
\begin{definition}[Belief-only mechanism]
\label{def:belief-only-mechanism}
A \emph{belief-only mechanism} is a pair $(x^B,t^B)$, where
$x^B:\mathcal P\to[0,1]$ gives the probability of choosing $a=1$ after each
belief report and $t^B:\mathcal P\times\Omega\to\mathbb R$ gives the
corresponding state-contingent transfer.
\end{definition}
Throughout, a positive transfer is paid by the agent to the principal: it is
added to the principal's payoff and subtracted from the agent's utility.
Because \(p\) is the posterior after the agent observes all of her private
information, iterated expectations imply\footnote{Because \(\omega\) is
binary, this posterior identity is equivalent to calibration together with
\(\theta\perp\!\!\!\perp\omega\mid p\). Conditional probabilities are
understood as a version satisfying the displayed identity throughout the
modeled type domain.}
\begin{align}
\Pr(\omega=1\mid\theta,p)=p.
\label{eq:posterior-consistency}
\end{align}
Thus \(\theta\) and \(p\) may be arbitrarily dependent; posterior consistency
requires only that \(\theta\) contain no additional information about the state
once \(p\) is known.
Hence a type \((\theta,p)\) evaluates state-contingent transfers using probabilities \(p\) and \(1-p\). A scoring rule \citep{winkler1967quantification} can be embedded in such a mechanism through its state-contingent transfer component.

A belief-only mechanism is incentive compatible if, for every
$p,\hat p\in\mathcal P$ and
every $\theta\in\operatorname{supp}F(\cdot\mid p)$,
\begin{align}
\theta x^B(p)
-\bigl[p\,t^B(p,1)+(1-p)t^B(p,0)\bigr]
\geq
\theta x^B(\hat p)
-\bigl[p\,t^B(\hat p,1)+(1-p)t^B(\hat p,0)\bigr].
\label{eq:belief-only-ic}
\end{align}

Because a belief-only mechanism does not solicit a preference report,
this is its full incentive-compatibility requirement.

\begin{theorem}[Belief-only IC forces decision blindness]
\label{thm:belief-only-impossibility}
Suppose there exist two distinct preference types
\(\theta_L,\theta_H\in\Theta\) such that
\[
\theta_L,\theta_H
\in
\operatorname{supp}F(\cdot\mid p)
\qquad
\text{for every }p\in\mathcal P.
\]
Then every incentive-compatible belief-only mechanism has an action rule
\(x^B\) that is constant on \(\mathcal P\).
\end{theorem}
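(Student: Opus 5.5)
Fix two belief reports \(p,\hat p\in\mathcal P\) and write \(T(q;r):=q\,t^B(r,1)+(1-q)t^B(r,0)\) for the expected transfer that a type with belief \(q\) expects after reporting \(r\). The plan is to use the IC constraint \eqref{eq:belief-only-ic} for both types \(\theta_L\) and \(\theta_H\). Both are admissible at every belief, so each constraint can be invoked at \(p\) and at \(\hat p\). The key observation is that, at a fixed true belief \(p\), the transfer terms \(T(p;p)\) and \(T(p;\hat p)\) are the same for both types. Only the action term \(\theta x^B\) differs between them.

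First, write IC at true belief \(p\) against report \(\hat p\) for type \(\theta\in\{\theta_L,\theta_H\}\):
\[
\theta\bigl[x^B(p)-x^B(\hat p)\bigr]\ \ge\ T(p;p)-T(p;\hat p).
\]
Second, write IC at true belief \(\hat p\) against report \(p\) for the same types:
\[
\theta\bigl[x^B(\hat p)-x^B(p)\bigr]\ \ge\ T(\hat p;\hat p)-T(\hat p;p).
\]
Adding the two inequalities for the same \(\theta\) removes the action terms. This gives the standard scoring-rule monotonicity condition on transfers, but that is not what we need. Instead we subtract across types at a fixed true belief.

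Set \(\Delta:=x^B(p)-x^B(\hat p)\) and \(K:=T(p;p)-T(p;\hat p)\). The first step gives \(\theta_L\Delta\ge K\) and \(\theta_H\Delta\ge K\). The second step gives \(-\theta_L\Delta\ge \hat K\) and \(-\theta_H\Delta\ge \hat K\), with \(\hat K:=T(\hat p;\hat p)-T(\hat p;p)\).

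The main obstacle is that these inequalities alone do not force \(\Delta=0\). They only bound \(K\) and \(\hat K\) above by the minimum over the two types. To close the argument I will use the linearity of \(T(q;r)\) in the true belief \(q\), which is what makes the two beliefs interact. Specifically, \(K+\hat K=-(p-\hat p)\bigl[(t^B(p,1)-t^B(p,0))-(t^B(\hat p,1)-t^B(\hat p,0))\bigr]\).

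This route alone seems insufficient. A cleaner route is to exploit that the condition holds at every belief, so the utility \(U_\theta(q)=\max_r\{\theta x^B(r)-T(q;r)\}\) of each type is convex in \(q\). By the envelope theorem its derivative is \(-(t^B(q,1)-t^B(q,0))\) almost everywhere, and this derivative is the same for \(\theta_L\) and \(\theta_H\).

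Hence \(U_{\theta_H}-U_{\theta_L}\) is constant in \(q\). Under truthful reporting, however, \(U_{\theta_H}(q)-U_{\theta_L}(q)=(\theta_H-\theta_L)x^B(q)\), since the transfer terms are identical across the two types. Because \(\theta_H\neq\theta_L\), it follows that \(x^B\) is constant.

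The step requiring care is the envelope argument on the open interval \(\mathcal P\). Each \(U_\theta\) is a supremum of affine functions of \(q\), finite because truthful reporting attains it. It is therefore convex, and hence absolutely continuous on compact subintervals. Its a.e. derivative is the slope of the affine function selected by the truthful report, namely \(-(t^B(q,1)-t^B(q,0))\), and this slope does not depend on \(\theta\). Integrating the equal derivatives yields the constant difference, which completes the proof.
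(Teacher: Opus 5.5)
Your final argument is correct and is essentially the paper's proof. Each type's truthful utility is a finite convex upper envelope, and the truthful report supplies the same supporting slope $-(t^B(q,1)-t^B(q,0))$ for both types; the difference $U_{\theta_H}-U_{\theta_L}=(\theta_H-\theta_L)x^B$ is then locally absolutely continuous with zero a.e.\ derivative, so $x^B$ is constant. The abandoned first route is not needed, and its expression for $K+\hat K$ has the wrong sign: it should be $(p-\hat p)\bigl[(t^B(p,1)-t^B(p,0))-(t^B(\hat p,1)-t^B(\hat p,0))\bigr]$. This slip does not affect your proof.
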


\begin{appendixproof}[Proof of Theorem \ref{thm:belief-only-impossibility}]
For \(i\in\{L,H\}\), define the state-contingent transfer difference
\[
\beta^B(q):=t^B(q,1)-t^B(q,0)
\]
and the truthful utility of type \((\theta_i,p)\) by
\[
U_i(p)
:=
\theta_i x^B(p)-t^B(p,0)-p\beta^B(p).
\]
Because \(\theta_i\) belongs to the conditional support at every belief,
incentive compatibility gives
\begin{align}
U_i(p)
&=
\sup_{q\in\mathcal P}
\left\{
\theta_i x^B(q)-t^B(q,0)-p\beta^B(q)
\right\}.
\label{eq:belief-only-convex-envelope}
\end{align}
Thus \(U_i\) is a finite convex function on \(\mathcal P\).

For any \(p,z\in\mathcal P\), evaluating the supremum in
\eqref{eq:belief-only-convex-envelope} at report \(q=p\) yields
\[
U_i(z)
\ge
U_i(p)-\beta^B(p)(z-p).
\]
Consequently,
\[
-\beta^B(p)
\in
\partial U_L(p)\cap\partial U_H(p)
\qquad
\text{for every }p\in\mathcal P.
\]
Moreover, the transfer terms cancel between the two truthful utilities:
\begin{align}
U_H(p)-U_L(p)
&=
(\theta_H-\theta_L)x^B(p).
\label{eq:belief-only-utility-difference}
\end{align}

Finite convex functions on an open interval are locally Lipschitz and hence
locally absolutely continuous. Equation
\eqref{eq:belief-only-utility-difference} therefore makes \(x^B\) locally
absolutely continuous. The two convex utilities are differentiable on a common
set \(D\subseteq\mathcal P\) of full Lebesgue measure. At every \(p\in D\),
each subdifferential is the singleton containing the corresponding derivative,
so the common-subgradient relation gives
\[
U_L'(p)=-\beta^B(p)=U_H'(p).
\]
Differentiating \eqref{eq:belief-only-utility-difference} on \(D\) therefore
gives \(x^{B\prime}(p)=0\).

Finally, fix any \(p_0<p_1\) in \(\mathcal P\). Local absolute continuity and
the fundamental theorem for absolutely continuous functions imply
\[
x^B(p_1)-x^B(p_0)
=
\int_{p_0}^{p_1}x^{B\prime}(p)\,dp
=0.
\]
Thus \(x^B\) is constant on \(\mathcal P\).
\end{appendixproof}

Theorem \ref{thm:belief-only-impossibility} separates eliciting a belief from
using it to choose an action. State-contingent transfers can still reward
forecast accuracy, but exact IC forces the action probability to ignore the
belief report. A scoring rule that is proper when the report affects only the
score therefore need not remain incentive compatible when the forecast guides
an action the agent values \citep{winkler1968good,wolfers2004prediction}.

When the expert's action preferences are known, compensation can offset this
motive \citep{boutilier2012eliciting}. The theorem shows what changes when the
strength of that motive is private: a belief-only mechanism cannot make the
decision responsive while preserving exact incentive compatibility for all
preference types. This conclusion has no minimum-stakes qualification: the two
common preference types may be arbitrarily close.

Here ``belief-only'' means truthful reporting of the posterior \(p\); the
theorem does not cover reports of preferred actions or contract choices.

\subsection{Direct Mechanisms}
\label{subsec:direct-mechanisms}

We now allow the principal to condition decisions and transfers on reports of
both dimensions. Let \(x:\Theta\times\mathcal P\to[0,1]\) be the allocation rule, so
that \(x(\hat\theta,\hat p)\) is the probability that the principal chooses
\(a=1\) after reports \((\hat\theta,\hat p)\). Let
\(t:\Theta\times\mathcal P\times\Omega\to\mathbb{R}\) be the transfer rule, so
that \(t(\hat\theta,\hat p,\omega)\) is the transfer in state
\(\omega\).\footnote{With quasilinear preferences, allowing transfers also to
depend on the realized action would not change the analysis:
\(t(\hat\theta,\hat p,\omega)\) can be interpreted as the expected transfer
conditional on the report and state, averaging over the mechanism's action
randomization. Throughout, mechanism rules are measurable.}

\begin{definition}[Interim transfers and utility]
\label{def:interim-payoffs}
For an agent whose true belief is $p$ and who reports $(\hat\theta,\hat p)$, define
\[
T(\hat\theta,\hat p;p)
:=
p\,t(\hat\theta,\hat p,1)
+
(1-p)\,t(\hat\theta,\hat p,0).
\]
On the truthful path, write $T(\theta,p)
:= T(\theta,p;p)$. The interim expected utility of a type $(\theta,p)$ who
reports $(\hat\theta,\hat p)$ is then
\[
u(\hat\theta,\hat p;\theta,p)
:=
\theta x(\hat\theta,\hat p)
-
T(\hat\theta,\hat p;p).
\]
On the truthful path, write $U(\theta,p)
:=
u(\theta,p;\theta,p)
=
\theta x(\theta,p)-T(\theta,p)$.
\end{definition}

\paragraph{Timing.} The timing is as follows. First, the principal commits to a direct mechanism. Second, \((\omega,\theta,p)\) is drawn from their joint distribution, and the agent privately observes \((\theta,p)\). Third, the agent reports \((\hat\theta,\hat p)\). Fourth, the principal chooses the action according to \(x(\hat\theta,\hat p)\). Finally, the state \(\omega\) is observed and
transfers are made.

\subsection{The Principal's Problem} 

For the optimization problem, we assume that the joint distribution of
\((\omega,\theta,p)\) is common knowledge. In particular, the principal knows
the marginal distribution \(G\) of beliefs and the conditional distributions
\(F(\cdot\mid p)\). Let \(H\) denote the resulting joint distribution of
private types, so that
\[
dH(\theta,p)=dF(\theta\mid p)\,dG(p).
\]

Define the principal's expected gross payoff from taking the action at belief
\(p\) by
\begin{align}
V(p):=p\,v(1)+(1-p)v(0).
\label{eq:principal-action-value}
\end{align}
By \eqref{eq:posterior-consistency}, the principal's expected payoff
conditional on \((\theta,p)\), under truthful reporting, is
\[
V(p)x(\theta,p)+T(\theta,p).
\]
The principal evaluates a mechanism by integrating this interim payoff over
\(H\), restricting attention to mechanisms with integrable truthful-path
interim transfers.\footnote{Formally, we require
\(\int_{\Theta\times\mathcal P}|T(\theta,p)|\,dH(\theta,p)<\infty\).
Since \(\theta x\) and \(Vx\) are bounded, this also makes truthful utility and
the principal's interim payoff integrable; the pointwise incentive-compatibility
results do not require it. We do not impose the stronger condition
\(\mathbb E[|t(\theta,p,\omega)|]<\infty\). We use this interim
integrability criterion throughout. It does not by itself imply absolute
integrability of realized state-contingent transfers, a stronger requirement
most conveniently expressed using the transfer tilt introduced in Definition
\ref{def:state-contingent-tilt}.}
\begin{assump}[Conditional full support]
\label{assump:conditional-full-support}
The conditional support of the preference type is the common interval
\(\Theta\):
\[
\operatorname{supp}F(\cdot\mid p)=\Theta
\qquad
\forall p\in\mathcal P.
\]
\end{assump}
In particular, this assumption implies the common-type condition in Theorem
\ref{thm:belief-only-impossibility}.
Accordingly, the BIC and IR constraints below are imposed on the full product
domain \(\Theta\times\mathcal P\).\footnote{This is deliberate: beliefs assigned
zero probability by \(G\) remain modeled types and admissible reports. Under a
support-only formulation, behavior inside support gaps would instead be part
of the choice of an incentive-compatible extension.}
We normalize the agent's payoff from declining to participate in the mechanism
to zero.
We can now state the principal's constrained problem.
\begin{definition}[Global direct-mechanism problem]
\label{def:globalproblem}
The principal's global problem is
\begin{align}
\sup_{x,t}\quad
& \int_{\Theta\times\mathcal P}
\left[
V(p)x(\theta,p)+T(\theta,p)
\right]
\,dH(\theta,p).
\tag{GP}\label{eq:globalproblem}
\\
\text{s.t.}\quad
& u(\theta,p;\theta,p)
\ge
u(\hat\theta,\hat p;\theta,p)
&&
\forall(\theta,p),(\hat\theta,\hat p)\in\Theta\times\mathcal P,
\tag{BIC}\label{eq:fullproblem-bic}
\\
& u(\theta,p;\theta,p)\ge 0
&&
\forall(\theta,p)\in\Theta\times\mathcal P,
\tag{IR}\label{eq:fullproblem-ir}
\\
& x(\theta,p)\in[0,1]
&&
\forall(\theta,p)\in\Theta\times\mathcal P.
\tag{F}\label{eq:fullproblem-feas}
\end{align}
\end{definition}

Constraint \hyperref[eq:fullproblem-bic]{\textup{(BIC)}}, evaluated using the
interim utility in Definition \ref{def:interim-payoffs}, is global: truthful
reporting must be optimal, though not necessarily uniquely optimal, against
every joint misreport
\((\hat\theta,\hat p)\). By contrast, the earlier belief-only IC condition in
\eqref{eq:belief-only-ic} restricts only the belief report. Constraint
\hyperref[eq:fullproblem-ir]{\textup{(IR)}} imposes truthful-path interim
individual rationality, and \hyperref[eq:fullproblem-feas]{\textup{(F)}}
imposes feasibility.

\section{The Preference Margin: Envelope and Virtual Surplus}
\label{sec:preference-margin}

We first isolate deviations in \(\theta\) that hold the belief report \(p\)
fixed. This same-belief IC condition is necessary but not sufficient for global
BIC: a joint deviation that changes \(p\) may still be profitable. Sections
\ref{sec:belief-margin} and \ref{sec:global-reduced-problem} add those deviations
and combine the two incentive margins.

\begin{definition}[Same-belief incentive compatibility]
\label{def:samebeliefic}
A pair \((x,T)\) satisfies same-belief incentive compatibility if, for every
belief \(p\in\mathcal P\) and every pair \(\theta,\hat\theta\in\Theta\),
\begin{align}
U(\theta,p)
\ge
U(\hat\theta,p)
+
(\theta-\hat\theta)x(\hat\theta,p).
\label{eq:samebeliefic}
\end{align}
\end{definition}

The next result gives the standard monotonicity and envelope characterization
of same-belief IC.

\begin{restatable}[Same-belief envelope]{lemma}{withinbeliefenv}
\label{lem:withinbeliefenv}
A pair \((x,T)\) satisfies same-belief IC if and only if, for every
\(p\in\mathcal P\), \(x(\cdot,p)\) is nondecreasing and
\begin{align}
U(\theta,p)
=
R(p)
+
\int_{\underline\theta}^{\theta}x(s,p)\,ds,
\qquad \forall\theta\in\Theta,
\label{eq:withinbelief-envelope}
\end{align}
where \(R(p):=U(\underline\theta,p)\).
\end{restatable}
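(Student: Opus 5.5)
Fix a belief \(p\in\mathcal P\) throughout; the same-belief condition \eqref{eq:samebeliefic} involves only the single slice \(\theta\mapsto(x(\theta,p),U(\theta,p))\), so the lemma reduces to the textbook one-dimensional Myerson characterization on the interval \(\Theta\). I will prove necessity and sufficiency separately for each fixed \(p\).

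For necessity, I first write \eqref{eq:samebeliefic} for the ordered pair \((\theta,\hat\theta)\) and for \((\hat\theta,\theta)\). This gives the sandwich
\[
(\theta-\hat\theta)x(\hat\theta,p)\le U(\theta,p)-U(\hat\theta,p)\le(\theta-\hat\theta)x(\theta,p).
\]
Comparing the outer terms gives \((\theta-\hat\theta)\bigl[x(\theta,p)-x(\hat\theta,p)\bigr]\ge0\), so \(x(\cdot,p)\) is nondecreasing. Next, \eqref{eq:samebeliefic} says \(U(\cdot,p)=\sup_{\hat\theta}\{U(\hat\theta,p)+(\theta-\hat\theta)x(\hat\theta,p)\}\), a supremum of affine functions of \(\theta\) with slopes in \([0,1]\). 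Hence \(U(\cdot,p)\) is convex and \(1\)-Lipschitz, so it is absolutely continuous on the compact interval \(\Theta\). The sandwich shows that \(x(\theta,p)\) is a subgradient of \(U(\cdot,p)\) at \(\theta\), so \(\partial_\theta U(\theta,p)=x(\theta,p)\) at almost every \(\theta\) (every point of differentiability). The fundamental theorem of calculus for absolutely continuous functions then yields \eqref{eq:withinbelief-envelope} with \(R(p)=U(\underline\theta,p)\).

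For sufficiency, assume monotonicity and \eqref{eq:withinbelief-envelope}. Then
\[
U(\theta,p)-U(\hat\theta,p)-(\theta-\hat\theta)x(\hat\theta,p)=\int_{\hat\theta}^{\theta}\bigl[x(s,p)-x(\hat\theta,p)\bigr]\,ds.
\]
If \(\theta\ge\hat\theta\), the integrand is nonnegative by monotonicity. If \(\theta<\hat\theta\), both the integrand and the orientation flip sign, so the integral is again nonnegative. Either way this is exactly \eqref{eq:samebeliefic}. The integral is well defined because \(x(\cdot,p)\) is monotone and bounded, hence Riemann integrable; this also avoids any appeal to joint measurability.

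The only delicate step is the necessity direction's passage from the subgradient sandwich to the integral formula. Pointwise, \(U\) need not be differentiable wherever \(x\) jumps, so I will route the argument through convexity and Lipschitz continuity of \(U(\cdot,p)\) rather than differentiating directly. This relies on \(x\in[0,1]\) and on \(\Theta\) being a compact interval, so the endpoint \(\underline\theta\) is a genuine type and \(R(p)\) is well defined and finite. Everything else is routine.
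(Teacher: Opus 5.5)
Your proof is correct. The monotonicity step and the sufficiency direction match the paper's. The two proofs differ only in how they pass from the two-sided sandwich to the integral formula.

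The paper stays elementary. It sums the sandwich over a partition $\pi$ of $[a,b]$ and notes that the gap between the resulting upper and lower Riemann sums is at most $\|\pi\|\,[x(b,p)-x(a,p)]$. Letting the mesh go to zero gives the identity at every $\theta$, using nothing beyond Riemann integrability of a bounded monotone function.

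You instead observe that $U(\cdot,p)$ is a finite supremum of affine functions with slopes in $[0,1]$, hence convex and $1$-Lipschitz. You identify $x(\theta,p)$ as a subgradient and then invoke a.e.\ differentiability and the fundamental theorem of calculus for absolutely continuous functions.

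Your route imports more real-analysis machinery, and it uses measurability of $x(\cdot,p)$, which monotonicity supplies. In return, it yields convexity of each $\theta$-slice of truthful utility as a byproduct. It also anticipates the subgradient viewpoint the paper adopts later in Proposition~\ref{prop:convex-utility-characterization} and in the proof of Theorem~\ref{thm:belief-only-impossibility}. The paper's Riemann-sum squeeze avoids differentiability altogether, which is what its proof explicitly sets out to do.
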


\begin{appendixproof}[Proof of Lemma \ref{lem:withinbeliefenv}]
Fix \(p\in\mathcal P\), and write \(x(\theta):=x(\theta,p)\) and
\(U(\theta):=U(\theta,p)\). Suppose first that
same-belief IC holds. For any \(\theta>\hat\theta\), applying
\eqref{eq:samebeliefic} in both directions gives
\begin{align}
(\theta-\hat\theta)x(\hat\theta)
\le
U(\theta)-U(\hat\theta)
\le
(\theta-\hat\theta)x(\theta).
\label{eq:withinbelief-two-sided-bound}
\end{align}
The two bounds can be consistent only if
\(x(\hat\theta)\le x(\theta)\). Hence \(x\) is nondecreasing.

It remains to derive the envelope formula without imposing differentiability.
Fix \(a<b\) in \(\Theta\), and let
\(\pi=\{a=\theta_0<\theta_1<\cdots<\theta_n=b\}\) be a partition of
\([a,b]\). Applying \eqref{eq:withinbelief-two-sided-bound} on each
subinterval and summing yields
\begin{align}
\sum_{i=1}^n(\theta_i-\theta_{i-1})x(\theta_{i-1})
\le
U(b)-U(a)
\le
\sum_{i=1}^n(\theta_i-\theta_{i-1})x(\theta_i).
\label{eq:withinbelief-riemann-bounds}
\end{align}
The difference between the upper and lower sums is bounded by
\[
\|\pi\|
\sum_{i=1}^n\bigl[x(\theta_i)-x(\theta_{i-1})\bigr]
=
\|\pi\|\,[x(b)-x(a)],
\]
where \(\|\pi\|:=\max_i(\theta_i-\theta_{i-1})\). Because \(x\) is bounded
and nondecreasing, it is Riemann integrable, and its Riemann and Lebesgue
integrals coincide. Letting the mesh of the partition converge to zero in
\eqref{eq:withinbelief-riemann-bounds} therefore gives
\[
U(b)-U(a)=\int_a^b x(s)\,ds.
\]
For \(\theta>\underline\theta\), take \(a=\underline\theta\) and
\(b=\theta\); at \(\theta=\underline\theta\), the identity follows directly
from the definition of \(R(p)\). Restoring the belief argument proves
\eqref{eq:withinbelief-envelope}.

Conversely, suppose that \(x(\cdot,p)\) is nondecreasing and
\eqref{eq:withinbelief-envelope} holds for every \(p\). Fix
\(\theta,\hat\theta\in\Theta\). If \(\theta\ge\hat\theta\), monotonicity gives
\[
U(\theta,p)-U(\hat\theta,p)
=
\int_{\hat\theta}^{\theta}x(s,p)\,ds
\ge
(\theta-\hat\theta)x(\hat\theta,p).
\]
If \(\theta<\hat\theta\), monotonicity instead gives
\[
U(\theta,p)-U(\hat\theta,p)
=
-\int_{\theta}^{\hat\theta}x(s,p)\,ds
\ge
-(\hat\theta-\theta)x(\hat\theta,p)
=
(\theta-\hat\theta)x(\hat\theta,p).
\]
Thus \eqref{eq:samebeliefic} holds in both cases, which proves same-belief IC.
\end{appendixproof}

The function \(R(p)\) is the belief-specific utility floor: it is the truthful
utility of the lowest preference type when the belief report is \(p\). Later
sections show that these floors cannot generally be chosen separately across
belief reports because deviations in \(p\) link them.

Using \(U(\theta,p)=\theta x(\theta,p)-T(\theta,p)\), Lemma
\ref{lem:withinbeliefenv} implies that any same-belief IC mechanism has
on-path expected transfer
\[
T(\theta,p)
=
\theta x(\theta,p)
-R(p)
-\int_{\underline\theta}^{\theta}x(s,p)\,ds.
\]
Because \(x\ge 0\), truthful utility is nondecreasing in \(\theta\) for every
fixed \(p\). Given the zero outside option, truthful-path interim IR is
therefore equivalent to \(R(p)\ge 0\) for every \(p\in\mathcal P\).

\begin{assump}[Conditional density regularity]
\label{assump:conditional-density}
For every \(p\in\mathcal P\), \(F(\cdot\mid p)\) is absolutely continuous and
admits a density \(f(\cdot\mid p)\) that is strictly positive on \(\Theta\).
\end{assump}

Unless stated otherwise, Assumptions \ref{assump:state-contractibility} and
\ref{assump:conditional-full-support} remain in force throughout the analysis
below. Assumption
\ref{assump:conditional-density} is needed only for results involving
densities, virtual values, or virtual surplus. The implementability results
developed later in Section \ref{sec:belief-margin} are distribution-free.

Under same-belief IC, the preference margin therefore admits a familiar
virtual-surplus representation, with a separate floor for each belief.

\begin{restatable}[Preference-margin virtual-surplus decomposition]{prop}{prefvirtualsurplus}
\label{prop:prefvirtualsurplus}
Under Assumption \ref{assump:conditional-density}, the principal's expected
payoff from any same-belief IC mechanism \((x,T)\) whose truthful-path interim
transfer is \(H\)-integrable is
\begin{align}
\Pi(x,R)
&=
\int_{\mathcal P}\int_{\Theta}
W(\theta,p)x(\theta,p)f(\theta\mid p)\,d\theta\,dG(p)
-
\int_{\mathcal P}R(p)\,dG(p),
\label{eq:payoff-virtual-form}
\\
\text{where}\qquad W(\theta,p)
&:=
V(p)
+
\theta
-
\frac{1-F(\theta\mid p)}{f(\theta\mid p)}.
\label{eq:belief-adjusted-virtual-surplus}
\end{align}
\end{restatable}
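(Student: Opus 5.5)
The plan is the standard Myerson integration-by-parts argument, run belief by belief and then integrated over \(G\).

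\emph{Step 1: substitute the transfer.} Write the principal's payoff as \(\int [V(p)x(\theta,p)+T(\theta,p)]\,dH\). Substitute the on-path transfer implied by Lemma \ref{lem:withinbeliefenv},
\[
T(\theta,p)=\theta x(\theta,p)-R(p)-\int_{\underline\theta}^{\theta}x(s,p)\,ds .
\]
Using \(dH=f(\theta\mid p)\,d\theta\,dG(p)\) under Assumption \ref{assump:conditional-density}, the payoff becomes
\[
\int_{\mathcal P}\int_\Theta \bigl[(V(p)+\theta)x(\theta,p)-R(p)-\textstyle\int_{\underline\theta}^{\theta}x(s,p)\,ds\bigr] f(\theta\mid p)\,d\theta\,dG(p).
\]

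\emph{Step 2: separate the floor.} Since \(f(\cdot\mid p)\) integrates to one, the \(R(p)\) term contributes \(-\int R\,dG\).

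\emph{Step 3: the information-rent term (Fubini).} For fixed \(p\), swap the order of integration:
\[
\int_\Theta\int_{\underline\theta}^{\theta}x(s,p)\,ds\,f(\theta\mid p)\,d\theta=\int_\Theta x(s,p)\bigl[1-F(s\mid p)\bigr]\,ds .
\]
Tonelli justifies the swap because the integrand is nonnegative and bounded by \(\bar\theta-\underline\theta\). Because \(f>0\) on \(\Theta\), rewrite \(1-F(s\mid p)=\frac{1-F(s\mid p)}{f(s\mid p)}f(s\mid p)\). This produces exactly the \(W\) weight in \eqref{eq:belief-adjusted-virtual-surplus}.

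\emph{Step 4: recombine.} This step requires splitting the integrand into separately integrable pieces, and it is the main obstacle. The outer integral over \(p\) must be split into three parts:
\begin{itemize}
\item the \((V+\theta)x\) part, which is bounded;
\item the information-rent part, which is bounded by \(\bar\theta-\underline\theta\) for every \(p\);
\item \(\int R\,dG\).
\end{itemize}
The split is legitimate only if \(R\) is \(G\)-integrable.

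\emph{Integrability of \(R\).} Use the assumed \(H\)-integrability of \(T\). The identity \(R(p)=\theta x-T-\int_{\underline\theta}^{\theta}x\) holds for every \(\theta\). Integrating it against \(f(\theta\mid p)\) expresses \(R(p)\) as bounded terms minus \(\int_\Theta T(\theta,p)f(\theta\mid p)\,d\theta\). That last term is \(G\)-integrable by Fubini applied to \(|T|\in L^1(H)\). Hence \(R\in L^1(G)\), and linearity of the integral yields \eqref{eq:payoff-virtual-form}.

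\emph{Measurability.} The \(p\)-sections are measurable by Tonelli, since \(x\) and \(t\) are measurable. Measurability of \(R\) also follows from the integrated identity above. No differentiability of \(x\) is needed, because Lemma \ref{lem:withinbeliefenv} already supplies the envelope in integral form.
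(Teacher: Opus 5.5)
Your proposal is correct and follows the paper's proof essentially step for step. You substitute the envelope transfer, deduce \(R\in L^1(G)\) from \(T\in L^1(H)\) and the boundedness of \(\theta x-\int_{\underline\theta}^{\theta}x\,ds\), and apply Fubini belief by belief to produce the \((1-F)/f\) correction. The only difference is that the paper establishes integrability of \(R\) up front from the pointwise bound \(|R(p)|\le|T(\theta,p)|+C\); this is slightly more direct than integrating the identity against \(f(\cdot\mid p)\).
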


\begin{appendixproof}[Proof of Proposition \ref{prop:prefvirtualsurplus}]
By Lemma \ref{lem:withinbeliefenv},
\(T(\theta,p)=S_x(\theta,p)-R(p)\), where
\[
S_x(\theta,p)
:=
\theta x(\theta,p)
-
\int_{\underline\theta}^{\theta}x(s,p)\,ds
\]
is uniformly bounded. Because \(R\) depends only on \(p\), the assumed
\(H\)-integrability of \(T\) implies that \(R\) is \(G\)-integrable. Thus all
of the integrals below are finite. Under truthful reporting, the principal's
expected payoff is
\begin{align*}
\Pi(x,R)
&=
\int_{\mathcal P}\int_{\Theta}
\left[V(p)x(\theta,p)+T(\theta,p)\right]
\,dF(\theta\mid p)\,dG(p)\\
&=
\int_{\mathcal P}\int_{\Theta}
\left[
(V(p)+\theta)x(\theta,p)
-R(p)
-\int_{\underline\theta}^{\theta}x(s,p)\,ds
\right]
\,dF(\theta\mid p)\,dG(p),
\end{align*}
where the second equality substitutes the on-path transfer formula implied by
Lemma \ref{lem:withinbeliefenv}. For each \(p\), Fubini's theorem and Assumption
\ref{assump:conditional-density} give
\begin{align*}
\int_{\Theta}
\left[\int_{\underline\theta}^{\theta}x(s,p)\,ds\right]
dF(\theta\mid p)
&=
\int_{\Theta}x(s,p)\bigl[1-F(s\mid p)\bigr]\,ds\\
&=
\int_{\Theta}
\frac{1-F(\theta\mid p)}{f(\theta\mid p)}
x(\theta,p)f(\theta\mid p)\,d\theta.
\end{align*}
Substituting this identity into the preceding payoff expression and collecting
terms yields \eqref{eq:payoff-virtual-form}, with \(W\) defined by
\eqref{eq:belief-adjusted-virtual-surplus}.
\end{appendixproof}

We call \(W(\theta,p)\) the \emph{belief-adjusted virtual surplus}. Define the
conditional virtual value
\begin{align}
\phi(\theta,p)
:=
\theta
-
\frac{1-F(\theta\mid p)}{f(\theta\mid p)}.
\label{eq:conditional-virtual-value}
\end{align}
Then \(W(\theta,p)=V(p)+\phi(\theta,p)\): the principal's expected gross payoff
from the action plus the agent's conditional virtual value. The ratio
\((1-F(\theta\mid p))/f(\theta\mid p)\) is the familiar information-rent
correction, computed conditional on \(p\).

Equation \eqref{eq:payoff-virtual-form} separates the virtual-surplus value of
the allocation from the expected floor \(\int R(p)\,dG(p)\). This is only
the preference-margin benchmark: deviations in \(p\) still link the floors and
state-contingent transfers across belief reports.

\section{The Belief Margin: Tilts and Global Implementability}
\label{sec:belief-margin}

We now impose incentive compatibility across belief reports, allowing a
deviation in \(p\) to be accompanied by one in \(\theta\).

Conditional on \(x\) and \(R\), the same-belief envelope in Lemma
\ref{lem:withinbeliefenv} pins down the on-path expected transfer
\(T(\theta,p)\). Under Assumption \ref{assump:state-contractibility}, however,
many state-contingent transfer rules generate that same expected transfer. We
parameterize the remaining freedom by a transfer tilt and use it to
characterize global IC.

\begin{definition}[State-contingent transfer tilt]
\label{def:state-contingent-tilt}
Fix an on-path expected transfer schedule \(T\). For any transfer rule delivering
\(T\), define its \emph{state-contingent tilt} by
\(\beta(\theta,p):=t(\theta,p,1)-t(\theta,p,0)\). There is a one-to-one
correspondence between such transfer rules and tilt functions, with recovery
formulas
\begin{align}
t(\theta,p,1)
&=
T(\theta,p)+(1-p)\beta(\theta,p),
\label{eq:tilt-transfer-1}
\\
t(\theta,p,0)
&=
T(\theta,p)-p\beta(\theta,p).
\label{eq:tilt-transfer-0}
\end{align}
\end{definition}

A type with true belief \(q\) therefore evaluates report \((\theta,p)\) at the
expected transfer
\begin{align}
T(\theta,p;q)
=
T(\theta,p)+\beta(\theta,p)(q-p).
\label{eq:offpath-transfer-beta}
\end{align}

The tilt leaves truthful expected transfers unchanged but alters how a fixed
expected transfer is divided across states.\footnote{Given the maintained
condition \(T\in L^1(H)\), the recovered state-contingent transfers are
absolutely integrable on the truthful path if and only if
\(\int p(1-p)|\beta(\theta,p)|\,dH(\theta,p)<\infty\). Bounded tilt is
sufficient. The general characterization uses the interim criterion. Both
running-example mechanisms satisfy truthful-path absolute integrability and
attain the optimum over the larger interim-integrable class, so they remain
optimal when the stronger criterion is imposed. Neither
absolute integrability nor the paper's interim criterion imposes limited
liability or a uniform bound on transfer exposure; either restriction could
change implementation rents and optimal design.}
A different belief values that division differently, so the tilt can
discipline belief misreports.

Substituting \eqref{eq:offpath-transfer-beta} into
\hyperref[eq:fullproblem-bic]{\textup{(BIC)}} shows that global BIC is equivalent
to the following inequality for every true type \((\theta,p)\) and report
\((\hat\theta,\hat p)\):
\begin{align}
U(\theta,p)
\ge
U(\hat\theta,\hat p)
+
(\theta-\hat\theta)x(\hat\theta,\hat p)
-
\beta(\hat\theta,\hat p)(p-\hat p).
\label{eq:global-ic-beta}
\end{align}
Constraint \hyperref[eq:fullproblem-ir]{\textup{(IR)}} is unchanged: truthful
utility must satisfy \(U(\theta,p)\ge 0\) for every type. When \(\hat p=p\), the
tilt term vanishes and
\eqref{eq:global-ic-beta} reduces to same-belief IC. When \(\hat p\neq p\), the
tilt changes the payoff from the deviation. The designer can thus choose the
tilt to discourage belief misreports, as global IC requires.

Equation \eqref{eq:global-ic-beta} also reveals the geometry of global IC. For
each report \((\hat\theta,\hat p)\), its right-hand side is affine in the true
type \((\theta,p)\).\footnote{The coefficients on \(\theta\) and \(p\) are
\(x(\hat\theta,\hat p)\) and \(-\beta(\hat\theta,\hat p)\), respectively.}
Global BIC requires \(U(\theta,p)\) to be at least as large as the payoff from
every possible report. Because the truthful report is among these possibilities
and yields exactly \(U(\theta,p)\), truthful utility is the pointwise
maximum---the upper envelope---of these report-indexed affine payoffs. This
observation yields the following characterization.

\par\smallskip

\begin{restatable}[Convex characterization of global implementability]{prop}{convutilitychar}
\label{prop:convex-utility-characterization}
An allocation rule \(x:\Theta\times\mathcal P\to[0,1]\) can be implemented by
a globally BIC and truthful-path interim IR mechanism if and only if there exist
a nonnegative convex function \(U:\Theta\times\mathcal P\to\mathbb R\) and a
measurable tilt \(\beta:\Theta\times\mathcal P\to\mathbb R\) such that
\[
\bigl(x(\theta,p),-\beta(\theta,p)\bigr)
\in \partial U(\theta,p)
\qquad
\forall(\theta,p)\in\Theta\times\mathcal P.
\]
Here \(\partial U\) denotes the subdifferential
relative to the type domain.

Given such \((U,\beta)\), an implementing mechanism is obtained by setting
\[
T(\theta,p)=\theta x(\theta,p)-U(\theta,p)
\]
and defining state-contingent transfers by
\eqref{eq:tilt-transfer-1}--\eqref{eq:tilt-transfer-0}.
\end{restatable}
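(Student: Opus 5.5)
The proof rests on the reformulation \eqref{eq:global-ic-beta}: once the tilt $\beta$ is introduced through the recovery formulas \eqref{eq:tilt-transfer-1}--\eqref{eq:tilt-transfer-0}, global BIC says exactly that the truthful utility lies above every report-indexed affine function of the true type. We prove the two directions separately and then check the implementing construction, which falls out of the sufficiency argument.

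\emph{Necessity.} Take any globally BIC and interim IR mechanism $(x,t)$. Define $U(\theta,p)=\theta x(\theta,p)-T(\theta,p)$ and $\beta(\theta,p)=t(\theta,p,1)-t(\theta,p,0)$; the latter is measurable because $t$ is. By \eqref{eq:global-ic-beta}, for every report $(\hat\theta,\hat p)$ and every type $(\theta,p)$,
\[
U(\theta,p)\ge U(\hat\theta,\hat p)+x(\hat\theta,\hat p)(\theta-\hat\theta)-\beta(\hat\theta,\hat p)(p-\hat p).
\]
\begin{itemize}
\item Read at a fixed report $(\hat\theta,\hat p)$, this is precisely the statement that $(x,-\beta)(\hat\theta,\hat p)$ is a subgradient of $U$ at $(\hat\theta,\hat p)$, relative to the domain $\Theta\times\mathcal P$.
\item Since equality holds at the truthful report, $U$ equals the supremum over reports of these affine functions, attained at the truthful report.
\item Each affine function is finite, and the supremum is attained, so the supremum is finite. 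Hence $U$ is a finite convex function on the convex set $\Theta\times\mathcal P$.
\item Nonnegativity of $U$ is exactly interim IR.
\end{itemize}

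\emph{Sufficiency.} Suppose we are given a nonnegative convex $U$ and a measurable $\beta$ with $(x,-\beta)\in\partial U$ everywhere. Set $T=\theta x-U$ and build $t$ from \eqref{eq:tilt-transfer-1}--\eqref{eq:tilt-transfer-0}.
\begin{itemize}
\item On the truthful path these transfers deliver expected transfer $p\,t(\cdot,1)+(1-p)t(\cdot,0)=T$, so truthful utility is indeed $U$.
\item By \eqref{eq:offpath-transfer-beta}, the off-path expected transfer is $T(\hat\theta,\hat p)+\beta(\hat\theta,\hat p)(p-\hat p)$. The payoff from a deviation is therefore the affine expression on the right of \eqref{eq:global-ic-beta}.
\item The subgradient inequality at $(\hat\theta,\hat p)$ bounds that affine expression above by $U(\theta,p)$. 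This is global BIC.
\item IR follows from $U\ge0$, and feasibility holds because $x$ takes values in $[0,1]$ by hypothesis.
\end{itemize}

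\emph{Technical points to check.} The main obstacle is not the algebra but these details.
\begin{itemize}
\item The subdifferential must be taken relative to the domain. At $\theta=\underline\theta$ or $\theta=\bar\theta$ the subgradient inequality is required only for points in $\Theta\times\mathcal P$. That is exactly what BIC quantifies over, so the relative notion is the right one and no extension of $U$ beyond the domain is needed.
\item Measurability of $t$ follows from measurability of $x$, $U$ (convex, hence Borel on the domain), and $\beta$.
\item Integrability of $T$ is part of the optimization problem, not of implementability, so it need not be verified here.
\item We should confirm that the recovery formulas give a well-defined transfer rule for any measurable $\beta$, with no restriction tying $\beta$ to $T$. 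This holds because the recovery formulas are a bijection for each fixed $T$.
\end{itemize}
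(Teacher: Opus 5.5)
Your proposal is correct and follows essentially the same route as the paper's proof. Necessity reads \eqref{eq:global-ic-beta} as the subgradient inequality and obtains convexity of $U$ as the upper envelope of report-indexed affine functions attained at the truthful report; sufficiency reverses this using the recovery formulas and \eqref{eq:offpath-transfer-beta}, and your remarks on the relative subdifferential and measurability are sound additions rather than departures.
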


\begin{appendixproof}[Proof of Proposition \ref{prop:convex-utility-characterization}]
Suppose \(x\) is implemented by a globally BIC and truthful-path interim IR
mechanism, and let \(U\) and \(\beta\) denote its truthful utility and tilt.
For a reported type \(r=(\hat\theta,\hat p)\), let
\[
g(r):=\bigl(x(\hat\theta,\hat p),-\beta(\hat\theta,\hat p)\bigr).
\]
Writing the true type as \(z=(\theta,p)\), inequality
\eqref{eq:global-ic-beta} is
\[
U(z)\ge U(r)+g(r)\mathbin{\cdot}(z-r)
\qquad \forall z,r\in\Theta\times\mathcal P.
\]
Hence \(g(r)\in\partial U(r)\). Moreover, equality holds when \(r=z\), so
\[
U(z)=\sup_{r\in\Theta\times\mathcal P}
\bigl\{U(r)+g(r)\mathbin{\cdot}(z-r)\bigr\}.
\]
As the supremum of affine functions, \(U\) is convex. Truthful-path interim IR
is equivalent to \(U\ge 0\).

Conversely, if \(U\) is convex and
\((x(r),-\beta(r))\in\partial U(r)\), the subgradient inequality is exactly
\eqref{eq:global-ic-beta}, and hence implies global BIC. Nonnegativity implies
truthful-path interim IR. Finally, the stated transfer construction delivers
the prescribed on-path transfer and, by
\eqref{eq:offpath-transfer-beta}, the off-path expected transfers used in
\eqref{eq:global-ic-beta}.
\end{appendixproof}

The subdifferential formulation accommodates nonsmooth mechanisms, including
the threshold rules studied below. At points where \(U\) is differentiable, the
characterization reduces to
\[
U_\theta(\theta,p)=x(\theta,p),
\qquad
U_p(\theta,p)=-\beta(\theta,p).
\]
Thus the allocation determines how truthful utility changes with \(\theta\),
while the tilt determines how truthful utility changes with \(p\). The negative
sign reflects that transfers are subtracted from the agent's payoff.

\Needspace{6\baselineskip}
\begin{mdframed}[style=runningexamplebox]
\begin{runexamp}[Capacity expansion: a simple optimal mechanism]
Ask the manager to report \((\hat\theta,\hat p)\). The firm expands if and only
if
\[
\hat\theta+\hat p\ge1.
\]
If it expands, the manager pays one unit after low demand and nothing after
high demand; if it does not expand, she pays nothing. Operationally, the firm
can therefore let the manager choose whether to expand, subject to this
low-demand charge.

To see why the mechanism is incentive compatible, consider a true type
\((\theta,p)\). Any report that induces expansion gives her expected utility
\(\theta-(1-p)=\theta+p-1\), whereas any report that prevents expansion gives
her zero. She therefore chooses expansion exactly when \(\theta+p\ge1\), as
prescribed by
\[
x^0(\theta,p)=\mathbbm 1\{\theta+p\ge1\}.
\]
Truthful reporting is thus optimal against every joint misreport, and the
mechanism is globally BIC/IR with a zero utility floor. Equivalently, its
truthful utility is \(U^0(\theta,p)=(\theta+p-1)_+\). This function is
nonnegative and convex, with relevant subgradients \((x^0,x^0)\), so
Proposition \ref{prop:convex-utility-characterization} gives the transfer tilt
\(\beta^0=-x^0\) used above.
Truth-telling is optimal but not unique: all reports on the same side of the
boundary induce the same decision and state-contingent transfer.
Neither the rule nor its transfers depends on \(F\) or \(G\); uniformity and
independence are used only in the optimality argument below.

The mechanism also maximizes the firm's payoff. In the uniform benchmark,
the cutoff \(c^0(p)=1-p\) uniquely maximizes virtual surplus at every belief.
Because this cutoff is affine, it requires no implementation rent. Theorem
\ref{thm:pointwise-cutoff-optimality} therefore establishes that no BIC/IR
mechanism, randomized or otherwise, gives the firm a higher expected payoff.
\end{runexamp}
\end{mdframed}

Combining Proposition \ref{prop:convex-utility-characterization} with Lemma
\ref{lem:withinbeliefenv} yields an equivalent characterization of global IC
and truthful-path interim IR in terms of the allocation \(x\), floor \(R\), and
transfer tilt \(\beta\).

\begin{restatable}[Floor-and-tilt characterization of global implementability]{corollary}{floortiltic}
\label{cor:floor-tilt-ic}
Fix an allocation rule \(x:\Theta\times\mathcal P\to[0,1]\) such that
\(x(\cdot,p)\) is nondecreasing for each \(p\). For a floor
schedule \(R:\mathcal P\to\mathbb R\), define
\[
U_R(\theta,p)
:=
R(p)+\int_{\underline\theta}^{\theta}x(s,p)\,ds.
\]
There exists a globally BIC and truthful-path interim IR direct mechanism with
allocation \(x\) and truthful utility \(U_R\) if and only if \(R(p)\ge 0\) for
every \(p\) and there exists a measurable tilt
\(\beta:\Theta\times\mathcal P\to\mathbb R\) such that
\begin{align}
U_R(\theta,p)
\ge
U_R(\hat\theta,\hat p)
+
(\theta-\hat\theta)x(\hat\theta,\hat p)
-
\beta(\hat\theta,\hat p)(p-\hat p).
\label{eq:floor-tilt-ic}
\end{align}
for every true type \((\theta,p)\) and report \((\hat\theta,\hat p)\).
\end{restatable}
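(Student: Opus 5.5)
The corollary follows by combining the earlier results, so I would prove the two directions in turn.

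\emph{Necessity.} Suppose a globally BIC, truthful-path interim IR mechanism has allocation \(x\) and truthful utility \(U_R\). Let \(\beta\) be its tilt from Definition \ref{def:state-contingent-tilt}. Substituting \eqref{eq:offpath-transfer-beta} into \hyperref[eq:fullproblem-bic]{\textup{(BIC)}} gives \eqref{eq:global-ic-beta}, which with \(U=U_R\) is exactly \eqref{eq:floor-tilt-ic}. IR gives \(U_R(\underline\theta,p)=R(p)\ge0\) for each \(p\). Equivalently, Proposition \ref{prop:convex-utility-characterization} gives \((x,-\beta)\in\partial U_R\), and the subgradient inequality is \eqref{eq:floor-tilt-ic}. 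Measurability of \(\beta\) follows from measurability of the transfer rule.

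\emph{Sufficiency.} Suppose \(R\ge0\) and \(\beta\) is a measurable tilt satisfying \eqref{eq:floor-tilt-ic}.
\begin{enumerate}
\item Define \(T(\theta,p):=\theta x(\theta,p)-U_R(\theta,p)\), and define the state-contingent transfers by \eqref{eq:tilt-transfer-1}--\eqref{eq:tilt-transfer-0}. These are measurable.
\item By construction the truthful interim transfer is \(T\), so truthful utility is \(\theta x-T=U_R\).
\item By \eqref{eq:offpath-transfer-beta}, the utility from a misreport \((\hat\theta,\hat p)\) is
\[
\theta x(\hat\theta,\hat p)-T(\hat\theta,\hat p)-\beta(\hat\theta,\hat p)(p-\hat p)
=U_R(\hat\theta,\hat p)+(\theta-\hat\theta)x(\hat\theta,\hat p)-\beta(\hat\theta,\hat p)(p-\hat p).
\]
Inequality \eqref{eq:floor-tilt-ic} then gives global BIC.
\item For IR, note that \(x\ge0\) makes \(U_R(\cdot,p)\) nondecreasing in \(\theta\). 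Hence \(U_R(\theta,p)\ge R(p)\ge0\).
\end{enumerate}
Alternatively, \eqref{eq:floor-tilt-ic} says \((x,-\beta)\in\partial U_R\) on the domain. Taking the supremum over reports shows \(U_R\) is convex, so Proposition \ref{prop:convex-utility-characterization} applies directly.

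\emph{Main obstacle.} The argument is mostly bookkeeping. The main care points are that monotonicity of \(x(\cdot,p)\) makes \(U_R\) consistent with Lemma \ref{lem:withinbeliefenv}, and that \(R\) has no further restriction beyond nonnegativity: any cross-belief restriction on \(R\) is carried entirely by \eqref{eq:floor-tilt-ic}. Integrability of \(T\) is not part of the implementability claim. It matters only for the principal's objective.
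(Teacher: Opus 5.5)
Your proof is correct and follows the paper's own argument. You substitute the tilt-based off-path transfer into (BIC) to obtain \eqref{eq:global-ic-beta} with \(U=U_R\), read truthful-path IR as \(R\ge0\) using \(x\ge0\), and recover transfers for sufficiency via \eqref{eq:tilt-transfer-1}--\eqref{eq:tilt-transfer-0}. You assert measurability of \(T\) without justification, but it holds automatically, since taking \(\theta=\hat\theta=\underline\theta\) in \eqref{eq:floor-tilt-ic} makes \(R\) convex and hence continuous on \(\mathcal P\).
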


\begin{appendixproof}[Proof of Corollary \ref{cor:floor-tilt-ic}]
Lemma \ref{lem:withinbeliefenv} implies same-belief IC with truthful utility
\(U_R\). Because \(x\ge 0\), this utility is nonnegative if and only if
\(R(p)\ge 0\) for every \(p\). Under Definition
\ref{def:state-contingent-tilt}, substituting \(U_R\) into
\eqref{eq:global-ic-beta} shows that global BIC is equivalent to
\eqref{eq:floor-tilt-ic}. The transfer-recovery formulas
\eqref{eq:tilt-transfer-1}--\eqref{eq:tilt-transfer-0} then establish
sufficiency.
\end{appendixproof}

Corollary \ref{cor:floor-tilt-ic} separates the two implementation objects that
remain after the preference-margin envelope has been imposed. The floor schedule
\(R\) determines the amount of truthful-path utility left to the lowest
preference type at each belief report. The tilt function \(\beta\) determines
how state-contingent transfers change the attractiveness of belief misreports.
The global IC inequalities constrain the two objects jointly across reports.

\section{The Reduced Problem: Virtual Surplus and Implementation Rents}
\label{sec:global-reduced-problem}

Combining the preference-margin payoff decomposition with the floor-and-tilt
characterization leaves three joint choices: the allocation \(x\), the utility
floor \(R\), and the transfer tilt \(\beta\).

\begin{definition}[Reduced feasible set]
\label{def:reduced-feasible-set}
The \emph{reduced feasible set} \(\mathcal K\) consists of all measurable
triples
\[
x:\Theta\times\mathcal P\to[0,1],
\qquad
R:\mathcal P\to\mathbb R,
\qquad
\beta:\Theta\times\mathcal P\to\mathbb R
\]
such that \(x(\cdot,p)\) is nondecreasing for every \(p\), \(R\) is
nonnegative on \(\mathcal P\) and \(G\)-integrable, and the floor-and-tilt
inequalities \eqref{eq:floor-tilt-ic} hold for every true type and
report.
\end{definition}

\begin{restatable}[Global reduction]{theorem}{globalreduction}
\label{thm:global-reduction}
Under Assumptions \ref{assump:state-contractibility},
\ref{assump:conditional-full-support}, and
\ref{assump:conditional-density}, the value of the principal's global
direct-mechanism problem in Definition \ref{def:globalproblem} equals the value
of
\begin{align}
\sup_{(x,R,\beta)\in\mathcal K}\quad
&
\int_{\mathcal P}\int_{\Theta}
W(\theta,p)x(\theta,p)f(\theta\mid p)\,d\theta\,dG(p)
-
\int_{\mathcal P}R(p)\,dG(p).
\tag{GRP}\label{eq:global-reduced-problem}
\end{align}

This equivalence preserves allocations and payoffs: every
\((x,R,\beta)\in\mathcal K\) can be extended to a globally BIC and
truthful-path interim IR mechanism whose principal payoff equals the
corresponding objective value in \eqref{eq:global-reduced-problem}, and every
such mechanism induces a triple in \(\mathcal K\) with the same principal
payoff. Consequently, if the reduced supremum is attained, any optimizer
implements a globally optimal direct mechanism.
\end{restatable}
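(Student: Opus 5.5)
The plan is to prove the two directions of the payoff-preserving correspondence separately. Each direction assembles earlier results: Lemma \ref{lem:withinbeliefenv}, Proposition \ref{prop:prefvirtualsurplus}, and Corollary \ref{cor:floor-tilt-ic}. Equality of the two suprema then follows at once, because each feasible object on one side maps to a feasible object on the other side with the same value. Attainment carries over for the same reason.

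\emph{From mechanisms to triples.} Fix a globally BIC and truthful-path interim IR direct mechanism \((x,t)\) with \(H\)-integrable truthful-path interim transfer \(T\).
\begin{enumerate}
\item Since global BIC contains the deviations with \(\hat p=p\), the mechanism is same-belief IC. Lemma \ref{lem:withinbeliefenv} then gives that \(x(\cdot,p)\) is nondecreasing and that \(U=U_R\) with \(R(p):=U(\underline\theta,p)\).
\item IR at \((\underline\theta,p)\) gives \(R\ge 0\).
\item Let \(\beta\) be the mechanism's tilt from Definition \ref{def:state-contingent-tilt}. By Corollary \ref{cor:floor-tilt-ic}, or directly from \eqref{eq:global-ic-beta}, the floor-and-tilt inequalities \eqref{eq:floor-tilt-ic} hold.
\item For \(G\)-integrability of \(R\), write \(T(\theta,p)=S_x(\theta,p)-R(p)\) with \(S_x\) uniformly bounded, exactly as in the proof of Proposition \ref{prop:prefvirtualsurplus}. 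Integrating \(|T|\) over \(H\), and using that \(F(\cdot\mid p)\) is a probability measure, gives \(\int|R|\,dG<\infty\).
\end{enumerate}
So \((x,R,\beta)\in\mathcal K\). Proposition \ref{prop:prefvirtualsurplus}, which needs Assumption \ref{assump:conditional-density}, shows that the principal's payoff equals the \eqref{eq:global-reduced-problem} objective at this triple.

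\emph{From triples to mechanisms.} Fix \((x,R,\beta)\in\mathcal K\).
\begin{enumerate}
\item Corollary \ref{cor:floor-tilt-ic} yields a globally BIC and truthful-path interim IR mechanism with allocation \(x\) and truthful utility \(U_R\). Concretely, set \(T=\theta x-U_R\) and recover \(t\) from \eqref{eq:tilt-transfer-1}--\eqref{eq:tilt-transfer-0}.
\item Measurability of \(t\) follows from measurability of \(x,R,\beta\). Here \(\theta\mapsto\int_{\underline\theta}^\theta x(s,p)\,ds\) is jointly measurable by Fubini/Tonelli arguments, since \(x\) is bounded and measurable.
\item Integrability holds because \(T=S_x-R\), with \(S_x\) bounded and \(R\in L^1(G)\). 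Hence \(T\in L^1(H)\), and the mechanism lies in the admissible class of Definition \ref{def:globalproblem}.
\item The mechanism is same-belief IC with floor \(R\), so Proposition \ref{prop:prefvirtualsurplus} again shows its payoff equals the reduced objective.
\end{enumerate}

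\emph{Conclusion.} The two feasible sets map into each other with equal values, so the suprema coincide. If \((x^*,R^*,\beta^*)\) attains the reduced supremum, the mechanism built from it attains the same value. No direct mechanism can exceed this value, since each maps to a triple with equal payoff, so the built mechanism is globally optimal.

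\emph{Main obstacle.} I expect the only delicate step to be the bookkeeping on integrability and measurability. Specifically, one must check that \(H\)-integrability of \(T\) is exactly equivalent to \(G\)-integrability of \(R\), so that the reduced feasible set corresponds precisely to the admissible mechanism class. One must also check that the Fubini step in Proposition \ref{prop:prefvirtualsurplus} is legitimate in both directions. I would handle both using the uniform bound \(|S_x|\le 2\max(|\underline\theta|,|\bar\theta|)\) (or similar). Assumption \ref{assump:conditional-full-support} is used only to ensure that the BIC/IR constraints and \(\mathcal K\) are imposed on the same product domain \(\Theta\times\mathcal P\).
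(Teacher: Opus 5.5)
Your proposal is correct and follows essentially the same route as the paper's proof. Both directions rest on Corollary \ref{cor:floor-tilt-ic} for the BIC/IR correspondence, on the decomposition \(T=\theta x-\int_{\underline\theta}^{\theta}x(s,p)\,ds-R\) with a uniformly bounded first part to get \(T\in L^1(H)\iff R\in L^1(G)\), and on Proposition \ref{prop:prefvirtualsurplus} for payoff equality, after which the equality of suprema and the attainment claim follow; your remarks on measurability of the recovered transfers only make explicit details the paper leaves implicit.
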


Transfers are recovered from the earlier formulas. Given
\((x,R,\beta)\in\mathcal K\), let \(U_R\) be as in Corollary
\ref{cor:floor-tilt-ic}, set
\(T_R(\theta,p):=\theta x(\theta,p)-U_R(\theta,p)\), and substitute
\((T_R,\beta)\) for \((T,\beta)\) in
\eqref{eq:tilt-transfer-1}--\eqref{eq:tilt-transfer-0}.

\begin{appendixproof}[Proof of Theorem \ref{thm:global-reduction}]
By Corollary \ref{cor:floor-tilt-ic}, reduced feasible triples correspond to
globally BIC, truthful-path interim IR mechanisms under the transfer formulas
\eqref{eq:tilt-transfer-1}--\eqref{eq:tilt-transfer-0}. To verify the
integrability correspondence, write
\[
A(\theta,p):=\int_{\underline\theta}^{\theta}x(s,p)\,ds.
\]
Because \(0\le x\le1\), we have
\(0\le A(\theta,p)\le\bar\theta-\underline\theta\), and the envelope formula
gives
\[
T_R(\theta,p)=\theta x(\theta,p)-A(\theta,p)-R(p).
\]
The first two terms are uniformly bounded. Hence \(R\in L^1(G)\) implies
\(T_R\in L^1(H)\), so every reduced feasible triple induces a mechanism in the
domain of the principal's problem. Conversely, any globally BIC mechanism
satisfies the same-belief envelope and hence the same transfer formula. If
\(T\in L^1(H)\), boundedness of \(\theta x-A\) implies \(R\in L^1(G)\). Proposition
\ref{prop:prefvirtualsurplus} shows that the two representations give the same
principal payoff. Taking suprema proves the equivalence of the two problems;
the final claim follows whenever the reduced supremum is attained.
\end{appendixproof}

Because the tilt \(\beta\) leaves truthful expected transfers unchanged, it
does not enter the objective directly. It matters only through feasibility:
the tilt determines which floors can support a given allocation.

\begin{definition}[Virtual surplus and implementation rent]
\label{def:rent-cost}
For any measurable allocation rule \(x:\Theta\times\mathcal P\to[0,1]\), define
its virtual surplus by
\begin{align}
\operatorname{VS}(x)
:=
\int_{\mathcal P}\int_{\Theta}
W(\theta,p)x(\theta,p)f(\theta\mid p)\,d\theta\,dG(p).
\label{eq:virtual-surplus-functional}
\end{align}
Define the implementation rent of \(x\) by
\begin{align}
\operatorname{Rent}(x)
:=
\inf_{R,\beta}
\left\{
\int_{\mathcal P}R(p)\,dG(p)
:
(x,R,\beta)\in\mathcal K
\right\},
\label{eq:rent-functional}
\end{align}
with the convention that \(\operatorname{Rent}(x)=+\infty\) when the feasible
set in \eqref{eq:rent-functional} is empty.
\end{definition}

The implementation rent is neither the agent's total expected utility nor the
full cost of incentive compatibility. It is the infimum expected floor
required to extend same-belief IC to global IC for a given allocation. Higher
preference types also receive envelope rents, whose cost is already incorporated
into the belief-adjusted virtual surplus \(W\).

Theorem \ref{thm:global-reduction} therefore gives the exact benefit-cost form:
\begin{align}
\sup_x
\left\{
\operatorname{VS}(x)-\operatorname{Rent}(x)
\right\},
\label{eq:global-benefit-cost}
\end{align}
where the supremum is over measurable allocations and a rule that is not
globally implementable has infinite rent. Because \(W\) already incorporates
the usual preference-screening rents, \(\operatorname{Rent}(x)\) isolates the
additional floor needed to deter belief misreports. The principal thus
maximizes virtual surplus net of this cost rather than maximizing \(W\)
pointwise.

We next price the implementation cost for threshold rules.

\section{Threshold Mechanisms: Implementability and Rents}
\label{sec:single-threshold}

The preceding sections characterize the unrestricted design problem but do not
generally yield a closed-form solution. We now ask how cross-belief IC
constrains the familiar threshold rule. Among deterministic allocations,
same-belief IC already implies a threshold in the preference type, up to the
treatment of the cutoff type. Moreover, under the standard regularity condition
that \(W(\cdot,p)\) is increasing and crosses zero once, the pointwise
virtual-surplus maximizer takes the action if and only if the preference type
is at least a belief-dependent cutoff. Threshold rules therefore contain the
natural benchmark allocation absent cross-belief implementation costs.

The threshold restriction is not without loss. The unrestricted class also
permits randomization, so some types may receive an action probability strictly
between zero and one. This section fixes a cutoff and solves the inner
implementation problem---the rents and transfer tilts required to support it.
Section \ref{sec:optimal-threshold} then studies the outer problem of choosing
the optimal cutoff within this class. Under some conditions, the optimally
chosen threshold rule is the globally optimal IC/IR mechanism.

We allow the cutoff to equal either endpoint of the preference interval. This
closed cutoff class is useful for existence: an endpoint cutoff represents an
always-action or never-action rule, up to the treatment of a cutoff type.

\begin{definition}[Single-threshold allocation rule]
\label{def:single-threshold-rule}
A \emph{single-threshold allocation rule} is an allocation rule \(x_c\) generated
by a measurable \emph{weak cutoff}
$c:\mathcal P\to[\underline\theta,\bar\theta]$ according to
\begin{align}
x_c(\theta,p)
:=
\mathbb{1}\{\theta\ge c(p)\}.
\label{eq:single-threshold-x}
\end{align}
\end{definition}

\subsection{The Inner Problem for a Fixed Cutoff}
\label{subsec:threshold-reduction}

Fix a cutoff \(c\) and a floor schedule \(R\). For the threshold rule \(x_c\), the
same-belief envelope in Lemma \ref{lem:withinbeliefenv} specializes to
\begin{align}
U_{R,c}(\theta,p)
&=
R(p)+\int_{\underline\theta}^{\theta}x_c(s,p)\,ds
\nonumber\\
&=
\begin{cases}
R(p), & \theta<c(p),\\[4pt]
R(p)+\theta-c(p), & \theta\ge c(p).
\end{cases}
\label{eq:threshold-utility}
\end{align}
Because the allocation probability is zero below the cutoff and one above it,
truthful utility is flat below \(c(p)\) and rises one-for-one with \(\theta\)
above \(c(p)\).

Equivalently, writing \((z)_+:=\max\{z,0\}\),
\begin{align}
U_{R,c}(\theta,p)
&=
R(p)+(\theta-c(p))_+
=
\max\{R(p),\,\theta+R(p)-c(p)\}.
\label{eq:threshold-utility-max}
\end{align}
The two terms in the maximum are the no-action and action branches of truthful
utility.

Since \(T=\theta x-U\), the associated on-path expected transfer is
\begin{align}
T_{R,c}(\theta,p)
=
\begin{cases}
-R(p), & \theta<c(p),\\[4pt]
c(p)-R(p), & \theta\ge c(p).
\end{cases}
\label{eq:threshold-onpath-transfer}
\end{align}
For each belief, this resembles a posted-price menu: relative to no action, the
action changes the expected transfer by \(c(p)\), while \(R(p)\) shifts both
options to deliver the floor.

These formulas determine truthful utility and on-path expected transfers, but
not how those transfers are divided across the states \(\omega=0\) and
\(\omega=1\). The tilt \(\beta\) controls that division and therefore the
payoff from a belief misreport.

Once \(c\) is fixed, the principal chooses a floor \(R\) and a tilt \(\beta\)
that satisfy global IC and truthful-path interim IR. The tilt affects
feasibility but not truthful expected transfers. The inner problem is therefore
to find the infimum of the expected floor over all global implementations
of \(x_c\). We write its value as
\(\operatorname{Rent}(c):=\operatorname{Rent}(x_c)\).

\subsection{Implementability and Tilt Recovery}
\label{subsec:threshold-inner}

Proposition \ref{prop:convex-utility-characterization} characterizes global
implementability through the convexity of truthful utility. Here truthful
utility is the maximum of the no-action branch \(R(p)\) and the action branch
\(\theta+R(p)-c(p)\). The first varies with the belief report through \(R(p)\),
while the second varies through \(R(p)-c(p)\). The following proposition
is a fixed-cutoff result: it translates the general convexity requirement into
conditions on $R(p)$ and $c(p)$ without choosing the cutoff itself.

\begin{restatable}[Threshold implementability]{prop}{thresholdimplementability}
\label{prop:threshold-implementability}
Fix a single-threshold cutoff \(c\) as in Definition
\ref{def:single-threshold-rule} and a \(G\)-integrable floor schedule \(R\).
There exists a globally BIC mechanism with allocation \(x_c\) and truthful
utility \(U_{R,c}\) if and only if both \(R\) and \(R-c\) are convex on
\(\mathcal P\). Such an implementation satisfies truthful-path interim IR if
and only if, in addition, \(R\ge0\).
\end{restatable}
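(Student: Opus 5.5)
The plan is to apply Proposition~\ref{prop:convex-utility-characterization} to the explicit utility \(U_{R,c}(\theta,p)=\max\{R(p),\,\theta+R(p)-c(p)\}\) from \eqref{eq:threshold-utility-max}. By that proposition, and since \(x_c(\cdot,p)\) is nondecreasing, the existence of a globally BIC mechanism with allocation \(x_c\) and truthful utility \(U_{R,c}\) is equivalent to two conditions. First, \(U_{R,c}\) must be convex on \(\Theta\times\mathcal P\). Second, there must be a measurable \(\beta\) with \(\bigl(x_c(\theta,p),-\beta(\theta,p)\bigr)\in\partial U_{R,c}(\theta,p)\) everywhere. The IR claim is then immediate: \(U_{R,c}\ge R(p)\), with equality at \(\theta=\underline\theta\), so \(U_{R,c}\ge0\) if and only if \(R\ge0\). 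The work is therefore the convexity equivalence, plus the construction of the tilt.

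For necessity, I would restrict \(U_{R,c}\) to the horizontal slices \(\theta=\underline\theta\) and \(\theta=\bar\theta\). On the bottom slice, \(U_{R,c}(\underline\theta,p)=R(p)+(\underline\theta-c(p))_+=R(p)\), because \(c\ge\underline\theta\). On the top slice, \(U_{R,c}(\bar\theta,p)=R(p)+\bar\theta-c(p)\), because \(c\le\bar\theta\). The restriction of a convex function to a line segment is convex. Hence \(R\) and \(R-c\) are convex on \(\mathcal P\).

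The weak-cutoff convention matters here, and it works in our favour. The two extreme preference types see exactly one branch each, so the closed cutoff range \([\underline\theta,\bar\theta]\) delivers each branch's convexity without any support argument.

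For sufficiency, suppose \(R\) and \(R-c\) are convex. Then \((\theta,p)\mapsto R(p)\) and \((\theta,p)\mapsto\theta+R(p)-c(p)\) are both jointly convex, being convex in \(p\) plus affine in \(\theta\), so their maximum \(U_{R,c}\) is convex. For the tilt I would take measurable selections of subgradients of each branch. Let \(r(p)\) be the right derivative of \(R\) and \(d(p)\) the right derivative of \(R-c\). Both are finite on the open interval \(\mathcal P\), and both are measurable as limits of difference quotients. Then set \(\beta(\theta,p)=-d(p)\) where \(\theta\ge c(p)\) and \(\beta(\theta,p)=-r(p)\) where \(\theta<c(p)\).

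It remains to check the subgradient condition. On the region \(\theta\ge c(p)\), the action branch is active at \((\theta,p)\), and \((1,d(p))\) is a subgradient of that convex branch. The affine minorant \(\theta'+R(p)-c(p)+(\theta'-\theta)+d(p)(p'-p)\) lies below the action branch, hence below the maximum, and it touches \(U_{R,c}\) at \((\theta,p)\). So \((1,d(p))\in\partial U_{R,c}(\theta,p)\), which matches \(x_c=1\). The region \(\theta<c(p)\) is symmetric, with \((0,r(p))\) and \(x_c=0\).

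This is the standard fact that a subgradient of an active piece of a max is a subgradient of the max. The indicator \(\mathbb 1\{\theta\ge c(p)\}\) always selects an active branch, including the tie \(\theta=c(p)\), where both branches are active. The transfers then come from Proposition~\ref{prop:convex-utility-characterization}.

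The main obstacle I expect is the regularity of this selection rather than convexity itself. Measurability of \(\beta\) has to be combined with measurability of \(c\), which is assumed. One also has to be careful that subgradients are taken relative to the product domain, including the endpoints \(\underline\theta,\bar\theta\) of \(\Theta\). In the \(\theta\)-direction the domain is closed, but the affine-minorant argument above is global, so the endpoints cause no difficulty. Measurability of the right derivatives of a convex function on an open interval is routine.
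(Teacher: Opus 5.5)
Your proposal is correct and takes essentially the same route as the paper. Necessity comes from restricting \(U_{R,c}\) to the slices \(\theta=\underline\theta\) and \(\theta=\bar\theta\). Sufficiency comes from taking the maximum of the two jointly convex branches and selecting the tilt from one-sided derivatives of \(R\) and \(R-c\) on the active branch.

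One small slip: your affine minorant should read \(\theta+R(p)-c(p)+(\theta'-\theta)+d(p)(p'-p)\), with \(\theta\) rather than \(\theta'\) in the leading term. As written it counts the \(\theta\)-increment twice.
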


\begin{appendixproof}[Proof of Proposition \ref{prop:threshold-implementability}]
By Proposition \ref{prop:convex-utility-characterization}, global BIC requires
\(U_{R,c}\) to be convex. For every weak cutoff,
\[
U_{R,c}(\underline\theta,p)=R(p),
\qquad
U_{R,c}(\bar\theta,p)=\bar\theta+R(p)-c(p),
\]
so convexity of \(U_{R,c}\) implies that both \(R\) and \(R-c\) are convex.
Conversely, if both functions are convex, then \(R(p)\) and
\(\theta+R(p)-c(p)\) are jointly convex in \((\theta,p)\). Their maximum in
\eqref{eq:threshold-utility-max} is therefore convex. Below the cutoff, choose
a subgradient \((0,q)\) with \(q\in\partial R(p)\); at and above the cutoff,
choose \((1,q)\) with \(q\in\partial(R-c)(p)\). Measurable selections exist,
for example from one-sided derivatives; at a preference endpoint, the
subgradient is understood relative to \(\Theta\). Proposition
\ref{prop:convex-utility-characterization} then gives global BIC. Finally,
\eqref{eq:threshold-utility} shows that truthful-path interim IR is equivalent
to \(R\ge0\).
\end{appendixproof}

The criterion is distribution-free: apart from the integrability needed to
evaluate expected rents, implementability depends only on the geometry of the
cutoff and the floor.

Proposition \ref{prop:threshold-implementability} turns the inner problem into
the one-dimensional convex program
\begin{align}
\operatorname{Rent}(c)
=
\inf_R
\left\{
\int_{\mathcal P}R(p)\,dG(p)
:
R\ge0,
\quad R\text{ and }R-c\text{ are convex on }\mathcal P
\right\},
\tag{TFP}\label{eq:threshold-rent}
\end{align}
where the infimum is over finite, \(G\)-integrable floors.

Once a feasible floor \(R\) has been chosen, the implementing tilt can be read
off from subgradients of the two convex curves \(R\) and \(R-c\).
For \(\theta<c(p)\), a compatible subgradient of truthful utility is
\((0,q)\), where \(q\in\partial R(p)\); for \(\theta\ge c(p)\), it is
\((1,q)\), where \(q\in\partial(R-c)(p)\). Proposition
\ref{prop:convex-utility-characterization} therefore gives
\begin{align}
\beta(\theta,p)
\in
\begin{cases}
-\partial R(p), & \theta<c(p),\\[4pt]
-\partial(R-c)(p), & \theta\ge c(p),
\end{cases}.
\label{eq:threshold-beta-subgradient}
\end{align}
Here \(\partial\) denotes the subdifferential on \(\mathcal P\); a measurable
selection exists, for example by using one-sided derivatives. At points where
\(R\) and \(c\) are differentiable,
\eqref{eq:threshold-beta-subgradient} becomes
\begin{align}
\beta(\theta,p)
=
\begin{cases}
-R'(p), & \theta<c(p),\\[4pt]
c'(p)-R'(p), & \theta\ge c(p).
\end{cases}
\label{eq:threshold-beta-differentiable}
\end{align}
The state-contingent transfers then follow from
\eqref{eq:tilt-transfer-1}--\eqref{eq:tilt-transfer-0}, with
\(T_R=T_{R,c}\).

\subsection{Cutoff Curvature and Implementation Rent}
\label{subsec:threshold-curvature}

The floor problem admits a single solution that covers concave, convex,
smooth, and nonsmooth cutoffs. The solution prices the cutoff's positive
curvature---the part that a convex floor must absorb to make \(R-c\) convex.

The formula uses the mean belief. Let \(\mu:=\Pr(\omega=1)\) denote the
principal's prior.
Taking conditional expectations in
\eqref{eq:posterior-consistency} gives calibration,
\(\Pr(\omega=1\mid p)=p\), and hence Bayes plausibility:
\begin{align}
\int_{\mathcal P}p\,dG(p)=\mu.
\label{eq:bayes-plausibility}
\end{align}
Because \(\mathcal P\) is an interval and \(p\in\mathcal P\) almost surely,
\(\mu\in\mathcal P\).

For each \(t\in\mathcal P\), define the \emph{Jensen kernel}
\begin{align}
J_G(t)
&:=
\int_{\mathcal P}(p-t)_+\,dG(p)-(\mu-t)_+.
\label{eq:jensen-kernel}
\end{align}
This is the Jensen gap generated by the hinge \(p\mapsto(p-t)_+\), so
\(J_G(t)\ge0\). Equivalently, define the centered hinge
\begin{align}
K_\mu(p,t)
&:=
\begin{cases}
(t-p)_+, & t<\mu,\\
(p-t)_+, & t\ge\mu.
\end{cases}
\label{eq:centered-hinge-kernel}
\end{align}
Bayes plausibility implies
\(J_G(t)=\int_{\mathcal P}K_\mu(p,t)\,dG(p)\).

A real-valued function on \(\mathcal P\) is \emph{locally
difference-of-convex} (locally DC) if its restriction to every compact
subinterval is the difference of two finite convex functions. Such a function
is continuous, and its second distributional derivative is a signed Radon
measure with locally finite variation. Write
\[
D^2c=(D^2c)_+-(D^2c)_-
\]
for its Jordan decomposition. The positive measure \((D^2c)_+\) records every
upward change in the slope of the cutoff, including both smooth curvature and
kinks.

The next theorem solves the fixed-cutoff problem: it determines whether a
specified cutoff has finite implementation rent and, when it does, constructs
a least-cost floor.

\begin{restatable}[Finite-rent curvature characterization]{theorem}{finiterentcurvature}
\label{thm:finite-rent-curvature}
Let \(c:\mathcal P\to[\underline\theta,\bar\theta]\) be a weak cutoff. It has
finite implementation rent if and only if the actual pointwise function \(c\)
is locally DC and
\begin{align}
\int_{\mathcal P}J_G(t)\,d(D^2c)_+(t)<\infty.
\label{eq:finite-weighted-positive-curvature}
\end{align}
In that case the infimum in \eqref{eq:threshold-rent} is attained, every
least-cost floor satisfies \(R(\mu)=0\), and
\begin{align}
\operatorname{Rent}(c)
&=
\int_{\mathcal P}J_G(t)\,d(D^2c)_+(t).
\label{eq:curvature-rent-formula}
\end{align}
One least-cost floor is
\begin{align}
R^c(p)
&=
\int_{\mathcal P}K_\mu(p,t)\,d(D^2c)_+(t).
\label{eq:curvature-least-cost-floor}
\end{align}
If \(c\) is locally DC but the integral in
\eqref{eq:finite-weighted-positive-curvature} is infinite, the floor in
\eqref{eq:curvature-least-cost-floor} remains finite at every
\(p\in\mathcal P\), but every feasible floor has infinite expectation. If
\(c\) is not locally DC pointwise, no finite-valued feasible floor exists.
\end{restatable}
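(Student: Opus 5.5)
By Proposition \ref{prop:threshold-implementability}, the rent is the value of the convex program \eqref{eq:threshold-rent}. The whole proof is a one-dimensional analysis of the feasibility set
\[
\mathcal F(c)=\{R\ge0:\ R,\ R-c \text{ convex on }\mathcal P\}.
\]
The plan has three parts: a structural lemma, a lower bound for an arbitrary feasible floor, and verification that \(R^c\) attains it.

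\textbf{Structural step.} If \(R\in\mathcal F(c)\) is finite, then \(c=R-(R-c)\) is a difference of finite convex functions on \(\mathcal P\), so \(c\) is locally DC pointwise. This proves the final sentence of the theorem. Conversely, suppose \(c\) is locally DC. Convexity of \(R-c\) means \(D^2R\ge D^2c\) as measures, and convexity of \(R\) means \(D^2R\ge0\). Hence
\[
D^2R\ \ge\ \max\{D^2c,0\}=(D^2c)_+ .
\]
The maximum of two measures is the Jordan positive part, since the positive and negative parts live on disjoint sets. Conversely, any finite \(R\) with \(D^2R\ge(D^2c)_+\) is feasible up to the sign constraint. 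Write \(\nu:=(D^2c)_+\) and \(\rho:=D^2R-\nu\ge0\).

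\textbf{Lower bound.} For every convex \(R\ge0\), Jensen's inequality and Bayes plausibility \eqref{eq:bayes-plausibility} give \(\int R\,dG\ge R(\mu)+\int (R-\ell)\,dG\), where \(\ell\) is a supporting line at \(\mu\). I would use the Taylor representation around \(\mu\) with the centered hinge:
\[
R(p)=R(\mu)+s(p-\mu)+\int_{\mathcal P}K_\mu(p,t)\,dD^2R(t),
\]
where \(s\) is a suitable one-sided derivative at \(\mu\). The hinge is \((t-p)_+\) for \(t<\mu\) and \((p-t)_+\) for \(t\ge\mu\), and an atom of \(D^2R\) at \(\mu\) is assigned to the right branch. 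Integrating against \(G\), the linear term vanishes by Bayes plausibility. Tonelli applies since \(K_\mu\ge0\), and the identity \(J_G=\int K_\mu\,dG\) then gives
\[
\int R\,dG=R(\mu)+\int J_G\,dD^2R\ \ge\ R(\mu)+\int J_G\,d\nu\ \ge\ \int J_G\,d\nu .
\]
This proves \(\operatorname{Rent}(c)\ge\int J_G\,d\nu\). If that integral is infinite, every feasible floor has infinite expectation. In the finite case, equality forces \(R(\mu)=0\), which is the claim about least-cost floors. Strictly, equality also requires \(\int J_G\,d\rho=0\), but only \(R(\mu)=0\) is claimed.

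\textbf{Attainment.} Define \(R^c\) by \eqref{eq:curvature-least-cost-floor}. I will check four things in order. First, \(R^c\) is finite at every \(p\): for \(p>\mu\) only \(t\in[\mu,p)\) contributes, and for \(p<\mu\) only \(t\in(p,\mu)\), in each case a compact subinterval with bounded hinge and locally finite \(\nu\). Second, \(R^c\ge0\) and \(R^c(\mu)=0\). Third, \(R^c\) is convex with \(D^2R^c=\nu\), since each hinge has second derivative \(\delta_t\) and is convex. Fourth, \(R^c-c\) is convex because its second derivative is \(\nu-D^2c=(D^2c)_-\ge0\). By Proposition \ref{prop:threshold-implementability} and the computation above, \(\int R^c\,dG=\int J_G\,d\nu\), so the infimum is attained when finite.

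\textbf{Main obstacle.} The delicate step is the rigorous Taylor/hinge representation for a general convex \(R\) on an open interval, possibly unbounded near the endpoints. One issue is choosing the slope \(s\) consistently with how an atom of \(D^2R\) at \(\mu\) is assigned to the two branches of \(K_\mu\). The other is justifying the integral interchange when \(G\) puts mass near the endpoints. I would handle this by proving the identity on compact subintervals \([a,b]\ni\mu\) via the one-sided-derivative (Stieltjes) representation, integrating each piece separately (all are nonnegative except the linear term, which is \(G\)-integrable because \(p\) is bounded), and passing to the limit by monotone convergence.
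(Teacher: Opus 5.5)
Your proposal is correct and follows essentially the paper's route: the centered-hinge representation of a convex floor about $\mu$ (slope $R'_-(\mu)$, atom at $\mu$ on the right branch) combined with Tonelli and Bayes plausibility, the measure inequality $D^2R\ge(D^2c)_+$, and attainment by the canonical floor $R^c$. The only minor difference is that you obtain $R(\mu)=0$ from equality in your lower-bound chain once attainment is known, whereas the paper subtracts a supporting line at $\mu$ to show directly that any floor with $R(\mu)>0$ can be strictly improved; both arguments are valid.
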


\begin{appendixproof}[Proof of Theorem \ref{thm:finite-rent-curvature}]
We first establish a representation for convex functions on the open interval
\(\mathcal P\). For any nonnegative locally finite Radon measure \(\lambda\),
define
\[
\Psi_\lambda(p)
:=
\int_{\mathcal P}K_\mu(p,t)\,d\lambda(t).
\]
For fixed \(p\), the integrand vanishes unless \(t\) lies between \(p\) and
\(\mu\). This is a compact subinterval of \(\mathcal P\), so local finiteness
of \(\lambda\) makes \(\Psi_\lambda(p)\) finite even if \(\lambda\) has
infinite total mass near an open endpoint. Equation
\eqref{eq:centered-hinge-kernel} also gives
\(\Psi_\lambda\ge0\) and \(\Psi_\lambda(\mu)=0\).

For each fixed \(t\), the function \(p\mapsto K_\mu(p,t)\) is convex and has
second distributional derivative \(\delta_t\). Integrating the convexity
inequality therefore shows that \(\Psi_\lambda\) is convex. More formally, if
\(\varphi\in C_c^\infty(\mathcal P)\), only \(t\) in the compact hull of
\(\operatorname{supp}\varphi\cup\{\mu\}\) contributes below. Fubini's
theorem gives
\[
\int_{\mathcal P}\Psi_\lambda(p)\varphi''(p)\,dp
=
\int_{\mathcal P}
\left[
\int_{\mathcal P}K_\mu(p,t)\varphi''(p)\,dp
\right]d\lambda(t)
=
\int_{\mathcal P}\varphi(t)\,d\lambda(t).
\]
Hence \(D^2\Psi_\lambda=\lambda\).

We next compute its expected cost. If \(t\ge\mu\), then
\(K_\mu(p,t)=(p-t)_+\), whose expectation is \(J_G(t)\). If \(t<\mu\), the
identity
\[
p-t=(p-t)_+-(t-p)_+
\]
and \eqref{eq:bayes-plausibility} give
\[
\int_{\mathcal P}K_\mu(p,t)\,dG(p)
=
\int_{\mathcal P}(t-p)_+\,dG(p)
=J_G(t).
\]
Since \(K_\mu\ge0\), Tonelli's theorem applies without an integrability
premise and yields, in the extended nonnegative reals,
\begin{align}
\int_{\mathcal P}\Psi_\lambda(p)\,dG(p)
&=
\int_{\mathcal P}J_G(t)\,d\lambda(t).
\label{eq:convex-curvature-cost-identity}
\end{align}

This construction gives the corresponding representation of any finite
convex function \(r\) on \(\mathcal P\). Let \(\rho:=D^2r\) and
\(s:=r'_-(\mu)\). Local finiteness and
\eqref{eq:centered-hinge-kernel} imply
\(\Psi'_{\rho,-}(\mu)=0\); an atom of \(\rho\) at \(\mu\) appears in the
right-hand hinge. The functions \(r\) and
\(r(\mu)+s(p-\mu)+\Psi_\rho(p)\) have the same second distributional
derivative, value at \(\mu\), and left derivative there. Their difference has
zero second distributional derivative and is therefore affine; these two
normalizations make it zero. Thus, pointwise on \(\mathcal P\),
\begin{align}
r(p)
&=
r(\mu)+s(p-\mu)+\Psi_\rho(p).
\label{eq:convex-curvature-representation}
\end{align}
Combining \eqref{eq:bayes-plausibility},
\eqref{eq:convex-curvature-cost-identity}, and
\eqref{eq:convex-curvature-representation} gives
\begin{align}
\int_{\mathcal P}r(p)\,dG(p)
&=
r(\mu)+\int_{\mathcal P}J_G(t)\,d\rho(t),
\label{eq:convex-jensen-curvature-identity}
\end{align}
where the two sides may equal \(+\infty\). The identity is well defined: a
supporting affine function at \(\mu\) bounds \(r\) below, so its negative part
is integrable on the bounded interval \(\mathcal P\).

We now turn to the cutoff. Suppose first that a finite-valued feasible floor
\(R\) exists and let \(Q:=R-c\). Both \(R\) and \(Q\) are finite convex
functions, and
\[
c=R-Q
\]
at every report. The actual cutoff \(c\), not merely a \(G\)-almost-everywhere
version, is therefore continuous and locally DC.

Conversely, suppose \(c\) is locally DC. Its second distributional derivative
is a signed locally finite Radon measure
\(\nu=\nu_+-\nu_-\). Set
\(R^c:=\Psi_{\nu_+}\), as in
\eqref{eq:curvature-least-cost-floor}. The preceding construction makes
\(R^c\) finite-valued, nonnegative, and convex, with
\(D^2R^c=\nu_+\). Moreover,
\[
D^2(R^c-c)=\nu_+-\nu=\nu_-.
\]
To verify convexity pointwise, observe that
\((R^c-c)-\Psi_{\nu_-}\) is continuous and has zero second distributional
derivative. It therefore equals an affine function everywhere. Hence
\(R^c-c\) is convex on all of \(\mathcal P\), so \(R^c\) is a finite-valued
feasible floor.

Let \(R\) be any finite-valued feasible floor and write
\(\rho:=D^2R\). Convexity of \(R\) and \(R-c\) gives
\[
\rho\ge0,
\qquad
\rho-\nu\ge0.
\]
It follows that \(\rho\ge\nu_+\). To verify this measure inequality, fix a
relatively compact Borel set \(A\subset\mathcal P\). Because all measures are
finite on \(A\), the variational definition of the positive part gives
\[
\nu_+(A)
=
\sup_{B\subseteq A}\nu(B)
\le
\sup_{B\subseteq A}\rho(B)
\le
\rho(A),
\]
where the suprema are over Borel subsets. Exhausting \(\mathcal P\) by compact
subintervals proves the inequality on the open interval.

Applying \eqref{eq:convex-jensen-curvature-identity} to \(R\), using
\(R(\mu)\ge0\), \(J_G\ge0\), and \(\rho\ge\nu_+\), gives
\begin{align}
\int_{\mathcal P}R(p)\,dG(p)
&=
R(\mu)+\int_{\mathcal P}J_G(t)\,d\rho(t)
\nonumber\\
&\ge
\int_{\mathcal P}J_G(t)\,d\nu_+(t).
\label{eq:curvature-rent-lower-bound}
\end{align}
For \(R^c=\Psi_{\nu_+}\), identity
\eqref{eq:convex-curvature-cost-identity} gives equality. If the last integral
is finite, \(R^c\) is \(G\)-integrable and attains the infimum. If it is
infinite, \eqref{eq:curvature-rent-lower-bound} shows that every finite-valued
feasible floor has infinite expectation. If \(c\) is not locally DC, the
pointwise argument above shows that no finite-valued feasible floor exists.
This proves the characterization and \eqref{eq:curvature-rent-formula}.

Finally, let \(R\) be any least-cost floor and choose
\(s\in\partial R(\mu)\). The supporting-line normalization
\[
\widehat R(p)
:=
R(p)-R(\mu)-s(p-\mu)
\]
is nonnegative. Subtracting an affine function preserves convexity of both
\(R\) and \(R-c\), so \(\widehat R\) is feasible. Bayes plausibility gives
\[
\int_{\mathcal P}\widehat R\,dG
=
\int_{\mathcal P}R\,dG-R(\mu).
\]
Minimality and \(R(\mu)\ge0\) therefore imply \(R(\mu)=0\).
\end{appendixproof}

The theorem applies to the pointwise report rule. A modification at even one
\(G\)-null report can destroy continuity and feasibility, so the curvature
formula cannot be applied to an arbitrary almost-everywhere representative.
It also requires no boundary values or finite endpoint slopes. Curvature may
accumulate without bound near an open endpoint; its \(J_G\)-weighted mass
determines whether the resulting floor has finite expected cost.

Finite total curvature, which covers finite piecewise-linear cutoffs, the
smooth cases used below, and both running examples, guarantees that the
canonical floor \(R^c\) admits bounded implementing tilts and bounded
state-contingent transfers; Appendix Corollary
\ref{cor:bounded-transfers-finite-curvature} gives the formal statement.

\begin{toappendix}
\begin{corollary}[Bounded transfers under finite total curvature]
\label{cor:bounded-transfers-finite-curvature}
Let \(c:\mathcal P\to[\underline\theta,\bar\theta]\) be a locally DC weak
cutoff. If
\begin{align}
|D^2c|(\mathcal P)<\infty,
\label{eq:finite-total-cutoff-curvature}
\end{align}
then \(c\) has finite implementation rent, and its canonical floor \(R^c\) in
\eqref{eq:curvature-least-cost-floor} admits a bounded measurable implementing
tilt. The state-contingent transfers recovered from
\eqref{eq:tilt-transfer-1}--\eqref{eq:tilt-transfer-0} are therefore bounded.
\end{corollary}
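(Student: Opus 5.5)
The plan is to use the canonical floor \(R^c=\Psi_{\nu_+}\) from Theorem \ref{thm:finite-rent-curvature}, where \(\nu=D^2c=\nu_+-\nu_-\). I will show three things in turn: \(R^c\) has finite expected cost, \(R^c\) and \(R^c-c\) are Lipschitz on \(\mathcal P\), and the tilt selected from their one-sided derivatives, as in \eqref{eq:threshold-beta-subgradient}, is bounded and measurable. Bounded transfers then follow directly from the recovery formulas.

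\textbf{Finite rent.} Finite total curvature means \(\nu_+(\mathcal P)\le|D^2c|(\mathcal P)<\infty\). For every \(t\in\mathcal P\) the kernel satisfies \(0\le K_\mu(p,t)\le \bar p-\underline p\le 1\), so \(J_G(t)\le 1\). Hence
\[
\int_{\mathcal P}J_G\,d\nu_+\le\nu_+(\mathcal P)<\infty,
\]
and Theorem \ref{thm:finite-rent-curvature} gives finite rent, attained by \(R^c\).

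\textbf{Bounded slopes.} Differentiating the hinge representation, the left derivative of \(\Psi_{\nu_+}\) at \(p\) is, up to the treatment of atoms, \(\nu_+((\mu,p))\) for \(p>\mu\) and \(-\nu_+([p,\mu))\) for \(p<\mu\). So \(|(R^c)'_\pm|\le\nu_+(\mathcal P)\) everywhere. For \(c\), the one-sided derivatives of a locally DC function differ by increments of \(\nu\). Fixing a reference point \(p_0\),
\[
|c'_\pm(p)|\le |c'_-(p_0)|+|\nu|(\mathcal P),
\]
so \(c\) has uniformly bounded one-sided derivatives. Consequently \(R^c-c\) also has bounded one-sided derivatives. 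Its convexity was already established in Theorem \ref{thm:finite-rent-curvature}.

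\textbf{Bounded measurable tilt.} Define \(\beta(\theta,p)=-(R^c)'_+(p)\) for \(\theta<c(p)\) and \(\beta(\theta,p)=-(R^c-c)'_+(p)\) for \(\theta\ge c(p)\). A right derivative of a convex function is monotone in \(p\), hence Borel. The set \(\{\theta\ge c(p)\}\) is Borel because \(c\) is continuous. So \(\beta\) is measurable, bounded by the bounds above, and a valid implementing tilt by \eqref{eq:threshold-beta-subgradient} and Proposition \ref{prop:threshold-implementability}.

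\textbf{Bounded transfers.} In \eqref{eq:threshold-onpath-transfer}, \(c\) takes values in \([\underline\theta,\bar\theta]\). The floor \(R^c\) is bounded by \(\nu_+(\mathcal P)\), since \(K_\mu\le1\). So \(T_{R^c,c}\) is bounded, and \eqref{eq:tilt-transfer-1}--\eqref{eq:tilt-transfer-0} with \(p\in[0,1]\) give bounded \(t\).

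\textbf{Main obstacle.} The only delicate step is the uniform derivative bound near the open endpoints of \(\mathcal P\). I would handle it by writing \(c'_+(p)-c'_+(p_0)=\nu((p_0,p])\) for \(p>p_0\), with the mirror-image formula for \(p<p_0\), and bounding the right-hand side by total variation. This relies on the standard identification of the distributional second derivative of a locally DC function with the measure of increments of its right derivative.
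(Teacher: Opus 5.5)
Your proposal is correct and follows essentially the same route as the paper. You bound $R^c$ using $0\le K_\mu\le \bar p-\underline p$, bound the right derivatives of $R^c$ and $R^c-c$ by total curvature mass, select the tilt from those right derivatives using continuity of $c$, and conclude from the recovery formulas. The differences are cosmetic: you verify finite rent through $J_G\le 1$ rather than through boundedness of $R^c$, and you bound the slope of $R^c-c$ through the slope of $c$ rather than through the paper's representation $R^c-c=\Psi_{\nu_-}+a+bp$.
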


\begin{proof}
Write
\[
D^2c=\nu_+-\nu_-,
\qquad
\nu_+(\mathcal P)+\nu_-(\mathcal P)<\infty,
\]
and let \(L:=\bar p-\underline p\). Because
\(0\le K_\mu(p,t)\le L\), the canonical floor satisfies
\[
0\le R^c(p)
=
\int_{\mathcal P}K_\mu(p,t)\,d\nu_+(t)
\le
L\nu_+(\mathcal P)
\qquad
\forall p\in\mathcal P.
\]
Thus \(R^c\) is bounded and hence \(G\)-integrable. Theorem
\ref{thm:finite-rent-curvature} therefore implies that \(c\) has finite
implementation rent and that \(R^c\) is a least-cost floor.

For fixed \(t\), every one-sided slope of
\(p\mapsto K_\mu(p,t)\) lies in \([-1,1]\). Dominated convergence, using
\(\nu_+(\mathcal P)<\infty\), therefore permits differentiation under the
integral. Consequently, the right derivative
\[
r(p):=\partial_p^+R^c(p)
\]
is a Borel selection from \(\partial R^c(p)\) and satisfies
\[
|r(p)|\le\nu_+(\mathcal P)
\qquad
\forall p\in\mathcal P.
\]

Set \(Q^c:=R^c-c\). The construction of \(R^c\) gives
\(D^2Q^c=\nu_-\). Hence
\[
Q^c(p)=\Psi_{\nu_-}(p)+a+bp
\]
for finite constants \(a,b\). The constants are finite because
\(Q^c-\Psi_{\nu_-}\) is a finite affine function on the open interval
\(\mathcal P\). Dominated convergence under the finite measure \(\nu_-\) and
the same one-sided-derivative argument therefore give a Borel selection
\[
q(p):=\partial_p^+Q^c(p)\in\partial Q^c(p)
\]
such that
\[
|q(p)|
\le
|b|+\nu_-(\mathcal P)
\qquad
\forall p\in\mathcal P.
\]

Define
\[
\beta(\theta,p)
:=
\begin{cases}
-r(p), & \theta<c(p),\\[3pt]
-q(p), & \theta\ge c(p).
\end{cases}
\]
Local DC regularity makes \(c\) continuous, so this selection is measurable.
Equation \eqref{eq:threshold-beta-subgradient} shows that it implements
\(x_c\), and the preceding bounds make it bounded. Finally,
\eqref{eq:threshold-onpath-transfer} shows that \(T_{R^c,c}\) is bounded
because \(R^c\) and \(c\) are bounded. Since \(p\in[0,1]\), the recovery
formulas \eqref{eq:tilt-transfer-1}--\eqref{eq:tilt-transfer-0} then make both
state-contingent transfers bounded.
\end{proof}
\end{toappendix}

The kernel gives the formula a direct interpretation. An upward slope change
of size \(a\) at belief \(t\) contributes \(aJ_G(t)\) to implementation rent.
Curvature is therefore free only where \(J_G(t)=0\). A gap in the support of
\(G\) does not by itself make curvature inside that gap free.

\begin{figure}[t]
\centering
\caption{The price of cutoff curvature}
\label{fig:cutoff-curvature-rent}
\begin{tikzpicture}[
    x=1cm,
    y=1cm,
    cutoff/.style={very thick,black},
    free/.style={very thick,blue!70!black},
    priced/.style={very thick,orange!85!black},
    guide/.style={densely dashed,black!55},
    axis/.style={->,black!65},
    every node/.style={font=\footnotesize}
  ]
  \begin{scope}[shift={(0,0)}]
    \node[font=\small\bfseries] at (2.05,3.15) {(a) Concave};
    \draw[axis] (0.15,0.55) -- (4.15,0.55) node[below right] {$p$};
    \draw[cutoff]
      plot[domain=0.3:3.9,samples=60]
        (\x,{2.35-0.13*(\x-2.1)*(\x-2.1)});
    \node[above right] at (3.15,2.17) {$c(p)$};
    \node[blue!70!black] at (2.1,1.32) {slope falls};
    \node at (2.1,0.83) {least-cost floor: \(R^c=0\)};
  \end{scope}

  \begin{scope}[shift={(5.15,0)}]
    \node[font=\small\bfseries] at (2.05,3.15) {(b) Convex};
    \draw[axis] (0.15,0.55) -- (4.15,0.55) node[below right] {$p$};
    \draw[guide] (0.3,1.42) -- (3.9,1.42);
    \draw[cutoff]
      plot[domain=0.3:3.9,samples=60]
        (\x,{1.42+0.27*(\x-2.1)*(\x-2.1)});
    \draw[dotted,black!50] (2.1,0.55) -- (2.1,1.42);
    \node[below] at (2.1,0.55) {$\mu$};
    \node[above right] at (0.42,2.12) {$c(p)$};
    \node[below,black!55] at (0.98,1.42) {supporting line};
    \draw[<->,orange!85!black,thick]
      (3.46,1.46) -- (3.46,{1.42+0.27*(3.46-2.1)*(3.46-2.1)});
    \node[right,orange!85!black] at (3.48,1.70) {floor \(R^c(p)\)};
  \end{scope}

  \begin{scope}[shift={(10.3,0)}]
    \node[font=\small\bfseries] at (2.05,3.15) {(c) Mixed curvature};
    \draw[axis] (0.15,0.55) -- (4.15,0.55) node[below right] {$p$};
    \draw[free]
      plot[domain=0.3:2.1,samples=60]
        (\x,{1.85+0.11*(\x-2.1)*(\x-2.1)*(\x-2.1)});
    \draw[priced]
      plot[domain=2.1:3.9,samples=60]
        (\x,{1.85+0.11*(\x-2.1)*(\x-2.1)*(\x-2.1)});
    \draw[dotted,black!50] (2.1,0.55) -- (2.1,1.85);
    \node[below] at (2.1,0.55) {$t_0$};
    \node[above right] at (3.20,2.28) {$c(p)$};
    \node[blue!70!black,align=center,font=\scriptsize] at (0.88,2.17)
      {\(c''(p)<0\): slope falls\\free};
    \node[orange!85!black,align=center,font=\scriptsize] at (3.18,1.08)
      {\(c''(p)>0\): slope rises\\priced};
  \end{scope}
\end{tikzpicture}
\par\smallskip
\begin{minipage}{\textwidth}
\small
\textbf{Notes:} Each panel fixes the cutoff rather than optimizing it, and all
three specialize Theorem \ref{thm:finite-rent-curvature}. Panel (a) illustrates
Corollary \ref{cor:concave-threshold-inner}: a concave cutoff has no positive
curvature and therefore requires no implementation rent. Panel (b) illustrates
Corollary \ref{cor:convex-threshold-inner}: for a convex cutoff, the least-cost
floor is the gap between the cutoff and a supporting line at the mean belief.
Panel (c) illustrates Appendix Corollary
\ref{cor:smooth-threshold-inner}: for a smooth mixed-curvature cutoff, only
regions in which the slope rises contribute to implementation rent.
\end{minipage}
\end{figure}

Figure~\ref{fig:cutoff-curvature-rent} compares the implementation cost of
fixed cutoffs; curvature alone does not determine which cutoff the principal
should choose.

\paragraph{Concave cutoffs.}

The concave and convex formulas are immediate polar cases. For a
concave cutoff, the positive curvature measure vanishes.

\begin{restatable}[Zero rent for concave cutoffs]{corollary}{concavethresholdinner}
\label{cor:concave-threshold-inner}
Suppose \(c\) is concave on \(\mathcal P\). Then a least-cost floor is
$R^c(p)\equiv 0$, and \(\operatorname{Rent}(c)=0\).
\end{restatable}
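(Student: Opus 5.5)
The plan is to bypass the curvature machinery of Theorem \ref{thm:finite-rent-curvature} and verify directly, via the fixed-cutoff program \eqref{eq:threshold-rent}, that the zero floor is feasible. Because every feasible floor has nonnegative expectation, a feasible zero floor is automatically least-cost. This yields \(\operatorname{Rent}(c)=0\) at once.

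I would proceed in three steps. First, I check the three constraints in \eqref{eq:threshold-rent} for \(R\equiv0\). It is trivially nonnegative, convex (affine), and \(G\)-integrable. Moreover, \(R-c=-c\) is convex on \(\mathcal P\) exactly because \(c\) is concave. By Proposition \ref{prop:threshold-implementability}, there is therefore a globally BIC, truthful-path interim IR mechanism with allocation \(x_c\) and truthful utility \(U_{0,c}(\theta,p)=(\theta-c(p))_+\). The integrability of the on-path transfers follows as in Theorem \ref{thm:global-reduction}, since \(c\) is bounded. Second, I note that every feasible floor satisfies \(R\ge0\), so \(\int R\,dG\ge0\). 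The infimum in \eqref{eq:threshold-rent} is therefore \(0\), and it is attained at \(R^c\equiv0\).

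For completeness, I would also reconcile this with formula \eqref{eq:curvature-rent-formula}. A finite concave function on the open interval \(\mathcal P\) is locally DC (namely \(0-(-c)\)), and \(D^2c\le0\), so \((D^2c)_+=0\). Hence the canonical floor \eqref{eq:curvature-least-cost-floor} is identically zero and the rent integral vanishes, in agreement with the direct argument. If desired, I would also record the implementing tilt from \eqref{eq:threshold-beta-subgradient}: \(\beta=0\) below the cutoff and \(\beta\in\partial c(p)\) (the superdifferential of the concave \(c\)) at or above it, selected by one-sided derivatives.

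There is essentially no obstacle. The only point needing care is the pointwise nature of the hypothesis: concavity must hold for the actual report rule on all of \(\mathcal P\), not merely \(G\)-a.e., since Proposition \ref{prop:threshold-implementability} requires \(R-c\) to be convex everywhere. Concavity also ensures that \(c\) is finite and continuous on the open interval, so the one-sided derivatives used for the tilt selection exist.
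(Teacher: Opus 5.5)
Your proof is correct, and it takes a more elementary route than the paper's. The paper's proof is a one-line specialization of Theorem \ref{thm:finite-rent-curvature}. Concavity gives \((D^2c)_+=0\), so the canonical floor \eqref{eq:curvature-least-cost-floor} vanishes and the rent formula \eqref{eq:curvature-rent-formula} returns zero.

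You instead check directly that \(R\equiv0\) is feasible in \eqref{eq:threshold-rent}: it is nonnegative and convex, and \(R-c=-c\) is convex. You then use \(R\ge0\) for every feasible floor to conclude that the infimum is \(0\) and is attained. Your argument needs only Proposition \ref{prop:threshold-implementability} and the nonnegativity constraint. It does not use the locally DC characterization, the Jensen-kernel representation of convex floors, or the attainment part of Theorem \ref{thm:finite-rent-curvature}. It is the general observation that any zero-cost feasible floor is automatically least-cost.

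The paper's route instead presents the corollary as the zero-curvature instance of the general pricing formula. It thereby identifies the zero floor with the canonical floor \(R^c\) by construction, which your reconciliation paragraph also recovers. Your tilt selection is consistent with \eqref{eq:threshold-beta-subgradient} and with the running example's \(\beta^0=-x^0\): \(\beta=0\) below the cutoff, and \(\beta\) in the superdifferential of \(c\) at or above it.
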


\begin{appendixproof}[Proof of Corollary \ref{cor:concave-threshold-inner}]
For a concave cutoff, \((D^2c)_+=0\). Theorem
\ref{thm:finite-rent-curvature} therefore gives \(R^c\equiv0\) and zero rent.
\end{appendixproof}

Thus concave cutoffs can be implemented without leaving additional utility to
the lowest preference types.

\paragraph{Convex cutoffs.}

If the cutoff is convex, a least-cost floor is the gap between the cutoff and
one of its supporting lines at the mean belief \(\mu\).

\begin{restatable}[Jensen-gap rent for convex cutoffs]{corollary}{convexthresholdinner}
\label{cor:convex-threshold-inner}
Suppose \(c\) is convex on \(\mathcal P\). For any
\(s\in\partial c(\mu)\), the floor schedule
\begin{align}
R^c_s(p)
:=
c(p)-c(\mu)-s(p-\mu)
\label{eq:convex-threshold-floor}
\end{align}
attains the infimum in \eqref{eq:threshold-rent}. The associated implementation rent is
\begin{align}
\operatorname{Rent}(c)
=
\int_{\mathcal P}c(p)\,dG(p)-c(\mu).
\label{eq:convex-threshold-rent}
\end{align}
\end{restatable}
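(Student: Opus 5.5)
The plan is to specialize Theorem \ref{thm:finite-rent-curvature} and then verify the proposed floor directly; this avoids having to identify \(R^c_s\) with the canonical floor \(R^c\). A convex \(c\) on the open interval \(\mathcal P\) is finite and convex, so it is locally DC with \(D^2c=(D^2c)_+\) and \((D^2c)_-=0\).

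First, check that \(R^c_s\) is feasible for \eqref{eq:threshold-rent}.
\begin{itemize}
\item \(R^c_s\) is convex, since it is \(c\) minus an affine function.
\item \(R^c_s-c=-c(\mu)-s(p-\mu)\) is affine, hence convex.
\item \(R^c_s\ge0\) is exactly the subgradient inequality \(c(p)\ge c(\mu)+s(p-\mu)\) for \(s\in\partial c(\mu)\). Here \(\mu\in\mathcal P\) by Bayes plausibility, so \(\partial c(\mu)\) is nonempty.
\end{itemize}
\(G\)-integrability holds because \(c\) takes values in the bounded interval \([\underline\theta,\bar\theta]\) and the affine part is bounded on the bounded \(\mathcal P\). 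Thus \(R^c_s\) is a finite, \(G\)-integrable feasible floor. Proposition \ref{prop:threshold-implementability} then makes \(x_c\) implementable with this floor, and \(\operatorname{Rent}(c)<\infty\).

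Second, compute its expected cost. Integrating against \(G\) and using \eqref{eq:bayes-plausibility} to kill the linear term gives
\[
\int_{\mathcal P} R^c_s\,dG=\int_{\mathcal P} c\,dG-c(\mu).
\]
Boundedness of \(c\) makes every term finite.

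Third, show optimality. I would apply identity \eqref{eq:convex-jensen-curvature-identity} from the proof of Theorem \ref{thm:finite-rent-curvature} to the convex function \(r=c\), which gives
\[
\int_{\mathcal P} c\,dG-c(\mu)=\int_{\mathcal P} J_G\,d(D^2c)_+.
\]
Since rent is finite, this equals \(\operatorname{Rent}(c)\) by \eqref{eq:curvature-rent-formula}. So \(R^c_s\) attains the infimum, and \eqref{eq:convex-threshold-rent} follows.

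Alternatively, one can get a self-contained lower bound without the theorem. Take any feasible \(R\) and set \(Q=R-c\), which is convex. Then \(R=c+Q\), and Jensen's inequality for the convex \(Q\), using Bayes plausibility, gives \(\int Q\,dG\ge Q(\mu)\). Hence
\[
\int R\,dG\ge \int c\,dG+Q(\mu)=\int c\,dG-c(\mu)+R(\mu)\ge \int c\,dG-c(\mu),
\]
where the last step uses \(R(\mu)\ge0\).

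The main obstacle is the integrability bookkeeping in this alternative route: Jensen for \(Q\) needs \(\int Q\,dG\) to be well defined. It is, because \(Q=R-c\) with \(R\in L^1(G)\) and \(c\) bounded. Otherwise the argument is routine.
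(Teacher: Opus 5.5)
Your proof is correct, and your main route is the paper's. The paper also notes \((D^2c)_+=D^2c\) and observes that \(R^c_s\) is nonnegative, has curvature \(D^2c\), and differs from \(c\) by an affine function. It then invokes Theorem~\ref{thm:finite-rent-curvature} for least cost and uses Bayes plausibility to evaluate the rent.

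Your alternative lower bound is also valid and is genuinely more elementary than the paper's argument. It uses only convexity of \(R-c\), \(R(\mu)\ge0\), and \(\int p\,dG=\mu\), so it bypasses the Radon-measure representation entirely. In fact it shows \(\int R\,dG\ge\int c\,dG-c(\mu)\) for every feasible floor of every weak cutoff. Convexity of \(c\) is needed only to make \(R^c_s\) feasible, which is what makes the bound attained.
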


\begin{appendixproof}[Proof of Corollary \ref{cor:convex-threshold-inner}]
For a convex cutoff, \((D^2c)_+=D^2c\). Subtracting any supporting line at
\(\mu\) gives the floor in \eqref{eq:convex-threshold-floor}. It is
nonnegative, has curvature \(D^2c\), and differs from \(c\) by an affine
function. Theorem \ref{thm:finite-rent-curvature} makes it least cost. Bayes
plausibility then gives \eqref{eq:convex-threshold-rent}.
\end{appendixproof}

The least-cost floor is the generalized Bregman divergence generated by \(c\)
around the mean belief \(\mu\). It exactly offsets the convexity of \(c\), making
\(R^c_s-c\) affine. Its expected cost is the Jensen gap in
\eqref{eq:convex-threshold-rent}. Thus, holding the cutoff fixed, a
mean-preserving spread in beliefs weakly increases implementation rent.

\paragraph{Nonsmooth and mixed-curvature cutoffs.}

Theorem \ref{thm:finite-rent-curvature} also covers nonsmooth cutoffs and
cutoffs whose curvature changes sign. For a continuous piecewise-affine
cutoff with slope jump \(\Delta_j\) at knot \(t_j\),
\begin{align}
\operatorname{Rent}(c)
&=
\sum_j J_G(t_j)(\Delta_j)_+.
\label{eq:piecewise-linear-threshold-rent-main}
\end{align}
For \(c\in C^2(\overline{\mathcal P})\),
\begin{align}
\operatorname{Rent}(c)
&=
\int_{\mathcal P}J_G(t)(c''(t))_+\,dt.
\label{eq:smooth-threshold-rent-main}
\end{align}
Thus only upward changes in the cutoff's slope are priced. Appendix
\ref{app:cutoff-curvature-specializations} states the corresponding formal
results and constructs their least-cost floors.

\begin{toappendix}
\subsection{Piecewise-Linear and Smooth Cutoff Formulas}
\label{app:cutoff-curvature-specializations}

\paragraph{Piecewise-linear cutoffs.}

For a piecewise-linear cutoff, positive curvature consists of upward jumps in
its slope. The general rent formula therefore becomes a sum of explicit local
prices.

\begin{corollary}[Kink-by-kink implementation rent]
\label{cor:piecewise-linear-threshold-inner}
Let \(c\) be a continuous piecewise-affine weak cutoff with a locally finite
set of knots \(\{t_j\}_{j\in I}\subset\mathcal P\). Define its slope
jump at knot \(t_j\) by
\[
\Delta_j
:=
c'_+(t_j)-c'_-(t_j).
\]
Then, in the extended nonnegative reals,
\begin{align}
\operatorname{Rent}(c)
&=
\sum_{j\in I}J_G(t_j)(\Delta_j)_+.
\label{eq:piecewise-linear-threshold-rent}
\end{align}
If this sum is finite, a least-cost floor is
\begin{align}
R^c(p)
&=
\sum_{j\in I}K_\mu(p,t_j)(\Delta_j)_+.
\label{eq:piecewise-linear-threshold-floor}
\end{align}
\end{corollary}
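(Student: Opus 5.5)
The corollary is a direct specialization of Theorem \ref{thm:finite-rent-curvature}, so the proof mainly identifies the curvature measure of a piecewise-affine cutoff. First I check that \(c\) is locally DC. On any compact subinterval \([a,b]\subset\mathcal P\), local finiteness of the knot set leaves only finitely many knots. Hence \(c\) restricted there is a finite sum of an affine function and hinge terms \(\Delta_j(p-t_j)_+\). Splitting the terms by the sign of \(\Delta_j\) writes \(c\) as a difference of two finite convex functions on \([a,b]\).

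Second, I compute \(D^2c\). Since \(c''=0\) between knots and \(D^2(p-t_j)_+=\delta_{t_j}\), testing against \(\varphi\in C_c^\infty(\mathcal P)\) and integrating by parts on each affine piece gives
\[
D^2c=\sum_{j\in I}\Delta_j\,\delta_{t_j}.
\]
This sum is locally finite because only finitely many knots meet the support of \(\varphi\). The atoms sit at distinct points, so the Jordan decomposition is obtained by splitting the atoms by sign, which gives
\[
(D^2c)_+=\sum_{j\in I}(\Delta_j)_+\,\delta_{t_j}.
\]
Substituting into \eqref{eq:curvature-rent-formula} yields \(\int J_G\,d(D^2c)_+=\sum_j J_G(t_j)(\Delta_j)_+\). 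Substituting into \eqref{eq:curvature-least-cost-floor} yields the floor \eqref{eq:piecewise-linear-threshold-floor}.

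Third, I cover both the finite and the infinite case, since the claim is stated in the extended nonnegative reals. If the sum is finite, Theorem \ref{thm:finite-rent-curvature} gives attainment, the rent formula, and the least-cost floor \(R^c\). If the sum is infinite, the same theorem says that every feasible floor has infinite expectation, so \(\operatorname{Rent}(c)=+\infty\) and the equality still holds. Because \(c\) is locally DC, the non-DC branch of the theorem never arises.

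The only delicate step is the distributional computation when knots accumulate toward an open endpoint of \(\mathcal P\). Then the sum \(\sum_j\Delta_j\delta_{t_j}\) need not have finite total mass, and the Jordan decomposition must be read as one of locally finite measures. I would handle this as the proof of Theorem \ref{thm:finite-rent-curvature} does: verify everything on relatively compact subsets, where only finitely many knots occur, and then exhaust \(\mathcal P\) by compact subintervals.
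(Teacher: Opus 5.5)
Your proposal is correct and follows the paper's own proof. Like the paper, you get local DC regularity from local finiteness of the knots, identify \(D^2c=\sum_{j\in I}\Delta_j\delta_{t_j}\) with positive part \(\sum_{j\in I}(\Delta_j)_+\delta_{t_j}\) by splitting the distinct atoms by sign, and substitute into Theorem \ref{thm:finite-rent-curvature}. You also spell out the infinite-sum case, where the theorem forces \(\operatorname{Rent}(c)=+\infty\), which the paper leaves implicit.
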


\begin{proof}
Local finiteness of the knot set makes \(c\) locally DC, and its second
distributional derivative is
\[
D^2c
=
\sum_{j\in I}\Delta_j\delta_{t_j}.
\]
The atoms occur at distinct points, so the positive part of this signed
measure is
\[
(D^2c)_+
=
\sum_{j\in I}(\Delta_j)_+\delta_{t_j}.
\]
Substitution into \eqref{eq:curvature-rent-formula} and
\eqref{eq:curvature-least-cost-floor} gives
\eqref{eq:piecewise-linear-threshold-rent} and
\eqref{eq:piecewise-linear-threshold-floor}, respectively.
\end{proof}

An upward kink is costly because the floor must reproduce its increase in
slope; a downward kink already helps make \(R-c\) convex. An upward kink has
strictly positive cost whenever \(J_G(t_j)>0\).

\paragraph{Smooth mixed-curvature cutoffs.}

More generally, the cutoff may be convex on some regions of belief space and
concave on others. Theorem \ref{thm:finite-rent-curvature} already covers
nonsmooth mixed curvature. In the smooth case, it reduces to the following
familiar formula.

For the next result, restrict attention to cutoffs whose values and first two
derivatives extend continuously to \(\overline{\mathcal P}\). We write
\(c\in C^2(\overline{\mathcal P})\) for this class.

\begin{corollary}[Smooth positive-curvature formula]
\label{cor:smooth-threshold-inner}
Suppose \(c\in C^2(\overline{\mathcal P})\). Let
\(\psi\in C^2(\overline{\mathcal P})\) be any function
satisfying
\begin{align}
\psi''(p)=(c''(p))_+
\qquad
\forall p\in\overline{\mathcal P}.
\label{eq:psi-positive-curvature}
\end{align}
Then the floor schedule
\begin{align}
R^c(p)
=
\psi(p)-\psi(\mu)-\psi'(\mu)(p-\mu)
\label{eq:smooth-threshold-floor}
\end{align}
attains the infimum in \eqref{eq:threshold-rent}, with associated implementation
rent
\begin{align}
\operatorname{Rent}(c)
=
\int_{\mathcal P}\psi(p)\,dG(p)-\psi(\mu).
\label{eq:smooth-threshold-rent}
\end{align}
Both expressions are invariant to the affine normalization of \(\psi\).
\end{corollary}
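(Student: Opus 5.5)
The plan is to reduce Corollary \ref{cor:smooth-threshold-inner} to Theorem \ref{thm:finite-rent-curvature} by showing that the proposed floor \eqref{eq:smooth-threshold-floor} coincides with the canonical floor \(R^c\) in \eqref{eq:curvature-least-cost-floor}.

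\textbf{Step 1: regularity and the curvature measure.} Since \(c\in C^2(\overline{\mathcal P})\), the function \(c\) is locally DC. Its second distributional derivative is the absolutely continuous measure \(c''(t)\,dt\). The Jordan decomposition is therefore explicit:
\[
(D^2c)_+=(c''(t))_+\,dt.
\]
The total curvature \(|D^2c|(\mathcal P)=\int_{\mathcal P}|c''|\,dt\) is finite, because \(c''\) is continuous on the compact closure. Corollary \ref{cor:bounded-transfers-finite-curvature} then gives finite rent. Alternatively, directly: \(J_G\) is bounded by \(\bar p-\underline p\), so the integral in \eqref{eq:finite-weighted-positive-curvature} is finite. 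Theorem \ref{thm:finite-rent-curvature} thus applies, with
\[
R^c(p)=\int_{\mathcal P}K_\mu(p,t)(c''(t))_+\,dt
\]
a least-cost floor and \(\operatorname{Rent}(c)=\int J_G(t)(c''(t))_+\,dt\).

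\textbf{Step 2: identify \(R^c\) with the Bregman expression in \(\psi\).} I would apply the representation \eqref{eq:convex-curvature-representation} from the proof of Theorem \ref{thm:finite-rent-curvature} to the convex function \(r=\psi\). It is convex because \(\psi''=(c'')_+\ge0\), and \(D^2\psi=(c'')_+\,dt=\nu_+\). That representation gives
\[
\psi(p)=\psi(\mu)+\psi'(\mu)(p-\mu)+\Psi_{\nu_+}(p),
\]
where the left derivative at \(\mu\) equals \(\psi'(\mu)\) by smoothness. Hence the floor \eqref{eq:smooth-threshold-floor} equals \(\Psi_{\nu_+}=R^c\), so it attains the infimum.

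\textbf{Step 3: the rent formula and invariance.} Integrating against \(G\) and using Bayes plausibility \eqref{eq:bayes-plausibility} eliminates the linear term. This yields \eqref{eq:smooth-threshold-rent}; equivalently, one can invoke \eqref{eq:convex-jensen-curvature-identity} directly. Every \(\psi\) with \(\psi''=(c'')_+\) differs from any other by an affine function \(a+bp\). Both \eqref{eq:smooth-threshold-floor} and \eqref{eq:smooth-threshold-rent} are unchanged by such an addition: for the floor the affine term cancels exactly, and for the rent Bayes plausibility makes \(\int(a+bp)\,dG=a+b\mu\). This gives the stated invariance.

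\textbf{Main obstacle.} Only Step 2 requires care. Two points need checking. First, the representation lemma is stated for finite convex functions on the open interval, and \(\psi\) restricted to \(\mathcal P\) qualifies. Second, the normalization uses the left derivative at \(\mu\); differentiability of \(\psi\) removes any atom issue. Alternatively, one can bypass the representation lemma and verify feasibility by hand. \(R^c\) is convex and nonnegative, since it is a Bregman gap of a convex function. \(R^c-c\) has second derivative \((c'')_+-c''=(c'')_-\ge0\). Optimality then follows from the lower bound \eqref{eq:curvature-rent-lower-bound}.
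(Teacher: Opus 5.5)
Your proposal is correct and follows the paper's route. The paper likewise identifies \((D^2c)_+=(c'')_+\,dp=D^2\psi\), observes that the proposed floor is the supporting-line normalization at \(\mu\) of this convex potential (hence the canonical least-cost floor of Theorem~\ref{thm:finite-rent-curvature}), and obtains the rent formula from Bayes plausibility; you simply spell out the finiteness check and the identification via \eqref{eq:convex-curvature-representation} in more detail.
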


\begin{proof}
Here \(d(D^2c)_+(p)=(c''(p))_+\,dp=D^2\psi\). The floor in
\eqref{eq:smooth-threshold-floor} is the supporting-line normalization of
this convex curvature potential. Theorem \ref{thm:finite-rent-curvature} and
Bayes plausibility give the two claims.
\end{proof}

The floor adds exactly the positive curvature of the cutoff:
\[
(R^c)''=(c'')_+,
\qquad
(R^c-c)''=(-c'')_+.
\]
The tangent normalization makes it touch zero at \(\mu\). If \(c\) is
concave, this reduces to the zero floor; if \(c\) is convex, it reduces to the
supporting-line formula in Corollary \ref{cor:convex-threshold-inner}.
\end{toappendix}

Theorem \ref{thm:finite-rent-curvature} therefore solves the inner problem for
the entire finite-rent cutoff class; the formulas above provide useful
economic special cases. An implementing tilt can be recovered from any
least-cost floor using \eqref{eq:threshold-beta-subgradient}. These are
fixed-cutoff results: they neither select the principal's preferred cutoff nor
show that thresholds are optimal within the unrestricted mechanism class. We
now turn to those two questions.

\section{Optimal Threshold Design}
\label{sec:optimal-threshold}

Section \ref{sec:single-threshold} determined the minimum rent required to
implement any fixed cutoff and expressed that rent as the price of its positive
curvature. This section first identifies the pointwise virtual-surplus
benchmark and then selects the best deterministic threshold after accounting
for implementation rent. The resulting problem is rent-adjusted soft
concavification. We next ask when that threshold is also optimal among all
globally BIC/IR mechanisms. Concavity of cutoff virtual surplus makes the
threshold restriction lossless; without it, a concave-envelope bound limits
the possible gain from randomization and establishes unrestricted optimality
when it binds.

The scope of each result is important. The curvature formula prices any fixed
finite-rent cutoff, while soft concavification selects the best deterministic
threshold. Unrestricted optimality requires a separate argument; the paper
uses threshold sufficiency or envelope contact in the main text, while the
online appendix provides an additional candidate-specific verification
method. Under quadratic cutoff loss, the threshold problem becomes an attained
convex program that generally requires numerical solution but is explicit in
some cases, including the dependent running example.

\subsection{The Outer Problem and the Pointwise Benchmark}
\label{subsec:threshold-outer-problem}

Having solved the inner problem for each fixed cutoff, we now compare cutoffs.
This subsection formulates the threshold objective, defines its pointwise
relaxation, and establishes existence of an optimizer within the threshold
class.

Definition \ref{def:single-threshold-rule} includes endpoint cutoffs to close
the class for existence; with atomless conditional preferences, the lower
endpoint implements the always-action rule and the upper endpoint implements
the never-action rule almost surely.

Specializing the virtual-surplus functional in
\eqref{eq:virtual-surplus-functional} to a threshold, define the conditional
contribution of a scalar cutoff
\(z\in[\underline\theta,\bar\theta]\) at belief \(p\) by\footnote{Here and
below, conditional
objects are taken in jointly measurable versions, as is implicit in the
preceding iterated integrals.}
\begin{align}
B(z,p)
&:=
\int_z^{\bar\theta}W(\theta,p)f(\theta\mid p)\,d\theta
\nonumber\\
&=
[V(p)+z][1-F(z\mid p)].
\label{eq:cutoff-virtual-surplus}
\end{align}
The second equality follows by integration by parts. It requires only absolute
continuity of \(F(\cdot\mid p)\), not differentiability of the density
\(f(\cdot\mid p)\). The virtual surplus and net payoff generated by a cutoff
schedule are therefore
\begin{align}
\operatorname{VS}(c)
&=
\int_{\mathcal P}B(c(p),p)\,dG(p),
\label{eq:cutoff-vs}\\
\mathcal J(c)
&:=
\operatorname{VS}(c)-\operatorname{Rent}(c).
\label{eq:cutoff-net-payoff}
\end{align}
Theorem \ref{thm:finite-rent-curvature} therefore gives the equivalent
curvature form
\begin{align}
\mathcal J(c)
&=
\int_{\mathcal P}B(c(p),p)\,dG(p)
-
\int_{\mathcal P}J_G(t)\,d(D^2c)_+(t)
\label{eq:curvature-adjusted-cutoff-objective}
\end{align}
for every finite-rent cutoff. This form exposes the economic tradeoff. We
call maximizing \(\mathcal J(c)\) over weak cutoffs the \emph{threshold
problem}; an infinite curvature penalty receives value \(-\infty\). For
attainment and computation, we use the equivalent joint \((c,R)\) formulation
below.

We next solve the relaxed problem that imposes same-belief IC but ignores
deviations across belief reports. The usual preference-screening rents are
already incorporated into \(W\), so the solution---the cutoff that maximizes
virtual surplus separately at each belief---serves as the benchmark for
determining when implementation rent distorts the optimal cutoff (Theorem
\ref{thm:pointwise-cutoff-optimality}) and for characterizing the rent-adjusted
optimum (Corollary \ref{cor:rent-adjusted-global-program}).

We assume that the pointwise relaxed problem admits a measurable interior
selection.

\begin{assump}[Interior pointwise benchmark]
\label{assump:interior-pointwise-cutoff}
There is a measurable function
\(c^0:\mathcal P\to(\underline\theta,\bar\theta)\) such that, for every
belief \(p\in\mathcal P\) and every candidate cutoff
\(z\in[\underline\theta,\bar\theta]\),
\begin{align}
B(c^0(p),p)\ge B(z,p).
\label{eq:interior-pointwise-benchmark}
\end{align}
\end{assump}

Assumption \ref{assump:interior-pointwise-cutoff} is stated directly in terms
of the pointwise objective \(B(\cdot,p)\) because this is all the results below
require. A
familiar sufficient condition is a unique interior sign crossing:
\(W(\theta,p)\) is negative for \(\theta<c^0(p)\) and positive for
\(\theta>c^0(p)\). The integral in \eqref{eq:cutoff-virtual-surplus} then shows
directly that lowering the cutoff includes types with negative virtual
surplus, whereas raising it excludes types with positive virtual surplus. In
particular, if the conditional
virtual value \(\phi(\cdot,p)\) is continuous and strictly increasing and
\(-V(p)\) lies in its interior range, then
\(c^0(p)=\phi_p^{-1}(-V(p))\), where
\(\phi_p(\theta):=\phi(\theta,p)\). Assumption
\ref{assump:interior-pointwise-cutoff} also accommodates nonmonotone
conditional virtual values whenever \(B(\cdot,p)\) admits a measurable interior
maximizer. Indeed, up to payoff-irrelevant cutoff ties, every
same-belief IC allocation is a lottery over upper-threshold rules. Maximizing
\(B(\cdot,p)\) therefore also solves the relaxed problem over all monotone,
possibly randomized allocations for that fixed \(p\).

The following support condition has two separate roles. It makes the
zero-rent condition sharp throughout the report interval, and it supplies the
two-sided compactness used below to prove existence of an optimal threshold.
It is not needed for fixed-cutoff attainment, which follows from Theorem
\ref{thm:finite-rent-curvature}.

\begin{assump}[Belief-support span]
\label{assump:belief-support-span}
The support of the marginal belief distribution spans the modeled belief
interval:
\begin{align}
\operatorname*{ess\,inf}_{p\sim G}p=\underline p,
\qquad
\operatorname*{ess\,sup}_{p\sim G}p=\bar p.
\label{eq:belief-support-span}
\end{align}
\end{assump}

Assumption \ref{assump:belief-support-span} is weaker than full support:
\(G\) may have gaps, but realized beliefs must approach both endpoints of the
report interval. It implies \(J_G(t)>0\) at every interior \(t\), including
points inside gaps in the support. Thus a support gap does not make curvature
free; curvature is free only where \(J_G(t)=0\).

The following proposition has two claims at different levels. The first
characterizes zero rent for every weak cutoff; the second establishes existence
of a best cutoff within the deterministic-threshold class.

\begin{restatable}[Zero rent and outer threshold attainment]{prop}{thresholdattainmentzerorent}
\label{prop:threshold-attainment-zero-rent}
Suppose Assumption \ref{assump:belief-support-span} holds. Then, for every weak
cutoff \(c\),
\begin{align}
\operatorname{Rent}(c)=0
\quad\Longleftrightarrow\quad
c\text{ is concave on }\mathcal P.
\label{eq:zero-rent-iff-concavity}
\end{align}
If Assumption \ref{assump:conditional-density} also holds, the threshold
problem attains its supremum.
\end{restatable}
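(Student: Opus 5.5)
The first claim follows from Theorem \ref{thm:finite-rent-curvature} once we show \(J_G(t)>0\) at every interior \(t\). Under Assumption \ref{assump:belief-support-span}, \(G\) puts positive mass strictly below and strictly above any \(t\in\mathcal P\). Take the centered-hinge representation \(J_G(t)=\int K_\mu(p,t)\,dG(p)\). If \(t\ge\mu\), then \(K_\mu(p,t)=(p-t)_+\), and this is positive on a set of positive \(G\)-mass because \(\operatorname{ess\,sup}p=\bar p>t\). If \(t<\mu\), use \((t-p)_+\) together with \(\operatorname{ess\,inf}p=\underline p<t\). So \(J_G>0\) on \(\mathcal P\), and it is continuous there.

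Now suppose \(\operatorname{Rent}(c)=0\). Rent is finite, so \(c\) is locally DC and \(\int J_G\,d(D^2c)_+=0\). Since \(J_G\) is strictly positive, this forces \((D^2c)_+=0\). Then \(D^2c\le0\) as a distribution, and because \(c\) is continuous (locally DC), \(c\) is concave pointwise; the argument is the same one the proof of Theorem \ref{thm:finite-rent-curvature} uses to pass from distributional derivatives to pointwise identities. The converse is Corollary \ref{cor:concave-threshold-inner}.

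For attainment, work with the joint \((c,R)\) formulation: maximize \(\int B(c(p),p)\,dG-\int R\,dG\) over pairs with \(R\ge0\), \(R\) and \(Q:=R-c\) convex, and \(c\in[\underline\theta,\bar\theta]\). Never-action is feasible with zero rent, so the supremum is finite, and \(B\) is bounded because \(|V|\) and \(\Theta\) are bounded. Take a maximizing sequence \((c_n,R_n)\). By Theorem \ref{thm:finite-rent-curvature} we may normalize \(R_n(\mu)=0\), and the rents \(\int R_n\,dG\) are uniformly bounded. Fix a compact subinterval \([a,b]\subset\mathcal P\). Because beliefs approach both endpoints with positive mass beyond \(a\) and beyond \(b\), the convex, nonnegative \(R_n\) vanishing at \(\mu\) with bounded \(G\)-integral should be uniformly bounded on \([a,b]\). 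The reason is that a large value at some \(p_0\in[a,b]\) forces, by convexity through \((\mu,0)\), large values on the \(G\)-positive set beyond \(p_0\). This is exactly where the two-sided span is used. Then \(Q_n=R_n-c_n\) is uniformly bounded on compacts too, since \(c_n\) is bounded. Uniformly bounded convex functions are locally equi-Lipschitz, so by Arzel\`a--Ascoli and a diagonal argument, subsequences converge locally uniformly to convex limits \(R,Q\). Hence \(c_n\to c:=R-Q\) pointwise, \(c\) takes values in \([\underline\theta,\bar\theta]\), and \(R\ge0\).

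It remains to check upper semicontinuity of the objective. Fatou gives \(\int R\,dG\le\liminf\int R_n\,dG\). For the virtual-surplus term, Assumption \ref{assump:conditional-density} makes \(F(\cdot\mid p)\) continuous, so \(B(\cdot,p)=[V(p)+z][1-F(z\mid p)]\) is continuous in \(z\). Dominated convergence then gives \(\int B(c_n,p)\,dG\to\int B(c,p)\,dG\). The limit is therefore feasible and optimal, and by Proposition \ref{prop:threshold-implementability} it is a genuine threshold implementation.

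I expect the main obstacle to be the local uniform bound on \(R_n\). Near an open endpoint outside the support, convexity alone does not bound \(R_n\). The route around this is to bound only on compact subintervals strictly inside \(\mathcal P\) and rely on local convergence. Endpoint mass is controlled by lower semicontinuity through Fatou, so no uniform bound is needed there.
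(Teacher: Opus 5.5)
Your proof is correct and follows the paper's argument. The equivalence comes from strict positivity of \(J_G\) under belief-support span combined with Theorem \ref{thm:finite-rent-curvature}, and attainment comes from a maximizing sequence in the joint \((c,R)\) formulation, locally uniform compactness of the convex pieces, dominated convergence for virtual surplus, and Fatou for the floor. The only small difference is how you get the local bound: you use the normalization \(R_n(\mu)=0\) (so \(R_n\) is monotone away from \(\mu\)) together with positive \(G\)-mass in the far tail, and then bound \(Q_n\) through the boundedness of \(c_n\); the paper instead bounds any nonnegative convex sequence with bounded \(G\)-integral using points on both sides of the compact, and applies that fact to both \(R_n\) and \(Q_n+\bar\theta\).
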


The proposition separates inner from outer existence. Every finite-rent cutoff
has a least-cost floor without belief-support span. Support span is instead a
sufficient condition for the compactness argument establishing an optimal
threshold; the proposition does not claim that it is necessary for outer
attainment. This remains an existence result within the threshold class only.
Theorem \ref{thm:threshold-sufficiency} gives a sufficient condition under
which an optimal threshold is also optimal among all globally BIC/IR
mechanisms.

\begin{appendixproof}[Proof of Proposition \ref{prop:threshold-attainment-zero-rent}]
If \(t\ge\mu\), belief-support span implies \(G\{p>t\}>0\), and hence
\(J_G(t)=\int(p-t)_+\,dG(p)>0\). If \(t<\mu\), it implies
\(G\{p<t\}>0\), and \(J_G(t)=\int(t-p)_+\,dG(p)>0\). Thus \(J_G\) is
strictly positive throughout \(\mathcal P\), whether or not \(G\) has
interior support gaps.

If \(c\) is concave, Corollary \ref{cor:concave-threshold-inner} gives zero
rent. Conversely, suppose \(\operatorname{Rent}(c)=0\). Theorem
\ref{thm:finite-rent-curvature} implies that \(c\) is locally DC and
\[
0
=
\int_{\mathcal P}J_G(t)\,d(D^2c)_+(t).
\]
Strict positivity of \(J_G\) forces \((D^2c)_+=0\), so
\(D^2c\le0\) and \(c\) is concave. This proves
\eqref{eq:zero-rent-iff-concavity}.

It remains to prove outer attainment. We first record the compactness fact used
for this purpose. Let \(r_n\) be nonnegative convex functions
on \(\mathcal P\) such that
\[
\sup_n\int_{\mathcal P} r_n\,dG<\infty.
\]
Assumption
\ref{assump:belief-support-span} implies that \(r_n\) is locally uniformly
bounded and locally equi-Lipschitz. To see this, fix a compact interval
\(K=[k_0,k_1]\subset\mathcal P\). Support span supplies Borel sets
\(A_-\subset(\underline p,k_0)\) and
\(A_+\subset(k_1,\bar p)\), each of positive \(G\)-probability and separated
from \(K\). The integral bound supplies points \(p_n^-\in A_-\) and
\(p_n^+\in A_+\) at which \(r_n\) is uniformly bounded. Convexity then bounds
\(r_n\) uniformly on \(K\). Because \(r_n\ge0\) and both points remain a
positive distance from \(K\), the secant-slope inequalities for convex
functions also give a uniform Lipschitz bound on \(K\). A diagonal
Arzel\`a--Ascoli argument therefore yields a locally uniformly convergent
subsequence.

Joint maximization over feasible pairs \((c,R)\) is equivalent to the
threshold problem because Theorem \ref{thm:finite-rent-curvature} supplies
a least-cost floor for every finite-rent cutoff. A constant cutoff with zero
floor is feasible. Choose a maximizing sequence whose payoff is bounded below
by the payoff from this pair. Because \(V\) and the cutoff range are bounded,
\eqref{eq:cutoff-virtual-surplus} is uniformly bounded, so the sequence has
uniformly bounded expected floors. For each pair, subtract from the floor a
supporting affine function at \(\mu\). This leaves both convexity constraints
unchanged, preserves nonnegativity, weakly raises the objective, and gives
\(R_n(\mu)=0\) by \eqref{eq:bayes-plausibility}.

Apply the preceding compactness argument to \(R_n\) and to the nonnegative
convex functions
\[
Q_n+\bar\theta=R_n-c_n+\bar\theta.
\]
Their expected values are uniformly bounded because \(c_n\) takes values in
the bounded interval \(\Theta\). Along a common subsequence, \(R_n\) and
\(Q_n\) converge locally uniformly to convex functions \(R\) and \(Q\), and
\(c_n=R_n-Q_n\) converges locally uniformly to a weak cutoff \(c=R-Q\).
Absolute continuity of \(F(\cdot\mid p)\) makes
\(B(\cdot,p)\) continuous; dominated convergence passes virtual surplus to
the limit, while Fatou's lemma weakly lowers expected floor cost. The limit
attains the supremum.
\end{appendixproof}

\subsection{When Does the Pointwise Benchmark Survive?}
\label{subsec:pointwise-cutoff-optimality}

Under the conditions below, an interior pointwise benchmark survives global
incentives exactly when it is free to implement.

\begin{restatable}[Optimality of the pointwise cutoff]{theorem}{pointwisecutoffoptimality}
\label{thm:pointwise-cutoff-optimality}
Suppose Assumptions \ref{assump:conditional-density},
\ref{assump:interior-pointwise-cutoff}, and
\ref{assump:belief-support-span} hold. Then the following are equivalent:
\begin{align}
x_{c^0}\text{ is the allocation rule of a globally optimal mechanism}
&\quad\Longleftrightarrow\quad
\operatorname{Rent}(c^0)=0
\nonumber\\
&\quad\Longleftrightarrow\quad
c^0\text{ is concave on }\mathcal P.
\label{eq:pointwise-optimality-equivalence}
\end{align}
If \(0<\operatorname{Rent}(c^0)<\infty\), then \(c^0\) is strictly
suboptimal even among threshold mechanisms. If
\(\operatorname{Rent}(c^0)=+\infty\), there is no globally BIC, truthful-path
interim IR implementation of \(x_{c^0}\) with \(H\)-integrable truthful-path
interim transfers; hence \(x_{c^0}\) is outside the domain of the principal's
problem.
\end{restatable}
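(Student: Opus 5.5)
The second equivalence, \(\operatorname{Rent}(c^0)=0\iff c^0\) concave, is exactly Proposition \ref{prop:threshold-attainment-zero-rent} under Assumption \ref{assump:belief-support-span}, so the work lies in the first equivalence and in the two trailing claims. The key tool is an upper bound on the payoff of every globally BIC/IR mechanism that holds pointwise in \(p\). By Theorem \ref{thm:global-reduction}, any feasible mechanism yields a triple \((x,R,\beta)\in\mathcal K\) with payoff \(\operatorname{VS}(x)-\int R\,dG\le \operatorname{VS}(x)\), because \(R\ge0\). For fixed \(p\), the function \(x(\cdot,p)\) is nondecreasing with values in \([0,1]\). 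It can therefore be written as a mixture of upper-threshold rules: \(x(\theta,p)=\int_0^1\mathbb 1\{\theta\ge z_u(p)\}\,du\), where \(z_u(p):=\inf\{\theta:x(\theta,p)\ge u\}\), and this representation holds up to the Lebesgue-null tie set. Fubini then gives \(\int_\Theta W x f\,d\theta=\int_0^1 B(z_u(p),p)\,du\le B(c^0(p),p)\) by Assumption \ref{assump:interior-pointwise-cutoff}. Integrating over \(G\) yields the relaxed bound \(\operatorname{VS}(x)-\operatorname{Rent}(x)\le \operatorname{VS}(c^0)\) for every feasible allocation.

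For the direction (\(\Leftarrow\)), suppose \(\operatorname{Rent}(c^0)=0\). By Theorem \ref{thm:finite-rent-curvature} the infimum is attained; since \(c^0\) is concave, Corollary \ref{cor:concave-threshold-inner} in fact gives the floor \(R\equiv0\), and \eqref{eq:threshold-beta-subgradient} supplies a tilt. The mechanism is then feasible and has payoff \(\operatorname{VS}(c^0)\), the relaxed upper bound, so it is globally optimal.

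For the direction (\(\Rightarrow\)) and the strict-suboptimality claim, suppose \(0<\operatorname{Rent}(c^0)<\infty\). Then the best implementation of \(x_{c^0}\) earns \(\operatorname{VS}(c^0)-\operatorname{Rent}(c^0)\). I would exhibit a threshold that does strictly better. The natural candidate interpolates toward a concave cutoff: \(c_\lambda:=(1-\lambda)c^0+\lambda\bar c\), where \(\bar c\) is a concave weak cutoff such as a constant. Because positive curvature is sublinear, \((D^2c_\lambda)_+\le(1-\lambda)(D^2c^0)_+\), and hence \(\operatorname{Rent}(c_\lambda)\le(1-\lambda)\operatorname{Rent}(c^0)\). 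The rent saving is therefore at least \(\lambda\operatorname{Rent}(c^0)\), which is first order in \(\lambda\).

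The main obstacle is showing that the virtual-surplus loss \(\operatorname{VS}(c^0)-\operatorname{VS}(c_\lambda)\) is \(o(\lambda)\). Here I would use interiority. Since \(c^0(p)\) is an interior maximizer of \(B(\cdot,p)\), and \(B(z,p)=\int_z^{\bar\theta}Wf\) is absolutely continuous in \(z\) with derivative \(-W(z,p)f(z\mid p)\) almost everywhere, the one-sided difference quotients of \(B(\cdot,p)\) at \(c^0(p)\) must satisfy the appropriate sign conditions. The loss at each \(p\) is \(\int_{c^0}^{c_\lambda}Wf\), taken over an interval of length \(\lambda|\bar c-c^0|\), and dividing by \(\lambda\) gives a quantity that tends to zero. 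The delicate point is that \(W\) need not be continuous. If the pointwise limit fails, I would choose \(\bar c\) as a concave majorant-type cutoff lying close to \(c^0\), so that the loss can be controlled by dominated convergence using boundedness of \(Wf\)-integrals on \(\Theta\). With the loss \(o(\lambda)\) and the saving of order \(\lambda\), some small \(\lambda>0\) gives \(\mathcal J(c_\lambda)>\mathcal J(c^0)\). It follows that \(x_{c^0}\) is not optimal even among thresholds, and hence not globally optimal, which proves (\(\Rightarrow\)) contrapositively.

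Finally, if \(\operatorname{Rent}(c^0)=+\infty\), every implementation violates \(R\in L^1(G)\) or does not exist at all. The integrability correspondence in Theorem \ref{thm:global-reduction} shows that any implementation with \(H\)-integrable transfers would give a finite-expectation floor, which is a contradiction; therefore \(x_{c^0}\) lies outside the domain of the principal's problem.
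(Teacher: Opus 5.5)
Your treatment of the second equivalence, the layer-cake upper bound \(\operatorname{VS}(x)-\int R\,dG\le\operatorname{VS}(c^0)\), the direction from zero rent to global optimality, and the infinite-rent case is sound and matches the paper. The gap is in the step that proves both the forward implication and strict suboptimality: the claim that \(\operatorname{VS}(c^0)-\operatorname{VS}(c_\lambda)=o(\lambda)\). That claim needs \(B(\cdot,p)\) to be differentiable at \(c^0(p)\), and the assumptions do not deliver this. Assumption \ref{assump:conditional-density} only makes \(f(\cdot\mid p)\) strictly positive. Since \(Wf=(V+\theta)f-(1-F)\) inherits any jump of \(f\), \(B_z=-Wf\) can jump from a strictly positive to a strictly negative value exactly at \(c^0(p)\). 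For example, take a density that is piecewise constant with one jump at \(k(p)\). Then \(Wf\) has slope \(2f>0\) on each piece, so \(k(p)\) is the unique interior maximizer whenever \((V+k)f_-<1-F(k)<(V+k)f_+\).

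At such a concave kink, moving the cutoff by \(\delta\) costs about \(s(p)\delta\), where \(s(p)>0\) is the relevant one-sided slope. This is a first-order loss, and dominated convergence cannot make a nonzero pointwise limit vanish. Choosing \(\bar c\) ``close to'' \(c^0\) does not help either. You must then compare a first-order loss of roughly \(\lambda\int s\,|c^0-\bar c|\,dG\) with a first-order saving \(\lambda\int R\,dG\), and nothing makes that comparison favorable. With \(\bar c\) constant, for instance, you lower the cutoff and lose virtual surplus even at \(\mu\), where every least-cost floor vanishes and nothing is saved. A slightly convex \(c^0\) with tiny rent and \(s\) bounded away from zero then defeats your perturbation for every small \(\lambda\).

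The missing idea is to match the cutoff reduction to the floor reduction pointwise. This turns the comparison into an exact inequality rather than an asymptotic one. Within your own interpolation scheme, the right target is \(\bar c=c^0-R\) for a least-cost floor \(R\). It is concave because \(R-c^0\) is convex, and it makes the cutoff fall by exactly the floor saving \(\lambda R(p)\). Because this target may fall below \(\underline\theta\), the paper instead sets \(h:=\min\{R,c^0-\underline\theta\}\), \(c_\varepsilon:=c^0-\varepsilon h\), and \(R_\varepsilon:=R-\varepsilon h\). Then \(R_\varepsilon-c_\varepsilon=R-c^0\) is unchanged. Also \(R-h=\max\{0,R-c^0+\underline\theta\}\) is nonnegative and convex, so \(R_\varepsilon=(1-\varepsilon)R+\varepsilon(R-h)\) is a feasible floor. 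The key identity is
\[
B(c^0-\delta,p)-B(c^0,p)=\bigl[V(p)+c^0(p)\bigr]\bigl[F(c^0\mid p)-F(c^0-\delta\mid p)\bigr]-\delta\bigl[1-F(c^0-\delta\mid p)\bigr].
\]
Combine it with \(V(p)+c^0(p)\ge0\), which follows from \(B(c^0(p),p)\ge B(\bar\theta,p)=0\) and \(1-F(c^0(p)\mid p)>0\). This shows that lowering the cutoff by \(\delta=\varepsilon h(p)\) costs at most \(\delta[1-F(c^0-\delta\mid p)]\). That cost is strictly less than the floor saving \(\delta\) on \(\{R>0\}\), which has positive \(G\)-measure. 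The argument needs neither differentiability of \(B\) nor any small-\(\varepsilon\) limit.
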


The theorem links implementation cost to allocation distortion. Under its
interiority and support assumptions, zero-rent implementation preserves
the pointwise benchmark: it is globally optimal exactly when its cutoff is
concave. If instead the benchmark requires finite positive rent, the principal
can profitably lower the cutoff and a least-cost floor together wherever that
floor is positive. This leaves the truthful utilities and on-path expected
transfers of the original action recipients unchanged, reduces rents for some
types who remain without the action, and extends the action to additional
types. Because the rent saving strictly exceeds any loss in virtual surplus,
adjustment is necessary. This perturbation identifies a profitable direction,
but not the optimal rent-adjusted cutoff.

\begin{appendixproof}[Proof of Theorem \ref{thm:pointwise-cutoff-optimality}]
Suppose first that \(\operatorname{Rent}(c^0)=0\). Theorem
\ref{thm:finite-rent-curvature} supplies a zero-cost
implementation. Fix a belief \(p\). Same-belief incentive compatibility makes
every feasible allocation \(x(\cdot,p)\) nondecreasing. Its layer-cake
representation is a lottery over upper-threshold rules, so its conditional
virtual surplus is an average of values \(B(z,p)\). Assumption
\ref{assump:interior-pointwise-cutoff} bounds this average above by
\(B(c^0(p),p)\). Thus \(x_{c^0}\) attains the unrestricted upper bound on
virtual surplus. All implementation rents are nonnegative, so
\eqref{eq:global-benefit-cost} proves global optimality.

Now suppose \(0<\operatorname{Rent}(c^0)<\infty\), and let \(R\) be the
least-cost floor supplied by Theorem
\ref{thm:finite-rent-curvature}. Write \(Q=R-c^0\), and define
\begin{align*}
h(p)&:=\min\{R(p),c^0(p)-\underline\theta\},\\
c_\varepsilon(p)&:=c^0(p)-\varepsilon h(p),
\qquad
R_\varepsilon(p):=R(p)-\varepsilon h(p),
\qquad \varepsilon\in(0,1).
\end{align*}
Then
\[
R-h=\max\{0,Q+\underline\theta\}
\]
is nonnegative and convex. Thus
\(R_\varepsilon=(1-\varepsilon)R+\varepsilon(R-h)\) is nonnegative and
convex, while \(R_\varepsilon-c_\varepsilon=Q\) remains convex. Moreover,
\[
c_\varepsilon-\underline\theta
\ge
(1-\varepsilon)(c^0-\underline\theta)>0,
\qquad
c_\varepsilon\le c^0<\bar\theta.
\]
The perturbed threshold is therefore globally implementable.

Using \eqref{eq:cutoff-virtual-surplus} and
\(c_\varepsilon=c^0-\varepsilon h\), direct algebra gives
\begin{align*}
&B(c_\varepsilon(p),p)-R_\varepsilon(p)
-\bigl[B(c^0(p),p)-R(p)\bigr]\\
&\qquad=
\varepsilon h(p)F(c_\varepsilon(p)\mid p)
+[V(p)+c^0(p)]
\bigl[F(c^0(p)\mid p)-F(c_\varepsilon(p)\mid p)\bigr].
\end{align*}
Pointwise optimality gives
\(B(c^0(p),p)\ge B(\bar\theta,p)=0\). Conditional full support and
interiority give \(1-F(c^0(p)\mid p)>0\), so
\eqref{eq:cutoff-virtual-surplus} gives
\[
V(p)+c^0(p)
=
\frac{B(c^0(p),p)}{1-F(c^0(p)\mid p)}
\ge0.
\]
Because \(R\ge0\) and
\(\int R\,dG=\operatorname{Rent}(c^0)>0\), the set \(\{R>0\}\) has positive
\(G\)-measure. On this set, \(h>0\) and
\(c_\varepsilon>\underline\theta\), so strict positivity of the conditional
density implies \(F(c_\varepsilon\mid p)>0\). The displayed gain is therefore
strictly positive on a set of positive \(G\)-measure. Since
\(R_\varepsilon\) is feasible for \(c_\varepsilon\),
\(\operatorname{Rent}(c_\varepsilon)\le\int R_\varepsilon\,dG\). Integrating
the gain therefore proves that \(c^0\) is strictly suboptimal even within the
threshold class.
If \(\operatorname{Rent}(c^0)=+\infty\), no \(G\)-integrable feasible floor
exists. Because the remaining terms in the envelope transfer formula are
bounded, there is no globally BIC, truthful-path interim IR implementation of
\(x_{c^0}\) with an \(H\)-integrable truthful-path interim transfer. Finally,
Proposition \ref{prop:threshold-attainment-zero-rent} gives the equivalence
between zero rent and concavity.
\end{appendixproof}

Interiority is essential for this argument. When the pointwise problem has
multiple maximizers, the theorem applies separately to each measurable
interior selection. At a boundary benchmark, the shaving direction can be
blocked, so positive rent alone does not establish suboptimality. An optimally
adjusted cutoff may itself retain positive rent.

The next subsection uses the curvature of \(c^0\) to diagnose when adjustment
is necessary.

\subsection{Concave, Convex, and Mixed Pointwise Cutoffs}
\label{subsec:pointwise-cutoff-curvature}

In regular cases, \(c^0\) is obtained by inverting the conditional virtual
value, so its curvature can be read from primitives.

Specifically, let
\(\phi_p(\theta):=\phi(\theta,p)\) be continuous and strictly increasing, and
suppose that \(-V(p)\) lies in its interior range. The unique pointwise
optimum then satisfies
\begin{align}
V(p)+\phi(c^0(p),p)=0,
\label{eq:pointwise-cutoff-inversion}
\end{align}
and hence
\[
c^0(p)=\phi_p^{-1}(-V(p)).
\]
Thus the relevant object is the curvature of \(p\mapsto c^0(p)\), not
preference--belief dependence by itself.

If \(c^0\) is concave, the zero floor implements it, and Theorem
\ref{thm:pointwise-cutoff-optimality} makes it globally optimal among all
mechanisms.

\Needspace{6\baselineskip}
\begin{mdframed}[style=runningexamplebox]
\begin{runexamp}[Capacity expansion: the zero-rent benchmark]
Under the uniform benchmark, the conditional virtual value is
\(\phi(\theta,p)=2\theta-1\). Hence
\begin{align}
B(z,p)
&=(2p-1+z)(1-z)
\nonumber\\
&=p^2-[z-(1-p)]^2.
\label{eq:running-example-benchmark-loss}
\end{align}
The unique pointwise cutoff is therefore
\(c^0(p)=1-p\). Because this cutoff is affine, the zero floor implements it.
Theorem \ref{thm:pointwise-cutoff-optimality} then makes
\(x^0(\theta,p)=\mathbbm 1\{\theta\ge1-p\}\) globally optimal among all BIC/IR
mechanisms. Thus, despite two-dimensional private information, the firm can
implement its pointwise virtual-surplus benchmark without additional rent.
\end{runexamp}
\end{mdframed}

If \(c^0\) is convex, Corollary
\ref{cor:convex-threshold-inner} makes its rent the Jensen gap of \(c^0\) at
the mean belief. Under belief-support span, that gap is strictly positive
unless \(c^0\) is affine. A nonaffine convex pointwise cutoff must therefore
be adjusted.

For any locally DC mixed-curvature cutoff, Theorem
\ref{thm:finite-rent-curvature} charges only its positive curvature. Upward
changes in slope are costly; downward changes are not. Under belief-support
span, any nonzero positive curvature gives positive rent. If that rent is
finite, the interior pointwise benchmark must be adjusted; if it is infinite,
the benchmark is not admissibly implementable. Appendix
\ref{app:cutoff-curvature-specializations} gives the exact kink-by-kink and
smooth formulas.

Under preference--belief independence, a primitive sufficient condition for
no adjustment is that \(\phi\) be strictly increasing and convex. In that
case, \(\phi(\theta,p)=\phi(\theta)\) and
\[
c^0(p)=\phi^{-1}(-V(p)).
\]
The inverse of an increasing convex function is increasing and concave, and
\(-V\) is affine. Hence \(c^0\) is concave and globally optimal. Dependence
between \(\theta\) and \(p\) changes the inversion, but not the criterion: the
relevant question remains whether the resulting function \(p\mapsto c^0(p)\)
is concave.

The dependent capacity-expansion extension in Section
\ref{subsec:quadratic-cutoff-loss} illustrates the convex case. Its strictly
convex pointwise cutoff, \(m(p)=q^2-q+1/2\) with \(q=2p-1\), reflects dependence
that makes expansion especially attractive at intermediate beliefs and hence
creates positive implementation rent. That section supplies the primitives
and solves the adjustment.

\subsection{The Rent-Adjusted Cutoff Problem}
\label{subsec:rent-adjusted-global-problem}

We now characterize rent-adjusted cutoff choice. The curvature formula gives
two equivalent programs. A floor-free formulation exposes the economic
tradeoff; a joint formulation retains the floor for existence proofs and
numerical computation. Under belief-support span, these programs identify an
optimal deterministic threshold, whether or not thresholds are globally
sufficient. The next subsection gives a separate condition under which this
threshold optimum is also optimal in the unrestricted mechanism class.

Define the maximal cutoff virtual surplus and the loss relative to it by
\begin{align}
\overline B(p)
&:=
\max_{z\in[\underline\theta,\bar\theta]}B(z,p),
\nonumber\\
\mathcal L(z,p)
&:=
\overline B(p)-B(z,p)\ge0.
\label{eq:cutoff-loss}
\end{align}
Continuity of \(B(\cdot,p)\) and joint measurability make \(\overline B\)
measurable. Under Assumption \ref{assump:interior-pointwise-cutoff},
\(\overline B(p)=B(c^0(p),p)\), but the definition also covers multiple and
boundary pointwise maximizers.

By Theorem \ref{thm:finite-rent-curvature}, the threshold problem is
equivalent, up to the constant
\(\int_{\mathcal P}\overline B(p)\,dG(p)\), to
\begin{align}
\min_{\substack{c:\mathcal P\to[\underline\theta,\bar\theta]\\
c\text{ locally DC}}}
\quad
&
\int_{\mathcal P}\mathcal L(c(p),p)\,dG(p)
+
\int_{\mathcal P}J_G(t)\,d(D^2c)_+(t),
\label{eq:rent-adjusted-concavification}
\end{align}
where an infinite curvature penalty gives value \(+\infty\).
We call Program \eqref{eq:rent-adjusted-concavification}
\emph{rent-adjusted concavification}, a soft analogue of ironing. Departing
from a pointwise optimum incurs the first loss, while each upward change in
the cutoff's slope incurs the second. A hard concavity restriction would
eliminate all positive curvature. Here positive curvature is permitted but
priced, so the principal retains it only when its virtual-surplus benefit
justifies its implementation rent.

This floor-free program exactly characterizes threshold optimization but need
not yield a closed-form cutoff. Its value is to isolate the two forces
governing adjustment and to accommodate smooth curvature, kinks, and boundary
cutoffs in one problem. The equivalent joint formulation is useful for
establishing attainment and for computation:

\begin{corollary}[Rent-adjusted cutoff program]
\label{cor:rent-adjusted-global-program}
Suppose Assumptions \ref{assump:conditional-density} and
\ref{assump:belief-support-span} hold. Up to the constant
\(\int_{\mathcal P}\overline B(p)\,dG(p)\), the problem of choosing the
optimal deterministic threshold is the attained program
\begin{align}
\min_{c,R}\quad
&
\int_{\mathcal P}\mathcal L(c(p),p)\,dG(p)
+\int_{\mathcal P}R(p)\,dG(p)
\label{eq:rent-adjusted-global-program}\\
\text{subject to}\quad
&
\underline\theta\le c\le\bar\theta,
\qquad R\ge0,
\qquad R\text{ and }R-c\text{ convex on }\mathcal P.
\nonumber
\end{align}
Every solution can be completed with a suitable tilt and transfers to obtain a
BIC/IR deterministic threshold mechanism attaining the threshold
optimum.
\end{corollary}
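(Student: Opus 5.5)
The plan is to obtain the corollary by combining the inner solution (Theorem \ref{thm:finite-rent-curvature}), the implementability criterion (Proposition \ref{prop:threshold-implementability}), and the outer attainment result (Proposition \ref{prop:threshold-attainment-zero-rent}). The first step is to subtract the constant. For any feasible pair \((c,R)\) with \(G\)-integrable \(R\), the quantity \(\int B(c,p)\,dG-\int R\,dG\) equals \(\int\overline B\,dG\) minus the objective in \eqref{eq:rent-adjusted-global-program}. We need \(\overline B\) to be \(G\)-integrable. This holds because \(|B(z,p)|\le \sup|V|+\max(|\underline\theta|,|\bar\theta|)\) uniformly, so the constant is finite and maximizing payoff is the same as minimizing loss plus floor cost.

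Next I will show that the joint program has the same value as the threshold problem \(\sup_c\mathcal J(c)\). For fixed \(c\), minimizing over \(R\) subject to \(R\ge0\) with \(R\) and \(R-c\) convex is exactly \eqref{eq:threshold-rent}, so the inner infimum equals \(\operatorname{Rent}(c)\). For finite-rent cutoffs, Theorem \ref{thm:finite-rent-curvature} shows that this infimum is attained by \(R^c\). Cutoffs with infinite rent have no feasible integrable floor, or only floors with infinite expectation, so they receive value \(+\infty\) in both formulations. It follows that \(\min_{c,R}\) equals \(\min_c[\int\mathcal L\,dG+\operatorname{Rent}(c)]\). Moreover, an optimizer of one problem yields an optimizer of the other: pair an optimal cutoff \(c\) with \(R^c\), or observe that any optimal pair \((c,R)\) must have \(\int R\,dG=\operatorname{Rent}(c)\).

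For attainment I will rely on Proposition \ref{prop:threshold-attainment-zero-rent}: under Assumptions \ref{assump:conditional-density} and \ref{assump:belief-support-span}, the threshold problem attains its supremum at some cutoff \(c^*\), and then \((c^*,R^{c^*})\) solves the joint program. Alternatively, I can rerun that proof's compactness argument directly on the joint program. That argument normalizes \(R_n(\mu)=0\), uses Arzelà--Ascoli on \(R_n\) and \(R_n-c_n+\bar\theta\), applies dominated convergence to \(\int\mathcal L\,dG\) (using continuity of \(B(\cdot,p)\)), and applies Fatou to \(\int R\,dG\). Convexity, nonnegativity, and the bounds \(\underline\theta\le c\le\bar\theta\) are all preserved under locally uniform limits.

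Finally, I will complete a solution \((c,R)\) to a mechanism. Proposition \ref{prop:threshold-implementability} shows that \(x_c\) with truthful utility \(U_{R,c}\) is globally BIC/IR. A measurable tilt comes from one-sided derivatives as in \eqref{eq:threshold-beta-subgradient}, and transfers come from \eqref{eq:threshold-onpath-transfer} together with \eqref{eq:tilt-transfer-1}--\eqref{eq:tilt-transfer-0}. Theorem \ref{thm:global-reduction} then identifies the principal's payoff with \(\operatorname{VS}(c)-\int R\,dG\), which is the threshold optimum.

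The main obstacle is measurability of the tilt selection. Here \(c\) is locally DC and hence continuous, so \(\{\theta\ge c(p)\}\) is Borel, and one-sided derivatives of convex functions are Borel. With that settled, the remaining steps are bookkeeping.
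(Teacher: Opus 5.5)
Your proposal is correct and follows essentially the paper's own argument. Like you, the paper rewrites the objective as $\int_{\mathcal P}\overline B\,dG-\operatorname{VS}(c)+\int_{\mathcal P}R\,dG$, uses Proposition~\ref{prop:threshold-implementability} to identify feasible pairs with implementable thresholds and their floors, takes attainment from Proposition~\ref{prop:threshold-attainment-zero-rent}, and reads off the tilt and transfers from Proposition~\ref{prop:threshold-implementability}. Your additional details, namely the boundedness of $\overline B$, the reduction $\int R\,dG=\operatorname{Rent}(c)$ via $R^c$, and the Borel measurability of the tilt selection, correctly fill in steps the paper leaves implicit.
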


\begin{appendixproof}[Proof of Corollary \ref{cor:rent-adjusted-global-program}]
For every feasible pair \((c,R)\),
\begin{align*}
&\int_{\mathcal P}\mathcal L(c(p),p)\,dG(p)
+\int_{\mathcal P}R(p)\,dG(p)\\
&\qquad=
\int_{\mathcal P}\overline B(p)\,dG(p)
-\operatorname{VS}(c)
+\int_{\mathcal P}R(p)\,dG(p).
\end{align*}
Proposition \ref{prop:threshold-implementability} makes the feasible pairs in
\eqref{eq:rent-adjusted-global-program} exactly the implementable weak
thresholds and their feasible floors. The displayed identity therefore makes
minimization of the program equivalent to maximization of the principal's
payoff within the threshold class. Proposition
\ref{prop:threshold-attainment-zero-rent} gives attainment. Proposition
\ref{prop:threshold-implementability} supplies the tilt and transfers needed
to complete any solution.
\end{appendixproof}

Corollary \ref{cor:rent-adjusted-global-program} completes the progression from
the pointwise benchmark to the optimal adjustment. Theorem
\ref{thm:pointwise-cutoff-optimality} determines whether the pointwise cutoff
survives implementation rent; the corollary characterizes its best
rent-adjusted replacement within the threshold class, including when pointwise
maximizers are multiple or lie at the boundary. Economically, the program
balances the virtual-surplus cost of distorting the cutoff against the expected
floor cost needed to implement it. Whether an unrestricted mechanism can do
better remains a separate question.

The following implications concern the threshold optimizer just identified;
they neither require nor establish unrestricted optimality. Under Assumptions
\ref{assump:conditional-density} and \ref{assump:belief-support-span}, any
solution \((c^*,R^*)\) of \eqref{eq:rent-adjusted-global-program} necessarily
uses a least-cost floor. If Assumption
\ref{assump:interior-pointwise-cutoff} also holds and
\(0<\operatorname{Rent}(c^0)<\infty\), then
\begin{align}
\operatorname{Rent}(c^0)-\operatorname{Rent}(c^*)
>
\operatorname{VS}(c^0)-\operatorname{VS}(c^*)
\ge0.
\label{eq:optimal-rent-surplus-tradeoff}
\end{align}
Thus adjusting the pointwise benchmark saves strictly more rent than the
virtual surplus it sacrifices. Positive rent can nevertheless remain at the
optimal threshold when the lower cutoff bound blocks further adjustment or
when reducing rent further would require allocating to marginal types with
negative total surplus. Appendix
\ref{app:optimal-adjustment-implications} gives the complete formal statement
and proof.

Soft concavification can therefore differ strictly from imposing concavity.
Appendix \ref{app:positive-rent-soft-concavification} gives an example in
which the globally optimal cutoff falls as beliefs rise, then stays at its
lower bound---creating an upward kink and positive implementation rent. Even
after paying that rent, the cutoff strictly outperforms every concave cutoff.

\begin{toappendix}
\subsection{Further Implications of Optimal Adjustment}
\label{app:optimal-adjustment-implications}

\begin{proposition}[Implications of optimal adjustment]
\label{prop:optimal-adjustment-implications}
Suppose Assumptions \ref{assump:conditional-density} and
\ref{assump:belief-support-span} hold. Let
\((c^*,R^*)\) solve \eqref{eq:rent-adjusted-global-program}. Then
\begin{align}
\int_{\mathcal P}R^*(p)\,dG(p)
=
\operatorname{Rent}(c^*).
\label{eq:optimal-floor-is-least-cost}
\end{align}
If Assumption \ref{assump:interior-pointwise-cutoff} also holds and
\(0<\operatorname{Rent}(c^0)<\infty\), then
\[
\operatorname{Rent}(c^0)-\operatorname{Rent}(c^*)
>
\operatorname{VS}(c^0)-\operatorname{VS}(c^*)
\ge0.
\]

If \(\operatorname{Rent}(c^*)>0\), then either
\begin{align}
&G\{p:R^*(p)>0,\ c^*(p)=\underline\theta\}>0
\nonumber\\[-2pt]
&\qquad\text{or}\qquad
G\{p:R^*(p)>0,\ c^*(p)>\underline\theta,\
V(p)+c^*(p)<0\}>0.
\label{eq:optimal-residual-rent}
\end{align}
\end{proposition}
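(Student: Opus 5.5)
The proof has three parts, one for each claim, and each reuses a tool already in the paper.

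For the least-cost claim \eqref{eq:optimal-floor-is-least-cost}, I argue by contradiction. Suppose \(\int R^*\,dG>\operatorname{Rent}(c^*)\). Since \(c^*\) is feasible, its rent is finite, so Theorem \ref{thm:finite-rent-curvature} supplies an attained least-cost floor \(R^{c^*}\). The pair \((c^*,R^{c^*})\) is feasible in \eqref{eq:rent-adjusted-global-program}, has the same loss term, and has strictly smaller floor cost. That contradicts optimality.

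For the rent--surplus tradeoff, I compare payoffs. Write \(\mathcal J(c)=\operatorname{VS}(c)-\operatorname{Rent}(c)\). The proof of Theorem \ref{thm:pointwise-cutoff-optimality} shows that when \(0<\operatorname{Rent}(c^0)<\infty\), the shaved pair \((c_\varepsilon,R_\varepsilon)\) strictly beats \((c^0,R^{c^0})\). Hence \(\mathcal J(c^*)\ge\mathcal J(c_\varepsilon)>\mathcal J(c^0)\), and rearranging gives the strict inequality. The weak inequality \(\operatorname{VS}(c^0)\ge\operatorname{VS}(c^*)\) holds because \(B(c^0(p),p)\ge B(c^*(p),p)\) pointwise by Assumption \ref{assump:interior-pointwise-cutoff}.

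For the residual-rent dichotomy \eqref{eq:optimal-residual-rent}, I reuse the shaving perturbation at the optimum. Suppose both sets in \eqref{eq:optimal-residual-rent} are \(G\)-null. Set \(Q^*=R^*-c^*\), \(h=\min\{R^*,c^*-\underline\theta\}\), \(c_\varepsilon=c^*-\varepsilon h\), and \(R_\varepsilon=R^*-\varepsilon h\). The argument in Theorem \ref{thm:pointwise-cutoff-optimality} gives feasibility: \(R^*-h=\max\{0,Q^*+\underline\theta\}\) is convex and nonnegative, and \(R_\varepsilon-c_\varepsilon=Q^*\) is unchanged. The same algebra gives the per-belief gain
\[
\varepsilon h F(c_\varepsilon\mid p)+[V(p)+c^*(p)]\bigl[F(c^*\mid p)-F(c_\varepsilon\mid p)\bigr].
\]
Note that the argument cannot use pointwise optimality here. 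What it needs instead is the sign condition \(V+c^*\ge0\), which holds \(G\)-a.e.\ on \(\{R^*>0,\ c^*>\underline\theta\}\) because the second set in \eqref{eq:optimal-residual-rent} is null. Since the first set is also null, \(c^*>\underline\theta\) a.e.\ on \(\{R^*>0\}\). That set has positive measure because \(\operatorname{Rent}(c^*)>0\), and on it \(h>0\). Off \(\{R^*>0\}\) we have \(h=0\), so the gain is zero there. The first term is then strictly positive on \(\{R^*>0\}\) by strict positivity of the density, and the second term is nonnegative.

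Integrating, \((c_\varepsilon,R_\varepsilon)\) strictly improves on \((c^*,R^*)\), which contradicts optimality. So at least one of the two sets in \eqref{eq:optimal-residual-rent} has positive measure.

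The main obstacle is this last step: making sure the shaving stays feasible and the gain stays strictly positive without interiority of \(c^*\). Both points are handled by restricting \(h\) to the positive-rent set and using the sign condition in place of Assumption \ref{assump:interior-pointwise-cutoff}.
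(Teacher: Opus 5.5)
Your proposal is correct and follows the paper's own proof essentially step for step. It uses the least-cost floor replacement for the first claim, the strict suboptimality of \(c^0\) from Theorem \ref{thm:pointwise-cutoff-optimality} rearranged for the tradeoff inequality, and the shaving perturbation \(h=\min\{R^*,c^*-\underline\theta\}\), with the sign condition \(V+c^*\ge0\) replacing pointwise optimality, for the residual-rent dichotomy; your explicit note that \(h=0\) off \(\{R^*>0\}\) slightly sharpens the paper's ``nonnegative almost everywhere'' step.
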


\begin{proof}
Because \(R^*\) is feasible for \(c^*\),
\(\operatorname{Rent}(c^*)\le\int R^*\,dG\). If the inequality were strict,
the least-cost floor supplied by Theorem
\ref{thm:finite-rent-curvature} would lower the objective in
\eqref{eq:rent-adjusted-global-program} while leaving \(c^*\) fixed. This
contradicts optimality and proves \eqref{eq:optimal-floor-is-least-cost}.

Suppose next that Assumption \ref{assump:interior-pointwise-cutoff} holds and
\(0<\operatorname{Rent}(c^0)<\infty\). Theorem
\ref{thm:pointwise-cutoff-optimality} makes \(c^0\) strictly suboptimal within
the threshold class. Hence
\[
\operatorname{VS}(c^*)-\operatorname{Rent}(c^*)
>
\operatorname{VS}(c^0)-\operatorname{Rent}(c^0).
\]
Rearranging gives the displayed strict inequality. Pointwise optimality of
\(c^0\) gives
\[
B(c^0(p),p)\ge B(c^*(p),p)
\]
for every \(p\). Integration gives the weak inequality.

Finally, suppose \(\operatorname{Rent}(c^*)>0\) but both sets in
\eqref{eq:optimal-residual-rent} have zero \(G\)-measure. Define
\[
h(p):=\min\{R^*(p),c^*(p)-\underline\theta\},
\qquad
c_\varepsilon:=c^*-\varepsilon h,
\qquad
R_\varepsilon:=R^*-\varepsilon h,
\qquad \varepsilon\in(0,1).
\]
Write \(Q^*=R^*-c^*\). Since
\[
R^*-h=\max\{0,Q^*+\underline\theta\},
\]
the function \(R^*-h\) is nonnegative and convex. Thus
\(R_\varepsilon=(1-\varepsilon)R^*+\varepsilon(R^*-h)\) is nonnegative and
convex, while \(R_\varepsilon-c_\varepsilon=Q^*\) is convex. Moreover,
\[
c_\varepsilon-\underline\theta
\ge
(1-\varepsilon)(c^*-\underline\theta)\ge0,
\qquad
c_\varepsilon\le c^*\le\bar\theta.
\]
The perturbed pair is therefore feasible.

Using \eqref{eq:cutoff-virtual-surplus}, its pointwise payoff gain is
\begin{align*}
&B(c_\varepsilon(p),p)-R_\varepsilon(p)
-\bigl[B(c^*(p),p)-R^*(p)\bigr]\\
&\qquad=
\varepsilon h(p)F(c_\varepsilon(p)\mid p)
+[V(p)+c^*(p)]
\bigl[F(c^*(p)\mid p)-F(c_\varepsilon(p)\mid p)\bigr].
\end{align*}
Equation \eqref{eq:optimal-floor-is-least-cost} and positive rent imply that
\(\{R^*>0\}\) has positive \(G\)-measure. Because both alternatives in
\eqref{eq:optimal-residual-rent} are assumed to fail, almost every point in
this set satisfies \(c^*(p)>\underline\theta\) and
\(V(p)+c^*(p)\ge0\). At each such point, \(h(p)>0\),
\(c_\varepsilon(p)>\underline\theta\), and strict positivity of the
conditional density gives \(F(c_\varepsilon(p)\mid p)>0\). The displayed gain
is therefore nonnegative almost everywhere and strictly positive on a set of
positive \(G\)-measure. Integrating contradicts optimality of \((c^*,R^*)\),
proving \eqref{eq:optimal-residual-rent}.
\end{proof}
\end{toappendix}

The rent-adjusted program in \eqref{eq:rent-adjusted-global-program} is
therefore an exact characterization of the best threshold rather than a
closed-form solution. We next identify when this threshold solution is also
globally optimal among all BIC/IR mechanisms.

\subsection{When Do Thresholds Solve the Unrestricted Problem?}
\label{subsec:threshold-sufficiency}

The preceding program identifies the best deterministic threshold but does not
rule out gains from randomization. Two curvature conditions play distinct
roles. Curvature of the schedule \(p\mapsto c(p)\) across beliefs determines
its implementation rent; curvature of \(z\mapsto B(z,p)\) in the candidate
cutoff, holding \(p\) fixed, determines whether randomizing over cutoffs can
improve virtual surplus. We now impose concavity in this second dimension.

\begin{assump}[Concavity of cutoff virtual surplus]
\label{assump:cutoff-vs-concavity}
For \(G\)-almost every \(p\), the function
\(z\mapsto B(z,p)\) is concave on
\([\underline\theta,\bar\theta]\).
\end{assump}

At each belief, Assumption \ref{assump:cutoff-vs-concavity} makes the cutoff at
a lottery's mean weakly dominate the lottery in virtual surplus. Uniform \(F\)
satisfies it strictly; see Appendix
\ref{app:cutoff-vs-concavity-primitives} for primitives.
The condition is sufficient, but not necessary, for threshold optimality. If
it fails, Theorem \ref{thm:finite-rent-curvature}, Theorem
\ref{thm:pointwise-cutoff-optimality}, and Corollary
\ref{cor:rent-adjusted-global-program} continue to characterize fixed cutoffs,
the pointwise benchmark, and the best threshold, respectively. What is lost is
the guarantee that the best threshold is optimal among all BIC/IR mechanisms.

\begin{restatable}[Threshold sufficiency]{theorem}{thresholdsufficiency}
\label{thm:threshold-sufficiency}
Suppose Assumptions \ref{assump:conditional-density} and
\ref{assump:cutoff-vs-concavity} hold. For every globally BIC/IR mechanism,
there is a deterministic threshold mechanism that is globally BIC/IR,
has the same floor, and gives the principal a weakly higher payoff. Hence the
threshold and unrestricted problems have the same value.
If Assumption \ref{assump:belief-support-span} also holds, this value is
attained by a deterministic threshold.

If \(B(\cdot,p)\) is strictly concave for \(G\)-almost every \(p\), every
optimal allocation, whenever one exists, coincides \(H\)-almost everywhere
with a threshold. Under Assumption
\ref{assump:belief-support-span}, the optimal cutoff is unique
\(G\)-almost everywhere, although its implementing floor, tilt, and transfers
need not be unique.
\end{restatable}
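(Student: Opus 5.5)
The plan is to replace an arbitrary globally BIC/IR mechanism by the threshold at the \emph{mean cutoff} of its same-belief lottery. Fix a triple \((x,R,\beta)\in\mathcal K\); by Theorem \ref{thm:global-reduction} it suffices to work with these triples. For each \(p\), \(x(\cdot,p)\) is nondecreasing with values in \([0,1]\). By the layer-cake representation it is therefore a lottery over upper thresholds: there is a probability measure \(\lambda_p\) on \([\underline\theta,\bar\theta]\) with
\[
x(\theta,p)=\lambda_p([\underline\theta,\theta])
\]
up to cutoff-type ties. Any residual mass \(1-x(\bar\theta,p)\) is placed at \(z=\bar\theta\), where \(B(\bar\theta,p)=0\). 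Ties occur only at countably many \(\theta\) and are irrelevant because \(F(\cdot\mid p)\) is atomless by Assumption \ref{assump:conditional-density}. Define
\[
c(p):=\bar\theta-\int_{\underline\theta}^{\bar\theta}x(s,p)\,ds .
\]
This is measurable, takes values in \([\underline\theta,\bar\theta]\), and equals the mean \(\int z\,d\lambda_p(z)\), since \(\int_\Theta x(s,p)\,ds=\int(\bar\theta-z)\,d\lambda_p(z)\).

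\textbf{Implementability with the same floor.} Let \(U_R\) be the truthful utility from Corollary \ref{cor:floor-tilt-ic}. By Proposition \ref{prop:convex-utility-characterization}, \(U_R\) is jointly convex on \(\Theta\times\mathcal P\). Its two edge sections are then convex in \(p\):
\[
U_R(\underline\theta,p)=R(p),
\qquad
U_R(\bar\theta,p)=R(p)+\bar\theta-c(p).
\]
Hence both \(R\) and \(R-c\) are convex on \(\mathcal P\), and \(R\ge0\) is inherited. Proposition \ref{prop:threshold-implementability} then implements \(x_c\) as a globally BIC/IR mechanism with floor \(R\), with tilt and transfers as in \eqref{eq:threshold-beta-subgradient}. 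The expected floor \(\int R\,dG\) is unchanged.

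\textbf{Payoff comparison.} By Fubini and the integration-by-parts identity \eqref{eq:cutoff-virtual-surplus},
\[
\int_\Theta W(\theta,p)\,x(\theta,p)\,f(\theta\mid p)\,d\theta=\int B(z,p)\,d\lambda_p(z).
\]
Jensen's inequality under Assumption \ref{assump:cutoff-vs-concavity} bounds this by \(B(c(p),p)\) for \(G\)-a.e.\ \(p\). Integrating over \(G\) and subtracting the common floor cost in \eqref{eq:global-reduced-problem} shows the threshold is weakly better. Hence the threshold and unrestricted values coincide. Under Assumption \ref{assump:belief-support-span}, Proposition \ref{prop:threshold-attainment-zero-rent} attains the threshold value, and that threshold is therefore optimal among all mechanisms.

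\textbf{Strict concavity.} Jensen's inequality is strict unless \(\lambda_p\) is degenerate. So if an optimal \(x\) had nondegenerate \(\lambda_p\) on a set of positive \(G\)-measure, the construction above would strictly improve it, contradicting optimality. Thus \(x(\cdot,p)\) is a threshold for \(G\)-a.e.\ \(p\), which is \(H\)-a.e.\ equality with \(x_c\). For uniqueness, suppose \((c_1,R_1)\) and \((c_2,R_2)\) are both optimal. The midpoint pair \(\bigl(\tfrac{c_1+c_2}2,\tfrac{R_1+R_2}2\bigr)\) is feasible, because the constraints \(R\ge0\) and convexity of \(R\) and \(R-c\) are preserved under averaging. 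Its floor cost is the average of the two floor costs, while strict concavity of \(B(\cdot,p)\) makes its virtual surplus strictly higher wherever \(c_1\ne c_2\) on a set of positive \(G\)-measure. That contradicts optimality, so \(c_1=c_2\) \(G\)-a.e. Nonuniqueness of floors, tilts, and transfers is visible already from the freedom in the subgradient selections.

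\textbf{Main obstacle.} The delicate step is the layer-cake identification, because \(\lambda_p\) may charge the endpoints and \(x\) may be discontinuous in \(\theta\). I expect it to go through as follows. Write
\[
x(\theta,p)=\int_0^1\mathbb 1\{x(\theta,p)\ge u\}\,du,
\]
so the lottery is indexed by the quantile cutoffs \(z_u(p)=\inf\{\theta: x(\theta,p)\ge u\}\), with the convention \(z_u(p)=\bar\theta\) when the set is empty. This handles the endpoint mass directly, and joint measurability of \((u,p)\mapsto z_u(p)\) gives measurability of \(c\) and validates the Fubini steps.
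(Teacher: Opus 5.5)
Your proposal is correct and follows essentially the same route as the paper: the mean (endpoint-matched) cutoff inherits convexity of $R$ and $R-c$ from the edge sections of jointly convex truthful utility, the layer-cake representation over generalized cutoffs combined with Jensen gives the virtual-surplus comparison, and strict concavity forces degenerate lotteries. For uniqueness, the paper shows $\operatorname{Rent}$ is convex by averaging $\varepsilon$-optimal floors, whereas you average attained optimal pairs directly; this is legitimate because Theorem \ref{thm:finite-rent-curvature} supplies least-cost floors, and it is the same idea.
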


For each fixed \(p\), any monotone randomized allocation is a lottery over
upper-threshold rules. Replacing that lottery by the threshold at its mean
cutoff preserves a feasible implementation with the same floor; concavity and
Jensen's inequality weakly raise virtual surplus. The threshold reduction is
therefore lossless, and under strict concavity every optimal allocation must
coincide \(H\)-almost everywhere with a threshold. Both running examples
satisfy strict concavity---the uniform benchmark by
\eqref{eq:running-example-benchmark-loss} and the dependent extension by
construction. In each, the optimal cutoff is unique \(G\)-almost everywhere
and every optimal allocation coincides \(H\)-almost everywhere with its
threshold. The cutoff is \(c^0(p)=1-p\) in the uniform benchmark and the
rent-adjusted solution derived in Section \ref{subsec:quadratic-cutoff-loss}
in the dependent extension.

\begin{appendixproof}[Proof of Theorem \ref{thm:threshold-sufficiency}]
Let \(U\) be the convex truthful utility induced by a globally BIC/IR
mechanism, and write \(R(p)=U(\underline\theta,p)\). Define its
endpoint-matched cutoff by
\begin{align}
c_x(p)
:=
\bar\theta-
\int_{\underline\theta}^{\bar\theta}x(s,p)\,ds.
\label{eq:endpoint-matched-cutoff}
\end{align}
The same-belief envelope gives
\[
U(\bar\theta,p)
=
R(p)+\int_{\underline\theta}^{\bar\theta}x(s,p)\,ds.
\]
Consequently,
\[
R(p)-c_x(p)=U(\bar\theta,p)-\bar\theta.
\]
Both \(R\) and \(R-c_x\) are convex as restrictions of the jointly convex
function \(U\). Proposition \ref{prop:threshold-implementability} therefore
implements \(x_{c_x}\) with
the same floor \(R\).

Fix \(p\). Since \(x(\cdot,p)\) is nondecreasing, its layer-cake
representation is a lottery over upper-threshold rules: there are generalized
cutoffs
\[
c_z(p)
:=
\inf\{s\in\Theta:x(s,p)\ge z\},
\qquad z\in(0,1),
\]
where the infimum of the empty set is \(\bar\theta\). These generalized
inverses are measurable in \((z,p)\), and, up to a Lebesgue-null set,
\[
x(\theta,p)
=
\int_0^1\mathbbm 1\{\theta\ge c_z(p)\}\,dz.
\]
Fubini's theorem gives
\[
c_x(p)=\int_0^1c_z(p)\,dz,
\qquad
\int_\Theta W(\theta,p)f(\theta\mid p)x(\theta,p)\,d\theta
=
\int_0^1B(c_z(p),p)\,dz.
\]
Concavity of \(B(\cdot,p)\) and Jensen's inequality imply that the latter
expression is at most \(B(c_x(p),p)\). Integration over beliefs proves that
the threshold has weakly higher virtual surplus and the same floor cost. This
establishes dominance and equality of the two values. Proposition
\ref{prop:threshold-attainment-zero-rent} supplies an attaining threshold
under belief-support span.

Under strict concavity, equality in Jensen's inequality requires
\(c_z(p)=c_x(p)\) for almost every \(z\), for \(G\)-almost every \(p\).
Conditional absolute continuity then implies that every unrestricted optimum
is a threshold \(H\)-almost everywhere. Finally, \(\operatorname{Rent}\) is
convex: averaging \(\varepsilon\)-optimal floors for two cutoffs gives a
feasible floor for their average, and the claim follows as
\(\varepsilon\downarrow0\). Hence integrated cutoff virtual surplus minus
rent is strictly concave, up to \(G\)-almost-everywhere equality. Any two
optimal cutoffs therefore agree \(G\)-almost everywhere; under belief-support
span, Proposition \ref{prop:threshold-attainment-zero-rent} supplies their
existence.
\end{appendixproof}

\begin{toappendix}
\subsection{Primitive Conditions for Concavity of Cutoff Virtual Surplus}
\label{app:cutoff-vs-concavity-primitives}

Under Assumption \ref{assump:conditional-density}, \(B(\cdot,p)\) is
absolutely continuous, with
\[
B_z(z,p)=-W(z,p)f(z\mid p)
\quad\text{for almost every }z.
\]
Hence Assumption \ref{assump:cutoff-vs-concavity} is equivalent to
\(\theta\mapsto W(\theta,p)f(\theta\mid p)\) having a nondecreasing
almost-everywhere representative. Thus neither monotonicity of
\(W(\cdot,p)\) nor preference--belief independence suffices on its own. If
\(f(\cdot\mid p)\) is absolutely continuous in \(\theta\), then
\[
B_{zz}(\theta,p)
=-2f(\theta\mid p)
-[V(p)+\theta]\partial_\theta f(\theta\mid p)
\]
almost everywhere. Assumption \ref{assump:cutoff-vs-concavity} is therefore
equivalent to
\begin{align}
2f(\theta\mid p)
+[V(p)+\theta]\partial_\theta f(\theta\mid p)
\ge 0
\label{eq:primitive-cutoff-vs-concavity}
\end{align}
for almost every \(\theta\) and \(G\)-almost every \(p\). Because the density
is positive, the same condition can be written as
\[
2+[V(p)+\theta]\partial_\theta\log f(\theta\mid p)\ge0.
\]
These conditions permit nonmonotone densities but rule out slopes that are too
steep relative to total surplus \(V(p)+\theta\). A simple, stronger sufficient
condition is
\[
\left|[V(p)+\theta]\partial_\theta\log f(\theta\mid p)\right|\le2.
\]
\end{toappendix}

\subsection{Beyond Blanket Concavity: Bounds and Verification}
\label{subsec:concave-envelope-bound}

Corollary \ref{cor:rent-adjusted-global-program} identifies the best threshold.
When Assumption \ref{assump:cutoff-vs-concavity} fails, an unrestricted
mechanism may do better. To bound this possible gain, let
\(\widehat B(\cdot,p)\) be the least concave majorant of \(B(\cdot,p)\) on
\([\underline\theta,\bar\theta]\). Equivalently,
\begin{align}
\widehat B(z,p)
:=
\max_{\pi}
\left\{
\int B(y,p)\,d\pi(y):
\int y\,d\pi(y)=z
\right\},
\label{eq:concave-envelope-lottery}
\end{align}
where the maximum is over probability distributions on the cutoff interval.
Thus \(\widehat B(z,p)\) is the largest conditional virtual surplus from a
cutoff lottery with mean \(z\), before imposing cross-belief IC.

Let \(\mathcal V_{\mathrm{thr}}\) and \(\mathcal V_{\mathrm{glob}}\) denote the
values of the threshold and unrestricted problems, respectively. Define
\begin{align}
\widehat{\mathcal V}
:=
\sup_c
\left\{
\int_{\mathcal P}\widehat B(c(p),p)\,dG(p)
-
\operatorname{Rent}(c)
\right\},
\label{eq:concave-envelope-upper-program}
\end{align}
where the supremum is over measurable weak cutoffs and an infinite
implementation rent gives value \(-\infty\).

\begin{restatable}[Concave-envelope bound and contact]{prop}{concaveenvelopebound}
\label{prop:concave-envelope-bound}
Suppose Assumption \ref{assump:conditional-density} holds. Then
\begin{align}
\mathcal V_{\mathrm{thr}}
\le
\mathcal V_{\mathrm{glob}}
\le
\widehat{\mathcal V}.
\label{eq:concave-envelope-value-bound}
\end{align}
If Assumption \ref{assump:belief-support-span} also holds, the supremum in
\eqref{eq:concave-envelope-upper-program} is attained.

Let \(\widehat c\) be any optimizer of
\eqref{eq:concave-envelope-upper-program}, whenever one exists. Every such
optimizer has finite implementation rent. Define its envelope gap by
\begin{align}
\Delta_{\mathrm{env}}(\widehat c)
:=
\int_{\mathcal P}
\bigl[
\widehat B(\widehat c(p),p)-B(\widehat c(p),p)
\bigr]\,dG(p).
\label{eq:concave-envelope-contact-gap}
\end{align}
Then
\begin{align}
0
\le
\mathcal V_{\mathrm{glob}}-\mathcal V_{\mathrm{thr}}
\le
\mathcal V_{\mathrm{glob}}-\mathcal J(\widehat c)
\le
\Delta_{\mathrm{env}}(\widehat c).
\label{eq:concave-envelope-suboptimality-bound}
\end{align}
In particular, if
\begin{align}
\widehat B(\widehat c(p),p)
=
B(\widehat c(p),p)
\qquad\text{for \(G\)-almost every \(p\),}
\label{eq:concave-envelope-contact}
\end{align}
then the deterministic threshold \(x_{\widehat c}\) is globally optimal among
all BIC/IR mechanisms.
\end{restatable}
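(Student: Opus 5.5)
The plan reuses the endpoint-matched cutoff construction from the proof of Theorem \ref{thm:threshold-sufficiency}, replacing Jensen's inequality by the lottery definition of \(\widehat B\) in \eqref{eq:concave-envelope-lottery}.

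\textbf{Value bounds.} The first inequality \(\mathcal V_{\mathrm{thr}}\le\mathcal V_{\mathrm{glob}}\) is immediate, since thresholds are a subclass. For the second, I take any globally BIC/IR mechanism with allocation \(x\) and floor \(R\in L^1(G)\), which exists by Theorem \ref{thm:global-reduction}. I define \(c_x\) by \eqref{eq:endpoint-matched-cutoff}. As in the proof of Theorem \ref{thm:threshold-sufficiency}, \(R\) and \(R-c_x\) are restrictions of the jointly convex truthful utility, so they are convex. Hence \(R\) is feasible for \(c_x\), and
\[
\operatorname{Rent}(c_x)\le\int R\,dG.
\]
At each \(p\), the layer-cake representation writes \(x(\cdot,p)\) as a lottery over cutoffs \(c_z(p)\) with mean \(c_x(p)\). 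The definition \eqref{eq:concave-envelope-lottery} then gives
\[
\int_\Theta Wxf\,d\theta=\int_0^1 B(c_z(p),p)\,dz\le\widehat B(c_x(p),p).
\]
Integrating over \(p\) and subtracting the floor shows that every mechanism's payoff is at most the objective in \eqref{eq:concave-envelope-upper-program} at \(c_x\), so it is at most \(\widehat{\mathcal V}\). I must check that \(c_x\) is measurable; this follows by Fubini.

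\textbf{Attainment under Assumption \ref{assump:belief-support-span}.} I repeat the compactness argument of Proposition \ref{prop:threshold-attainment-zero-rent} on joint pairs \((c,R)\). \(\widehat B(\cdot,p)\) is bounded, since it is bounded by \(\max_z B(z,p)\), which is uniformly bounded. It is also concave, hence continuous on the open interval. The main obstacle is continuity at the endpoints of \([\underline\theta,\bar\theta]\). I expect to settle it by noting that \(B(\cdot,p)\) is continuous on the closed interval, so its least concave majorant is upper semicontinuous there and in fact continuous. Continuity lets dominated convergence pass \(\int\widehat B(c_n,p)\,dG\) to the limit. Fatou's lemma again handles the floors, after the supporting-line normalization \(R_n(\mu)=0\). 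I also need joint measurability of \(\widehat B\). I would obtain it by writing \(\widehat B\) as a supremum over two-point lotteries with rational weights and points, which is legitimate by continuity and Carathéodory in one dimension.

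\textbf{Gap bound and contact.} Any optimizer \(\widehat c\) attains \(\widehat{\mathcal V}\ge\mathcal J(\text{constant cutoff})>-\infty\), so it has finite rent. Then \(\mathcal J(\widehat c)\le\mathcal V_{\mathrm{thr}}\), which gives the middle inequality of \eqref{eq:concave-envelope-suboptimality-bound}. For the last inequality,
\[
\mathcal V_{\mathrm{glob}}\le\widehat{\mathcal V}=\int\widehat B(\widehat c,p)\,dG-\operatorname{Rent}(\widehat c)=\mathcal J(\widehat c)+\Delta_{\mathrm{env}}(\widehat c).
\]
Under contact \eqref{eq:concave-envelope-contact}, \(\Delta_{\mathrm{env}}(\widehat c)=0\). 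Then \(\mathcal J(\widehat c)=\mathcal V_{\mathrm{glob}}\), and the least-cost floor from Theorem \ref{thm:finite-rent-curvature} together with Proposition \ref{prop:threshold-implementability} yields a BIC/IR threshold mechanism attaining it.
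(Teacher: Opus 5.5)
Your proposal is correct and follows the paper's proof essentially step for step. You use the endpoint-matched cutoff with the lottery definition of \(\widehat B\) for the upper bound, the compactness argument of Proposition~\ref{prop:threshold-attainment-zero-rent} with \(\widehat B\) in place of \(B\) for attainment, and the chain \(\mathcal J(\widehat c)\le\mathcal V_{\mathrm{thr}}\le\mathcal V_{\mathrm{glob}}\le\widehat{\mathcal V}=\mathcal J(\widehat c)+\Delta_{\mathrm{env}}(\widehat c)\) for the gap bound. The only differences are technical: you obtain endpoint continuity from upper semicontinuity of the concave majorant and joint measurability from a countable supremum over two-point lotteries with rational support points (the weights cannot also be taken rational, since the mean pins them down), whereas the paper invokes the (measurable) maximum theorem on the two-point representation. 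Both routes work.
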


\begin{appendixproof}[Proof of Proposition \ref{prop:concave-envelope-bound}]
We first record the properties of the concave envelope used below. For a
fixed \(p\), the right side of
\eqref{eq:concave-envelope-lottery} is concave in \(z\), because mixtures of
feasible lotteries remain feasible, and it dominates \(B(z,p)\) by admitting
the degenerate lottery at \(z\). Conversely, if \(b\) is any concave majorant
of \(B(\cdot,p)\), Jensen's inequality gives, for every feasible \(\pi\),
\[
b(z)
\ge
\int b(y)\,d\pi(y)
\ge
\int B(y,p)\,d\pi(y).
\]
Taking the maximum over \(\pi\) proves that
\eqref{eq:concave-envelope-lottery} is the least concave majorant.

Because the cutoff interval is compact and \(B(\cdot,p)\) is continuous, the
maximum is attained. In one dimension, it is enough to consider lotteries
supported on at most two cutoffs: a linear objective over the distributions
with a fixed mean attains its maximum at an extreme distribution, whose
support has at most two points. The measurable maximum theorem applied to
this finite-dimensional representation makes \(\widehat B\) jointly
measurable, while the maximum theorem makes \(\widehat B(\cdot,p)\)
continuous. Moreover, \(\widehat B\) is uniformly bounded because \(B\) is
uniformly bounded.

The first inequality in \eqref{eq:concave-envelope-value-bound} follows
because deterministic thresholds are a subset of the unrestricted mechanism
class. To prove the second, fix any globally BIC/IR mechanism with allocation
\(x\), floor \(R\), and truthful utility \(U\). Define its endpoint-matched
cutoff by
\[
c_x(p)
:=
\bar\theta-
\int_{\underline\theta}^{\bar\theta}x(s,p)\,ds.
\]
The same-belief envelope gives
\[
R(p)-c_x(p)=U(\bar\theta,p)-\bar\theta.
\]
Since \(U\) is jointly convex, both \(R(p)=U(\underline\theta,p)\) and
\(R-c_x\) are convex in \(p\). Proposition
\ref{prop:threshold-implementability} therefore implies that \(c_x\) is
implementable with floor \(R\), and hence
\begin{align}
\operatorname{Rent}(c_x)
\le
\int_{\mathcal P}R(p)\,dG(p).
\label{eq:mean-cutoff-rent-bound}
\end{align}

For each fixed \(p\), monotonicity of \(x(\cdot,p)\) gives the layer-cake
representation as a lottery over upper-threshold rules. Specifically, define
the generalized cutoffs
\[
c_a(p):=\inf\{s\in\Theta:x(s,p)\ge a\},
\qquad a\in(0,1),
\]
where the infimum of the empty set is \(\bar\theta\). These generalized
inverses are measurable, and conditional absolute continuity makes cutoff
ties irrelevant. Fubini's theorem then gives
\[
c_x(p)=\int_0^1c_a(p)\,da
\]
and
\[
\int_\Theta W(\theta,p)f(\theta\mid p)x(\theta,p)\,d\theta
=
\int_0^1 B(c_a(p),p)\,da.
\]
The definition of \(\widehat B\) therefore gives
\begin{align}
\int_\Theta W(\theta,p)f(\theta\mid p)x(\theta,p)\,d\theta
\le
\widehat B(c_x(p),p).
\label{eq:concave-envelope-conditional-bound}
\end{align}
Using Proposition \ref{prop:prefvirtualsurplus},
\eqref{eq:mean-cutoff-rent-bound}, and
\eqref{eq:concave-envelope-conditional-bound}, the mechanism's payoff
satisfies
\begin{align*}
\Pi(x,R)
&\le
\int_{\mathcal P}\widehat B(c_x(p),p)\,dG(p)
-
\int_{\mathcal P}R(p)\,dG(p)\\
&\le
\int_{\mathcal P}\widehat B(c_x(p),p)\,dG(p)
-
\operatorname{Rent}(c_x)\\
&\le
\widehat{\mathcal V}.
\end{align*}
Taking the supremum over all globally BIC/IR mechanisms proves the upper
bound in \eqref{eq:concave-envelope-value-bound}.

Suppose now that belief-support span holds. The upper program is equivalent
to joint maximization of
\[
\int_{\mathcal P}\widehat B(c(p),p)\,dG(p)
-
\int_{\mathcal P}R(p)\,dG(p)
\]
over the cutoff bounds and the convexity constraints
\(R\ge0\), \(R\) convex, and \(R-c\) convex. The equivalence follows from
Theorem \ref{thm:finite-rent-curvature}, which supplies a least-cost floor for
each finite-rent cutoff. Since \(\widehat B\) is uniformly bounded,
measurable, and continuous in its first argument, the compactness and
dominated-convergence argument in the proof of Proposition
\ref{prop:threshold-attainment-zero-rent} applies with \(B\) replaced by
\(\widehat B\). It yields an optimizer \(\widehat c\).

Finally, let \(\widehat c\) be any optimizer of the upper program. Since
constant cutoffs have finite value whereas an infinite-rent cutoff has value
\(-\infty\), \(\widehat c\) has finite implementation rent. Hence
\(x_{\widehat c}\) is a feasible threshold mechanism,
\[
\mathcal J(\widehat c)
\le
\mathcal V_{\mathrm{thr}}
\le
\mathcal V_{\mathrm{glob}}
\le
\widehat{\mathcal V}.
\]
Optimality of \(\widehat c\) and
\eqref{eq:concave-envelope-contact-gap} give
\[
\widehat{\mathcal V}
-
\mathcal J(\widehat c)
=
\Delta_{\mathrm{env}}(\widehat c).
\]
Combining the last two displays proves
\eqref{eq:concave-envelope-suboptimality-bound}. Under
\eqref{eq:concave-envelope-contact}, the envelope gap is zero, so all four
payoffs coincide and \(x_{\widehat c}\) is globally optimal.
\end{appendixproof}

The envelope gap in \eqref{eq:concave-envelope-contact-gap} bounds both the
gain from randomization and the unrestricted payoff loss from implementing
\(\widehat c\). When \(\widehat B=B\), the result recovers Theorem
\ref{thm:threshold-sufficiency}; more generally, contact only requires \(B\)
to meet \(\widehat B\) at the optimized cutoff. The bound does not assert that
an envelope-attaining lottery is globally implementable. The next example
shows why.

\begin{example}[The mean cutoff is insufficient]
Let \(\Theta=[0,1]\), \(\mathcal P=(0,1)\), and
\(d(p):=p(1-p)\). At each belief \(p\), consider the equal-probability lottery
over cutoffs \(1/2-d(p)\) and \(1/2+d(p)\). Its mean cutoff is the constant
\(1/2\), which is implementable with the zero floor. The lottery itself is
not implementable with that floor. Indeed, the same-belief envelope implies
that truthful utility at \(\theta=1/2\) would be
\[
U(1/2,p)=\frac{1}{2}d(p)=\frac{1}{2}p(1-p).
\]
This function is strictly concave in \(p\), whereas global BIC requires every
fixed-\(\theta\) slice of truthful utility to be convex by Proposition
\ref{prop:convex-utility-characterization}. Thus implementability of the mean
cutoff does not imply implementability of the specified lottery with the same
floor.
\end{example}

The example shows that mean-cutoff ironing alone does not ensure global IC.
Proposition \ref{prop:concave-envelope-bound} nevertheless bounds the possible
gain from randomization. Appendices
\ref{app:partial-conjugate-extensions} and
\ref{app:global-optimality-certificate} provide, respectively, fuller
implementability conditions for randomized allocations and an alternative
route to proving that a given cutoff schedule \(c(p)\) is globally optimal.

\begin{toappendix}
\subsection{Partial Conjugates and Allocation Extensions}
\label{app:partial-conjugate-extensions}

The endpoint-matched cutoff records the total increment in envelope utility.
For a randomized allocation, global IC can also constrain every intermediate
increment. The following result records all of them through a partial
conjugate.

\begin{theorem}[Partial-conjugate characterization]
\label{thm:partial-conjugate-characterization}
Let \(x:\Theta\times\mathcal P\to[0,1]\) be measurable, with
\(x(\cdot,p)\) nondecreasing for every \(p\). Define
\begin{align}
A_x(\theta,p)
&:=
\int_{\underline\theta}^{\theta}x(s,p)\,ds,
\label{eq:partial-conjugate-envelope}\\
c_x(a,p)
&:=
\inf\{\theta\in\Theta:x(\theta,p)\ge a\},
\qquad a\in(0,1),
\label{eq:partial-conjugate-generalized-cutoff}
\end{align}
where the infimum of the empty set is \(\bar\theta\). For every
\(a\in[0,1]\), the partial conjugate of \(A_x\) satisfies
\begin{align}
\Gamma_x(a,p)
&:=
\sup_{\theta\in\Theta}
\{a\theta-A_x(\theta,p)\}
=
\int_0^a c_x(z,p)\,dz,
\label{eq:partial-conjugate-identity}
\end{align}
and, for every finite measurable floor \(R:\mathcal P\to\mathbb R\),
\begin{align}
U_R(\theta,p)
&:=R(p)+A_x(\theta,p)\\
&=
\sup_{a\in[0,1]}
\{a\theta+R(p)-\Gamma_x(a,p)\}.
\label{eq:partial-conjugate-utility-representation}
\end{align}

There exists a globally BIC and truthful-path interim IR mechanism with
allocation \(x\) and truthful utility \(U_R\) if and only if \(R\ge0\) and
\begin{align}
p\longmapsto R(p)-\Gamma_x(a,p)
\quad\text{is convex on \(\mathcal P\) for every \(a\in[0,1]\).}
\label{eq:partial-conjugate-convexity-test}
\end{align}
Consequently,
\begin{align}
\operatorname{Rent}(x)
=
\inf_R
\left\{
\int_{\mathcal P}R(p)\,dG(p):
R\ge0,
\quad R-\Gamma_x(a,\cdot)\text{ convex for every }a\in[0,1]
\right\},
\label{eq:partial-conjugate-rent-program}
\end{align}
where the infimum is over finite, \(G\)-integrable floors and equals
\(+\infty\) if none exists.
\end{theorem}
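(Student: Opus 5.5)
The plan has three steps: prove the conjugate identity \eqref{eq:partial-conjugate-identity} for fixed \(p\); deduce the utility representation \eqref{eq:partial-conjugate-utility-representation} by Fenchel--Moreau duality in \(\theta\); then reduce global implementability, via Proposition \ref{prop:convex-utility-characterization}, to convexity of each affine-in-\(\theta\) piece of that representation.

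\emph{Step 1: the conjugate identity.} Fix \(p\). Since \(x(\cdot,p)\) is nondecreasing with values in \([0,1]\), \(A_x(\cdot,p)\) is convex and Lipschitz on \(\Theta\), with right derivative in \([0,1]\). For \(a\in(0,1)\), the map \(\theta\mapsto a\theta-A_x(\theta,p)\) has derivative \(a-x(\theta,p)\) almost everywhere. This derivative is nonnegative below \(c_x(a,p)\) and nonpositive above it, so the supremum is attained at \(\theta=c_x(a,p)\), after a small check at ties. Hence
\[
\Gamma_x(a,p)=a\,c_x(a,p)-\int_{\underline\theta}^{c_x(a,p)}x(s,p)\,ds .
\]
To match \(\int_0^a c_x(z,p)\,dz\), I will use the layer-cake representation \(x(s,p)=\int_0^1\mathbbm 1\{s\ge c_x(z,p)\}\,dz\) (valid for a.e.\ \(s\)) and Fubini. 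Splitting at \(z\le a\) and \(z>a\) gives
\[
\int_{\underline\theta}^{c}x\,ds=\int_0^1(c-c_x(z,p))_+\,dz=\int_0^a(c-c_x(z,p))\,dz,
\]
where \(c=c_x(a,p)\) and \(c_x(z,p)\) is nondecreasing in \(z\). Substituting yields \eqref{eq:partial-conjugate-identity}. The endpoints need separate treatment: \(a=0\) gives \(\Gamma=-A_x(\underline\theta,p)=0\), and \(a=1\) follows by continuity, or directly since the supremum is over a compact set.

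\emph{Step 2: the representation.} Because \(A_x(\cdot,p)\) is finite, convex and continuous on the compact interval \(\Theta\), Fenchel--Moreau applies, with slopes restricted to \([0,1]\) because all subgradients relative to \(\Theta\) can be chosen in \([0,1]\). This gives \(A_x(\theta,p)=\sup_{a\in[0,1]}\{a\theta-\Gamma_x(a,p)\}\). At each \(\theta\) the supremum is attained at \(a=x(\theta,p)\), using the Fenchel equality at the subgradient \(x(\theta,p)\). Adding \(R(p)\) gives \eqref{eq:partial-conjugate-utility-representation}.

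\emph{Step 3: characterization.} For sufficiency, suppose each \(R-\Gamma_x(a,\cdot)\) is convex. Then \(U_R\) is a supremum of functions jointly convex in \((\theta,p)\), namely \(a\theta\) plus a convex function of \(p\), so \(U_R\) is convex and finite, and it is nonnegative iff \(R\ge0\). To invoke Proposition \ref{prop:convex-utility-characterization}, I need \((x(\theta,p),-\beta)\in\partial U_R(\theta,p)\). I take the active index \(a=x(\theta,p)\) and set \(-\beta(\theta,p)\) equal to a measurable subgradient selection, for example a right derivative, of \(R-\Gamma_x(x(\theta,p),\cdot)\) at \(p\). Since the active piece touches \(U_R\) at \((\theta,p)\) and lies below it everywhere, its subgradient is a subgradient of \(U_R\). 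Measurability of this composite selection is a minor point, handled by writing the right derivative as a limit of difference quotients.

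For necessity, suppose a globally BIC/IR mechanism gives convex \(U_R\). Then
\[
R(p)-\Gamma_x(a,p)=\inf_{\theta\in\Theta}\{U_R(\theta,p)-a\theta\},
\]
by Step 1 and the definition of \(\Gamma_x\). This expresses the function as a partial minimization over \(\theta\) of a jointly convex function, over a convex set, so it is convex in \(p\). It is also finite, being bounded below by \(R-\bar\theta\). The rent formula \eqref{eq:partial-conjugate-rent-program} then follows immediately from Definition \ref{def:rent-cost} together with Theorem \ref{thm:global-reduction}'s integrability correspondence.

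The main obstacle is Step 1's bookkeeping with ties and flats in the generalized inverse, together with the endpoint cases \(a\in\{0,1\}\). I will handle these through the Fubini identity, which sidesteps pointwise tie-breaking. By comparison, the partial-minimization argument for necessity is the conceptual crux but is routine.
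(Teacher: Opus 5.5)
Your proposal is correct and follows the paper's proof in all essentials: layer-cake and Fubini for the conjugate identity, partial minimization of \(U_R(\theta,p)-a\theta\) over \(\theta\) for necessity, a supremum of jointly convex pieces for sufficiency, and the active piece at \(a=x(\theta,p)\) with a right-derivative selection for the tilt (that selection is measurable once \(\Gamma_x\) is jointly measurable, which the paper verifies through \(c_x\)). The only variation is the direction of the duality step: you compute \(\Gamma_x(a,p)\) directly from its maximizer \(\theta=c_x(a,p)\) and get the representation from the Fenchel equality at the subgradient \(x(\theta,p)\), whereas the paper first shows that \(A_x(\cdot,p)\) is the conjugate of \(a\mapsto\int_0^a c_x(z,p)\,dz\) and then recovers the identity by biconjugation.
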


\begin{proof}
Fix \(p\) and suppress it from the notation. Fubini's theorem and the
layer-cake representation give, for every \(\theta\in\Theta\),
\begin{align}
A_x(\theta)
&=
\int_{\underline\theta}^{\theta}
\int_0^1\mathbbm 1\{a\le x(s)\}\,da\,ds\\
&=
\int_0^1(\theta-c_x(a))_+\,da.
\label{eq:partial-conjugate-layer-cake}
\end{align}
Indeed, monotonicity makes \(\{s:x(s)\ge a\}\) an upper interval; whether it
contains its lower endpoint does not affect its Lebesgue measure.

Set \(H(a):=\int_0^a c_x(z)\,dz\). The generalized cutoff is nondecreasing in
\(a\), so \(H\) is convex. Moreover,
\[
\sup_{a\in[0,1]}\{a\theta-H(a)\}
=
\sup_{a\in[0,1]}\int_0^a[\theta-c_x(z)]\,dz
=
\int_0^1(\theta-c_x(z))_+\,dz
=A_x(\theta).
\]
The middle equality holds because \(z\mapsto\theta-c_x(z)\) is
nonincreasing, so its positive set is an initial interval. Conversely, for
each \(a\in[0,1]\), \(H\) has at least one supporting slope in \(\Theta\):
\[
\partial H(a)\cap\Theta\ne\varnothing.
\]
At an interior rank, the one-sided slopes of \(H\) are limits of generalized
cutoffs and therefore lie in \(\Theta\); every slope between them supports
\(H\). At \(a=0\) and \(a=1\), \(\underline\theta\) and \(\bar\theta\),
respectively, are supporting slopes. Convex conjugacy therefore gives
\[
H(a)=\sup_{\theta\in\Theta}\{a\theta-A_x(\theta)\},
\]
which proves \eqref{eq:partial-conjugate-identity}. Taking the conjugate once
more and adding \(R(p)\) proves
\eqref{eq:partial-conjugate-utility-representation}.

We next prove the convexity test. Define
\begin{align}
g_a(p)
&:=R(p)-\Gamma_x(a,p)\\
&=
\inf_{\theta\in\Theta}
\{U_R(\theta,p)-a\theta\}.
\label{eq:partial-conjugate-value-function}
\end{align}
If \(U_R\) is jointly convex, partial minimization over the common convex set
\(\Theta\) makes \(g_a\) convex for every \(a\). Conversely, if every
\(g_a\) is convex, then
\[
U_R(\theta,p)
=
\sup_{a\in[0,1]}\{a\theta+g_a(p)\}
\]
is jointly convex as a supremum of jointly convex functions.

It remains to connect joint convexity to implementation of the specified
allocation, including at its discontinuities. Monotonicity gives, for every
\(\theta,s\in\Theta\),
\[
A_x(s,p)
\ge
A_x(\theta,p)+x(\theta,p)(s-\theta),
\]
so \(x(\theta,p)\) is a subgradient of \(A_x(\cdot,p)\) at \(\theta\).
Let \(a=x(\theta,p)\). Then \(\theta\) minimizes the expression in
\eqref{eq:partial-conjugate-value-function}. To select a belief subgradient
measurably, write \(g(a,p):=g_a(p)\). First note that \(c_x\) is jointly
measurable on \((0,1)\times\mathcal P\). Indeed, for every \(r\in\Theta\),
monotonicity gives
\[
\{(a,p):c_x(a,p)<r\}
=
\bigcup_{s\in\mathbb Q\cap(\underline\theta,r)}
\{(a,p):a\le x(s,p)\}.
\]
Extending \(c_x\) arbitrarily to the endpoint ranks does not affect its rank
integral. Because \(c_x\) is bounded, \(\Gamma_x\), and hence \(g\), is jointly
measurable on \([0,1]\times\mathcal P\). Moreover,
\(g(a,\cdot)\) is finite and convex for every \(a\). For \(n\ge1\), let
\[
\delta_n(p):=\frac{\bar p-p}{n+1},
\]
and define the right derivative
\begin{align}
\widehat b(a,p)
&:=\partial_p^+g(a,p)
=
\lim_{n\to\infty}
\frac{g(a,p+\delta_n(p))-g(a,p)}{\delta_n(p)}.
\label{eq:partial-conjugate-measurable-slope}
\end{align}
The limit is finite and belongs to \(\partial g_a(p)\) because
\(g(a,\cdot)\) is finite and convex on the open interval \(\mathcal P\).
It is jointly measurable as a limit of jointly measurable difference
quotients. Hence
\[
b(\theta,p):=\widehat b(x(\theta,p),p)
\]
is a measurable selection from \(\partial g_{x(\theta,p)}(p)\). For every
\((s,q)\),
\begin{align*}
U_R(s,q)-as
&\ge g_a(q)\\
&\ge g_a(p)+b(\theta,p)(q-p)\\
&=U_R(\theta,p)-a\theta+b(\theta,p)(q-p).
\end{align*}
Hence \((a,b(\theta,p))\in\partial U_R(\theta,p)\). Setting
\(\beta=-b\) and applying Proposition
\ref{prop:convex-utility-characterization} gives global BIC. Since
\(A_x\ge0\) and \(A_x(\underline\theta,p)=0\), truthful-path interim IR is
equivalent to \(R\ge0\). This proves the equivalence. Taking the infimum over
integrable floors gives \eqref{eq:partial-conjugate-rent-program}.
\end{proof}

The characterization becomes finite-dimensional in the allocation rank for a
finite multithreshold rule.

\begin{corollary}[Finite multithreshold allocations]
\label{cor:finite-multithreshold-implementation}
Fix constants \(\alpha_1,\ldots,\alpha_m>0\) with
\(\sum_{j=1}^m\alpha_j\le1\), and measurable cutoff schedules
\[
c_j:\mathcal P\to[\underline\theta,\bar\theta],
\qquad j=1,\ldots,m,
\]
satisfying \(c_1(p)\le\cdots\le c_m(p)\). Consider
\begin{align}
x(\theta,p)
&=
\sum_{j=1}^m\alpha_j
\mathbbm 1\{\theta\ge c_j(p)\}.
\label{eq:finite-multithreshold-allocation}
\end{align}
There exists a globally BIC and truthful-path interim IR implementation with
floor \(R\) if and only if \(R\ge0\) and each of the \(m+1\) functions
\begin{align}
R(p)-\sum_{k=1}^j\alpha_kc_k(p),
\qquad j=0,1,\ldots,m,
\label{eq:finite-multithreshold-convexity}
\end{align}
is convex, where the sum for \(j=0\) is zero. A deterministic threshold is
the case \(m=1\) and \(\alpha_1=1\), which recovers convexity of \(R\) and
\(R-c_1\).
\end{corollary}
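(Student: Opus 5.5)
The plan is to specialize the partial-conjugate characterization (Theorem \ref{thm:partial-conjugate-characterization}) to the multithreshold allocation \eqref{eq:finite-multithreshold-allocation}. Since \(x(\cdot,p)\) is nondecreasing, that theorem applies. Implementability with floor \(R\) is then equivalent to \(R\ge0\) together with convexity of \(p\mapsto R(p)-\Gamma_x(a,p)\) for every rank \(a\in[0,1]\). The task is therefore to compute \(\Gamma_x\) explicitly and show that this continuum of convexity conditions collapses to the \(m+1\) conditions in \eqref{eq:finite-multithreshold-convexity}.

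\textbf{Step 1: the generalized cutoff and \(\Gamma_x\).} Let \(s_j:=\sum_{k\le j}\alpha_k\), with \(s_0=0\), and set \(s_{m+1}:=1\). The allocation takes the value \(s_j\) on \([c_j(p),c_{j+1}(p))\). Hence \(c_x(a,p)=c_j(p)\) for \(a\in(s_{j-1},s_j]\), and \(c_x(a,p)=\bar\theta\) for \(a>s_m\). Ties among the \(c_j\) are harmless: the rank-integral formula only sees the generalized inverse. Integrating, \(\Gamma_x(a,p)\) is piecewise affine in \(a\). At the knots,
\[
\Gamma_x(s_j,p)=\sum_{k=1}^j\alpha_kc_k(p),
\]
and on \([s_{j-1},s_j]\) it interpolates linearly between \(\Gamma_x(s_{j-1},p)\) and \(\Gamma_x(s_j,p)\). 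On the final segment \([s_m,1]\) it equals \(\Gamma_x(s_m,p)+(a-s_m)\bar\theta\).

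\textbf{Step 2: reducing to the knots.} For \(a=\lambda s_{j-1}+(1-\lambda)s_j\), the function \(R-\Gamma_x(a,\cdot)\) is the convex combination \(\lambda[R-\Gamma_x(s_{j-1},\cdot)]+(1-\lambda)[R-\Gamma_x(s_j,\cdot)]\). It is therefore convex whenever the two knot functions are. On the last segment, the function differs from the knot-\(m\) function only by a constant, so it is convex whenever that one is. Conversely, the knot functions are among the required ones. So \eqref{eq:partial-conjugate-convexity-test} is equivalent to convexity of the \(m+1\) knot functions, which are exactly \eqref{eq:finite-multithreshold-convexity}.

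\textbf{Step 3: the deterministic case.} Setting \(m=1\) and \(\alpha_1=1\) leaves the two functions \(R\) and \(R-c_1\). This recovers Proposition \ref{prop:threshold-implementability}.

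The main obstacle is Step 1. It requires a careful rank-by-rank identification of \(c_x\) at the rank breakpoints and when several cutoffs coincide or equal \(\bar\theta\). I would handle this by noting that \(\Gamma_x\) is an integral in \(a\), so values of \(c_x\) at finitely many ranks are irrelevant.
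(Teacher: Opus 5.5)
Your proposal is correct and follows the paper's own proof: you compute \(\Gamma_x(a,p)\) as piecewise affine in the rank with knots at the partial sums \(s_j\). You then write each intermediate-rank condition as a convex combination of the two adjacent knot conditions, treat the segment \([s_m,1]\) as a constant shift of the knot-\(m\) condition, and close with Theorem \ref{thm:partial-conjugate-characterization}. The identification \(c_x(a,p)=c_j(p)\) on \((s_{j-1},s_j]\) in fact holds exactly, even with tied cutoffs or cutoffs at \(\bar\theta\), so the almost-everywhere caveat you flag as the main obstacle is not needed.
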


\begin{proof}
Let \(q_j:=\sum_{k=1}^j\alpha_k\), with \(q_0=0\). For
\(a\in[q_{j-1},q_j]\), equation
\eqref{eq:partial-conjugate-identity} gives
\[
\Gamma_x(a,p)
=
\sum_{k=1}^{j-1}\alpha_kc_k(p)
+(a-q_{j-1})c_j(p).
\]
Writing \(\tau=(a-q_{j-1})/\alpha_j\) therefore yields
\begin{align*}
R-\Gamma_x(a,\cdot)
&=
(1-\tau)
\left(R-\sum_{k=1}^{j-1}\alpha_kc_k\right)
+\tau
\left(R-\sum_{k=1}^{j}\alpha_kc_k\right).
\end{align*}
Thus convexity at the finitely many ranks \(q_j\) implies convexity at every
intermediate rank. If \(q_m<1\), then for \(a\in[q_m,1]\),
\[
\Gamma_x(a,p)
=
\sum_{k=1}^m\alpha_kc_k(p)+(a-q_m)\bar\theta,
\]
so the remaining condition is equivalent to the condition at \(q_m\).
Necessity follows by evaluating
\eqref{eq:partial-conjugate-convexity-test} at each \(q_j\), and Theorem
\ref{thm:partial-conjugate-characterization} completes the proof.
\end{proof}

For a smooth monotone allocation, the same family of constraints becomes a
curvature inequality for its inverse allocation rule.

\begin{corollary}[Smooth monotone allocations]
\label{cor:smooth-monotone-implementation}
Let \(x:\Theta\times\mathcal P\to[0,1]\) be measurable. Suppose
\(x(\cdot,p)\) is continuous and strictly increasing from zero to one on
\(\Theta\) for every \(p\). Let
\[
c(a,p):=x(\cdot,p)^{-1}(a),
\qquad (a,p)\in[0,1]\times\mathcal P.
\]
Then \(x\) has a globally BIC and truthful-path interim IR implementation with
floor \(R\) if and only if \(R\ge0\) and
\begin{align}
p\longmapsto
R(p)-\int_0^a c(z,p)\,dz
\quad\text{is convex for every \(a\in[0,1]\).}
\label{eq:smooth-monotone-integrated-cutoff-test}
\end{align}
If \(R\in C^2(\mathcal P)\) and \(c_{pp}\) exists and is continuous on
\([0,1]\times\mathcal P\), this is equivalent to
\begin{align}
R''(p)
\ge
\int_0^a c_{pp}(z,p)\,dz
\qquad
\forall(a,p)\in[0,1]\times\mathcal P.
\label{eq:smooth-monotone-curvature-test}
\end{align}
\end{corollary}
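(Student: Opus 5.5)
This corollary follows from the partial-conjugate characterization in Theorem~\ref{thm:partial-conjugate-characterization}, so the plan is to compute \(\Gamma_x\) for this smooth allocation and read off the result. Fix \(p\). Because \(x(\cdot,p)\) is continuous and strictly increasing from zero to one on \(\Theta\), the generalized cutoff \(c_x(a,p)=\inf\{\theta:x(\theta,p)\ge a\}\) equals the ordinary inverse \(c(a,p)=x(\cdot,p)^{-1}(a)\) for every \(a\in(0,1)\). Identity~\eqref{eq:partial-conjugate-identity} then gives \(\Gamma_x(a,p)=\int_0^a c(z,p)\,dz\). The endpoint ranks \(a=0,1\) are Lebesgue-null and do not affect the integral. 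The convexity test~\eqref{eq:partial-conjugate-convexity-test}, together with the equivalence of truthful-path interim IR and \(R\ge0\), is therefore exactly~\eqref{eq:smooth-monotone-integrated-cutoff-test}. Measurability of \(x\) and monotonicity in \(\theta\) are the only hypotheses the theorem needs, and both are given.

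For the smooth reformulation, fix \(a\in[0,1]\) and set \(g_a(p):=R(p)-\int_0^a c(z,p)\,dz\). The continuous \(C^2\) function \(g_a\) on the open interval \(\mathcal P\) is convex if and only if \(g_a''\ge0\) there. It then suffices to show
\[
\frac{\partial^2}{\partial p^2}\int_0^a c(z,p)\,dz=\int_0^a c_{pp}(z,p)\,dz .
\]
I would justify this by differentiating under the integral sign. On any compact subinterval \(K\subset\mathcal P\), the set \([0,1]\times K\) is compact, so \(c_{pp}\) is bounded there. The next step is to obtain \(c_p\) and boundedness of \(c_p\) on \([0,1]\times K\). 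One route writes \(c_p(z,p)=c_p(z,p_0)+\int_{p_0}^p c_{pp}(z,s)\,ds\), which presumes \(c_p\) exists and is continuous.

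The main obstacle is this regularity step: the hypothesis gives continuity only of \(c_{pp}\), not of \(c_p\). I would first read the hypothesis ``\(c_{pp}\) exists'' as including existence of \(c_p\), then use the mean value theorem twice on difference quotients. The first application bounds \(c_p\) uniformly on \([0,1]\times K\), using boundedness of \(c\) in \(\Theta\) and of \(c_{pp}\). The second gives dominated convergence for the second difference quotient. Alternatively, I would avoid derivatives in \(p\) of the integral altogether. Convexity of the continuous \(g_a\) is equivalent to nonnegativity of its second distributional derivative. Fubini applied to \(\int g_a\varphi''\,dp\) with test functions \(\varphi\in C_c^\infty(\mathcal P)\), integrating by parts twice in \(p\) for each \(z\), shows that the distributional second derivative is \(R''-\int_0^a c_{pp}\,dz\). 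That distribution is a continuous function, so it is nonnegative everywhere if and only if~\eqref{eq:smooth-monotone-curvature-test} holds. Ranging over all \(a\in[0,1]\) completes the equivalence.
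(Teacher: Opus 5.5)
Your proposal is correct and follows the paper's proof. Identifying the generalized cutoff with the ordinary inverse lets Theorem~\ref{thm:partial-conjugate-characterization} deliver the integrated-cutoff test, and the second-derivative criterion for convexity then gives the curvature test. The paper only asserts the interchange of differentiation and integration ``under the stated smoothness.'' Your two-step mean-value bound on \(c_p\), or alternatively your distributional argument via Fubini and integration by parts against test functions, correctly supplies the justification the paper leaves implicit.
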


\begin{proof}
Strict monotonicity makes the generalized cutoff in
\eqref{eq:partial-conjugate-generalized-cutoff} equal to \(c(a,p)\).
Equation \eqref{eq:smooth-monotone-integrated-cutoff-test} therefore follows
from Theorem \ref{thm:partial-conjugate-characterization}. Under the stated
smoothness, differentiation under the integral gives
\[
\partial_{pp}
\left[R(p)-\int_0^a c(z,p)\,dz\right]
=
R''(p)-\int_0^a c_{pp}(z,p)\,dz.
\]
A twice continuously differentiable function on an interval is convex if and
only if its second derivative is nonnegative, which proves
\eqref{eq:smooth-monotone-curvature-test}.
\end{proof}

Theorem \ref{thm:partial-conjugate-characterization} replaces the single
cutoff by the cumulative cutoff \(\Gamma_x(a,p)\) at every allocation rank
\(a\). For deterministic thresholds, \(\Gamma_x(a,p)=ac(p)\), and the
continuum of constraints reduces to convexity of \(R\) and \(R-c\). For a
genuine lottery, intermediate ranks can impose additional constraints even
when the mean cutoff \(\Gamma_x(1,p)\) is implementable. This identity unifies
the three allocation classes, but the continuum of rank-specific conditions
also explains why it does not solve the unrestricted outer problem.
\end{toappendix}

\begin{toappendix}
\subsection{A Global-Optimality Certificate}
\label{app:global-optimality-certificate}

A complementary certificate can establish unrestricted optimality directly
for a candidate finite-rent threshold. Let
\(\lambda\in L^1(G)\) satisfy
\begin{align}
\int_{\mathcal P}\lambda(p)\,dG(p)=0,
\qquad
\int_{\mathcal P}p\lambda(p)\,dG(p)=0,
\label{eq:certificate-affine-moments}
\end{align}
and define
\begin{align}
h_\lambda(t)
&:=
\int_{\mathcal P}(p-t)_+\lambda(p)\,dG(p),
\label{eq:certificate-curvature-price}\\
D(\lambda)
&:=
\int_{\mathcal P}
\max_{z\in[\underline\theta,\bar\theta]}
\{B(z,p)-\lambda(p)z\}\,dG(p).
\label{eq:certificate-dual-bound}
\end{align}

The function \(\lambda(p)\) is the shadow value of changing the cutoff at
belief \(p\). The two moment restrictions leave every affine cutoff unpriced,
consistent with implementation rent depending on curvature rather than the
cutoff's level or slope. The potential \(h_\lambda(t)\) aggregates these
belief-specific shadow values into a price on curvature at \(t\). The contact
conditions below are complementary-slackness conditions: the candidate cutoff
must maximize locally adjusted virtual surplus, and its curvature can occur
only where the corresponding global incentive price binds.

\begin{restatable}[Global-optimality certificate]{theorem}{globaloptimalitycertificate}
\label{thm:global-optimality-certificate}
Suppose Assumption \ref{assump:conditional-density} holds. If \(\lambda\)
satisfies \eqref{eq:certificate-affine-moments} and
\begin{align}
0\le h_\lambda(t)\le J_G(t)
\qquad\text{for every }t\in\mathcal P,
\label{eq:certificate-dual-feasibility}
\end{align}
then
\begin{align}
\mathcal V_{\mathrm{glob}}\le D(\lambda).
\label{eq:certificate-global-bound}
\end{align}

Let \(c\) be a finite-rent weak cutoff and write
\(D^2c=\nu_+-\nu_-\) for its curvature decomposition. The threshold
\(x_c\) attains the bound in \eqref{eq:certificate-global-bound}, and is
therefore globally optimal, if
\begin{align}
c(p)
&\in
\argmax_{z\in[\underline\theta,\bar\theta]}
\{B(z,p)-\lambda(p)z\},
\label{eq:certificate-pointwise-contact}
\end{align}
holds for \(G\)-almost every \(p\) and
\begin{align}
\nu_+\bigl(\{t:h_\lambda(t)<J_G(t)\}\bigr)&=0,
&
\nu_-\bigl(\{t:h_\lambda(t)>0\}\bigr)&=0.
\label{eq:certificate-curvature-contact}
\end{align}
In that case,
\(\mathcal J(c)=\mathcal V_{\mathrm{glob}}=D(\lambda)\).
\end{restatable}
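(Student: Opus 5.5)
The plan is a weak-duality argument run through the endpoint-matched cutoff, as in the proofs of Theorem \ref{thm:threshold-sufficiency} and Proposition \ref{prop:concave-envelope-bound}. Fix any globally BIC/IR mechanism with allocation \(x\), floor \(R\), truthful utility \(U\), and endpoint-matched cutoff \(c_x\). Use the layer-cake representation with generalized cutoffs \(c_a(p)\). For each \(a\) and \(p\), the definition of \(D(\lambda)\) gives
\[
B(c_a(p),p)\le \max_{z}\{B(z,p)-\lambda(p)z\}+\lambda(p)c_a(p).
\]
Integrating over \(a\in(0,1)\) and then over \(G\) gives
\[
\operatorname{VS}(x)\le D(\lambda)+\int_{\mathcal P}\lambda(p)c_x(p)\,dG(p).
\]
By Proposition \ref{prop:prefvirtualsurplus}, it is then enough to prove the key inequality
\[
\int_{\mathcal P}\lambda c\,dG\le\int_{\mathcal P}R\,dG
\]
for every pair with \(R\ge0\), \(R\) convex, and \(Q:=R-c\) convex. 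Such pairs include \((c_x,R)\), by the joint-convexity argument already used for the endpoint-matched cutoff.

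To prove the key inequality, I would write \(c=R-Q\) and expand each convex function through the hinge representation \eqref{eq:convex-curvature-representation}. For \(t<\mu\), the centered hinge \((t-p)_+\) differs from \((p-t)_+\) by an affine function of \(p\). The two moment restrictions \eqref{eq:certificate-affine-moments} annihilate all affine terms. Hence, for a convex \(r\) with \(\rho=D^2r\), Fubini gives
\[
\int_{\mathcal P}\lambda r\,dG=\int_{\mathcal P}h_\lambda(t)\,d\rho(t).
\]
Applying this to \(R\) and \(Q\), and using \(0\le h_\lambda\le J_G\), yields
\[
\int\lambda c\,dG=\int h_\lambda\,dD^2R-\int h_\lambda\,dD^2Q\le\int J_G\,dD^2R.
\]
By \eqref{eq:convex-jensen-curvature-identity}, the right side equals \(\int R\,dG-R(\mu)\), which is at most \(\int R\,dG\). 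Taking the supremum over mechanisms gives \eqref{eq:certificate-global-bound}.

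The main obstacle is justifying Fubini and the splitting \(\int\lambda c=\int\lambda R-\int\lambda Q\). We know only \(\lambda\in L^1(G)\), \(R\in L^1(G)\), and \(c\) bounded. So \(\lambda R\) need not be integrable, and \(D^2R\) may have infinite mass near the open endpoints. My first approach is to truncate the curvature measures to compact subintervals \([\underline p+1/n,\bar p-1/n]\). Each truncated convex function is bounded, so every exchange of integrals is legitimate, and the affine corrections still vanish by the moment conditions. I would then pass to the limit, using monotone convergence for the nonnegative \(J_G\)- and \(h_\lambda\)-weighted terms and dominated convergence for \(\int\lambda c\,dG\), since \(c\) is bounded. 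If this becomes delicate, an alternative is to prove the key inequality first for \(R\) with finite total curvature and approximate.

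For attainment, take the finite-rent cutoff \(c\) and its least-cost floor \(R^c\) from Theorem \ref{thm:finite-rent-curvature}. This floor has \(D^2R^c=\nu_+\), \(D^2(R^c-c)=\nu_-\), and \(R^c(\mu)=0\). Each inequality in the chain then holds with equality. The pointwise contact condition \eqref{eq:certificate-pointwise-contact} makes the \(B\)-step tight. The contact condition \eqref{eq:certificate-curvature-contact} gives \(h_\lambda=J_G\) \(\nu_+\)-a.e. and \(h_\lambda=0\) \(\nu_-\)-a.e., which makes the curvature step tight. Finally, \(R^c(\mu)=0\). Hence \(\mathcal J(c)=\operatorname{VS}(c)-\operatorname{Rent}(c)=D(\lambda)\ge\mathcal V_{\mathrm{glob}}\ge\mathcal J(c)\), so all three values coincide.
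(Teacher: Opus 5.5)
Your architecture matches the paper's: weak duality through the endpoint-matched cutoff (you inline Proposition~\ref{prop:concave-envelope-bound}), a curvature integration-by-parts identity, and tightness from the contact conditions with the least-cost floor. The gap is at the step you flagged, which is the real technical content of the theorem, and your proposed repair does not work as stated. Truncating $D^2R$ and $D^2Q$ to $[a_n,b_n]\ni\mu$ gives $c_n=R_n-Q_n$. This function equals $c$ on $[a_n,b_n]$, but outside that interval it is the affine extension; for example, $c_n(p)=c(b_n)+c'_+(b_n)(p-b_n)$ for $p>b_n$. Finite $J_G$-weighted curvature does not keep $|c'(t)|(\bar p-t)$ bounded. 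Under uniform $G$, the cutoff $c(p)=\varepsilon\sin((\bar p-p)^{-1/4})$ (shifted into $\Theta$) has $\int J_G\,d|D^2c|<\infty$, yet $|c'(t)|(\bar p-t)$ is unbounded. Since you know only $\lambda\in L^1(G)$, there is no dominating function. Boundedness of $c$ does not help, because $c_n\neq c$ on the tails. Your fallback of first treating finite total curvature and then approximating runs into the same tail terms.

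The convergence $\int\lambda c_n\,dG\to\int\lambda c\,dG$ is true, but only through an exact tail computation with the signed potential:
\[
\int_{\{p>b_n\}}\lambda c_n\,dG=c(b_n)\int_{\{p>b_n\}}\lambda\,dG+c'_+(b_n)\,h_\lambda(b_n).
\]
The expression at $a_n$ is symmetric, using $h_\lambda(t)=\int(t-p)_+\lambda\,dG$. The first term vanishes because $c$ is bounded and $\lambda\in L^1(G)$. The second requires $h_\lambda(t)c'(t)\to0$ at both open endpoints, which is the missing idea. Since $|h_\lambda|\le J_G$, it suffices to show $J_G(t)|c'(t)|\to0$. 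This follows from four facts:
\begin{itemize}
\item $\int J_G\,d|D^2c|<\infty$. In your setup $|D^2c|\le D^2R+D^2Q$, and both $\int J_G\,dD^2R=\int R\,dG-R(\mu)$ and $\int J_G\,dD^2Q=\int Q\,dG-Q(\mu)$ are finite because $R,Q\in L^1(G)$.
\item The bound $|c'(t)|\le|c'(t_0)|+|D^2c|((t,t_0])$.
\item Monotonicity of $J_G$ on each side of $\mu$.
\item $J_G\to0$ at the endpoints.
\end{itemize}
This boundary-term estimate is exactly what the paper's integration-by-parts lemma establishes. You need it for the upper bound, and also for the equality $\int\lambda c\,dG=\int h_\lambda\,d\nu_+-\int h_\lambda\,d\nu_-$ used in your attainment step. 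Once it is supplied, the rest of your argument goes through.
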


\begin{proof}
The proof has two steps. We first establish the integration-by-parts identity
that converts the linear cutoff price into a curvature price. We then combine
that identity with Proposition \ref{prop:concave-envelope-bound} and derive
the equality conditions.

\medskip
\noindent
\emph{Curvature integration-by-parts lemma.}
Let \(\sigma\) be a finite signed measure on the bounded open interval
\(\mathcal P=(\underline p,\bar p)\) such that
\[
\sigma(\mathcal P)=0,
\qquad
\int_{\mathcal P}p\,d\sigma(p)=0,
\]
and set
\[
h(t):=\int_{\mathcal P}(p-t)_+\,d\sigma(p).
\]
If \(c\) is bounded and locally DC, \(\nu:=D^2c\),
\(\int_{\mathcal P}J_G\,d|\nu|<\infty\), and
\(|h|\le J_G\), then
\begin{align}
\int_{\mathcal P}c(p)\,d\sigma(p)
=
\int_{\mathcal P}h(t)\,d\nu(t).
\label{eq:certificate-integration-by-parts}
\end{align}

To prove the lemma, first note that \(D^2h=\sigma\) in the distributional
sense. The two moment conditions also give
\[
h(t)
=
\begin{cases}
\displaystyle\int_{\mathcal P}(t-p)_+\,d\sigma(p),&t<\mu,\\[5pt]
\displaystyle\int_{\mathcal P}(p-t)_+\,d\sigma(p),&t\ge\mu.
\end{cases}
\]
Thus both \(h\) and its one-sided derivatives converge to zero at each open
endpoint. The same endpoint behavior holds for \(J_G\); moreover, \(J_G\) is
nondecreasing to the left of \(\mu\) and nonincreasing to its right.

We next verify that the remaining boundary terms vanish. Fix a
differentiability point \(t_0\in(\underline p,\mu)\). At differentiability points
\(t<t_0\), local bounded variation of \(c'\) gives
\[
|c'(t)|
\le
|c'(t_0)|+|\nu|((t,t_0]).
\]
For any \(\delta\in(\underline p,t_0)\) and \(t<\delta\), monotonicity of
\(J_G\) yields
\begin{align*}
J_G(t)|c'(t)|
&\le
J_G(t)\bigl[|c'(t_0)|+|\nu|([\delta,t_0])\bigr]
+
\int_{(t,\delta)}J_G(s)\,d|\nu|(s).
\end{align*}
Letting \(t\downarrow\underline p\) and then
\(\delta\downarrow\underline p\) shows that
\(J_G(t)|c'(t)|\to0\). The symmetric argument at \(\bar p\) gives the
same conclusion there. Since \(|h|\le J_G\), it follows that
\(h(t)c'(t)\to0\) at both endpoints. Boundedness of \(c\) and the endpoint
limits of \(h'\) also give \(c(t)h'(t)\to0\).

Choose differentiability points
\(a_n\downarrow\underline p\) and
\(b_n\uparrow\bar p\) that are not atoms of \(\sigma\) or \(\nu\).
Stieltjes integration by parts on \([a_n,b_n]\) gives
\[
\int_{(a_n,b_n)}c\,d\sigma
-
\int_{(a_n,b_n)}h\,d\nu
=
\bigl[ch'-hc'\bigr]_{a_n}^{b_n}.
\]
The boundary terms converge to zero by the preceding paragraph. The first
integral converges because \(c\) is bounded and \(\sigma\) is finite; the
second converges absolutely because \(|h|\le J_G\) and
\(\int J_G\,d|\nu|<\infty\). This proves
\eqref{eq:certificate-integration-by-parts}.

We now prove the theorem. Define the signed measure
\(d\sigma=\lambda\,dG\). The moment restrictions in
\eqref{eq:certificate-affine-moments} give the two moment conditions in the
lemma, and its potential is \(h_\lambda\).

Let \(c\) be any finite-rent cutoff and write
\(D^2c=\nu_+-\nu_-\). Theorem \ref{thm:finite-rent-curvature} gives
\[
\int_{\mathcal P}J_G\,d\nu_+<\infty.
\]
Its least-cost floor \(R^c\) is integrable, and
\(Q^c:=R^c-c\) is an integrable convex function with
\(D^2Q^c=\nu_-\). Equation
\eqref{eq:convex-jensen-curvature-identity} therefore gives
\[
\int_{\mathcal P}J_G\,d\nu_-
=
\int_{\mathcal P}Q^c\,dG-Q^c(\mu)
<\infty.
\]
The integration-by-parts lemma and
\eqref{eq:certificate-dual-feasibility} now imply
\begin{align}
\int_{\mathcal P}\lambda(p)c(p)\,dG(p)
&=
\int_{\mathcal P}h_\lambda\,d(\nu_+-\nu_-)
\nonumber\\
&\le
\int_{\mathcal P}J_G\,d\nu_+
=
\operatorname{Rent}(c).
\label{eq:certificate-rent-bound}
\end{align}

Let
\[
M_\lambda(p)
:=
\max_{z\in[\underline\theta,\bar\theta]}
\{B(z,p)-\lambda(p)z\}.
\]
This function is measurable and integrable because \(B\) is jointly
measurable, continuous in \(z\), and uniformly bounded, while
\(\lambda\in L^1(G)\) and the cutoff interval is bounded. For any lottery
\(\pi\) over cutoffs with mean \(z\),
\[
\int B(y,p)\,d\pi(y)
\le
M_\lambda(p)+\lambda(p)z.
\]
Maximizing the left side over such lotteries gives
\begin{align}
\widehat B(z,p)
\le
M_\lambda(p)+\lambda(p)z.
\label{eq:certificate-envelope-bound}
\end{align}
For every finite-rent cutoff, equations
\eqref{eq:certificate-rent-bound} and
\eqref{eq:certificate-envelope-bound} imply
\[
\int_{\mathcal P}\widehat B(c(p),p)\,dG(p)
-\operatorname{Rent}(c)
\le
\int_{\mathcal P}M_\lambda(p)\,dG(p)
=D(\lambda).
\]
The same inequality is immediate for infinite-rent cutoffs. Proposition
\ref{prop:concave-envelope-bound} now proves
\eqref{eq:certificate-global-bound}.

It remains to establish the equality conditions. For any finite-rent cutoff
\(c\), direct substitution of
\eqref{eq:certificate-integration-by-parts} and
\eqref{eq:curvature-rent-formula} gives the exact decomposition
\begin{align}
D(\lambda)-\mathcal J(c)
={}&
\int_{\mathcal P}
\bigl[
M_\lambda(p)-B(c(p),p)+\lambda(p)c(p)
\bigr]\,dG(p)
\nonumber\\
&+
\int_{\mathcal P}[J_G(t)-h_\lambda(t)]\,d\nu_+(t)
+
\int_{\mathcal P}h_\lambda(t)\,d\nu_-(t).
\label{eq:certificate-gap-decomposition}
\end{align}
All three terms are nonnegative. Condition
\eqref{eq:certificate-pointwise-contact} makes the first zero, and the two
support conditions in \eqref{eq:certificate-curvature-contact} make the
remaining terms zero. Hence \(\mathcal J(c)=D(\lambda)\). Since
\(x_c\) is globally feasible and
\(\mathcal V_{\mathrm{glob}}\le D(\lambda)\), it is globally optimal and all
three values coincide.
\end{proof}

The certificate accommodates boundary cutoffs, kinks, and singular curvature,
but is sufficient rather than necessary: the theorem neither asserts that
every optimum admits such a \(\lambda\) nor that minimizing \(D(\lambda)\)
yields strong duality. The next subsection uses the certificate to construct
an optimum that retains positive curvature; the quadratic application later
uses it to verify the dependent running example's exact optimum.
\end{toappendix}

\begin{toappendix}
\subsection{Positive Rent under Soft Concavification}
\label{app:positive-rent-soft-concavification}

The following example shows that soft concavification need not eliminate all
positive cutoff curvature. The optimal cutoff may instead retain positive
implementation rent and strictly dominate every concave cutoff.

\begin{proposition}[Soft concavification can retain positive curvature]
\label{prop:positive-rent-soft-concavification}
Let \(\mathcal P=(0,1)\), let \(G\) be uniform, and let
\(\Theta=[3,4]\). Set \(v(0)=-3\), \(v(1)=3\), so that
\(V(p)=6p-3\). For \(z\in[3,4]\), write \(u=z-3\) and suppose the
conditional survivor function of \(\theta\) is
\begin{align}
1-F(z\mid p)
&=
\frac{(1-u)[6p+(1+5p)u]}{6p+u}.
\label{eq:positive-rent-example-survivor}
\end{align}
Let \(t\) be the unique root in \((1/15,1/14)\) of
\begin{align}
5-80t+78t^2-26t^3+5t^4=0,
\label{eq:positive-rent-example-free-boundary}
\end{align}
and define
\begin{align}
a
&:=
\frac{1-9t+3t^2-t^3}{2(1-t)^3}.
\label{eq:positive-rent-example-slope}
\end{align}
Then the unique optimal cutoff, up to \(G\)-null sets, is
\begin{align}
c^*(p)&=3+a(1-t-p)_+.
\label{eq:positive-rent-example-optimal-cutoff}
\end{align}
It is globally optimal among all BIC/IR mechanisms, strictly outperforms every
concave cutoff, and has least-cost floor and implementation rent
\begin{align}
R^*(p)&=a\bigl(p-(1-t)\bigr)_+,
&
\operatorname{Rent}(c^*)&=\frac{at^2}{2}>0.
\label{eq:positive-rent-example-rent}
\end{align}
It can be implemented with bounded transfers by charging
\(3+a(1-\omega-t)\) when the action is taken and zero otherwise.
\end{proposition}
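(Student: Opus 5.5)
The plan is to reduce everything to a one-dimensional quadratic problem, use Theorem \ref{thm:threshold-sufficiency} for the restriction to thresholds and for uniqueness, and certify the candidate with Theorem \ref{thm:global-optimality-certificate}. The first step is primitives. With \(u=z-3\) we have \(V(p)+z=6p+u\), so the denominator in \eqref{eq:positive-rent-example-survivor} cancels in \eqref{eq:cutoff-virtual-surplus}:
\[
B(z,p)=(1-u)\bigl[6p+(1+5p)u\bigr],\qquad
B_u(z,p)=1-p-2(1+5p)u .
\]
I will first do a routine check that \eqref{eq:positive-rent-example-survivor} defines a valid conditional distribution satisfying Assumption \ref{assump:conditional-density}. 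It equals \(1\) at \(u=0\), equals \(0\) at \(u=1\), and has a strictly positive derivative density. Since \(B(\cdot,p)\) is a strictly concave quadratic, Assumption \ref{assump:cutoff-vs-concavity} holds strictly, and uniform \(G\) satisfies Assumption \ref{assump:belief-support-span}. Theorem \ref{thm:threshold-sufficiency} then says that the unrestricted optimum is a threshold and that the optimal cutoff is unique \(G\)-a.e. So it suffices to show that \(c^*\) solves the threshold problem. The pointwise benchmark \(u^0(p)=(1-p)/(2(1+5p))\) is strictly convex, so adjustment is needed, consistent with Theorem \ref{thm:pointwise-cutoff-optimality}.

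Next I compute rent and floor for the candidate. The cutoff \(c^*\) is piecewise affine with a single upward kink of size \(a\) at \(s:=1-t\). Since \(t<1/2\), we have \(s>\mu=1/2\). Corollary \ref{cor:piecewise-linear-threshold-inner} then gives \(R^*(p)=aK_\mu(p,s)=a(p-s)_+\) and \(\operatorname{Rent}=aJ_G(s)=a\int_s^1(p-s)\,dp=at^2/2\). I will also need \(a>0\) on the stated root interval, which is a direct sign check. For the transfer claim, I will verify BIC/IR directly, as in the running example. A type \((\theta,p)\) that obtains the action pays \(3+a(1-p-t)\) in expectation, so she acts iff \(\theta\ge 3+a(1-p-t)\). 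This reproduces \(x_{c^*}\), because for \(p>s\) the right side is below \(3\) and every type acts. Bounded transfers are then immediate.

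The core step is the certificate. I take \(\lambda(p)=B_u(c^*(p),p)=1-p-2a(1+5p)(s-p)\) on \((0,s)\), which makes \eqref{eq:certificate-pointwise-contact} hold as an interior first-order condition. On \([s,1)\), where \(c^*\) sits at the lower bound, contact requires \(\lambda(p)\ge B_u(0,p)=1-p\). There I choose \(\lambda\equiv1\), so \(h_\lambda''=\lambda g=1=J_G''\). If in addition the two moments \eqref{eq:certificate-affine-moments} vanish, then \(h_\lambda\) coincides with \(J_G(r)=(1-r)^2/2\) on \([s,1)\). This gives the contact condition at the kink, \(h_\lambda(s)=J_G(s)\), and \(\nu_-=0\) makes the second condition in \eqref{eq:certificate-curvature-contact} vacuous. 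The two moment equations \(\int\lambda=0\) and \(\int p\lambda=0\) are polynomial in \((a,t)\). I expect that eliminating \(a\) yields the quartic \eqref{eq:positive-rent-example-free-boundary} and that back-substitution yields \eqref{eq:positive-rent-example-slope}; this is the computation I will check first.

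The main obstacle is dual feasibility \eqref{eq:certificate-dual-feasibility} on \((0,s)\). On that interval \(h_\lambda''=\lambda\), and the moment conditions give \(h_\lambda(0^+)=h_\lambda'(0^+)=0\), while \(J_G(r)=r^2/2\) for \(r<1/2\). I would write \(h_\lambda\) explicitly as a cubic or quartic in \(r\) with coefficients depending on \(t\). I would then verify \(0\le h_\lambda\le J_G\) by sign analysis, using \(t\in(1/15,1/14)\) to bound coefficients and checking the region \([1/2,s]\) separately, where \(J_G\) changes form. Once this holds, Theorem \ref{thm:global-optimality-certificate} gives global optimality of \(x_{c^*}\). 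Uniqueness from Theorem \ref{thm:threshold-sufficiency} then implies that every concave cutoff differs from \(c^*\) on a set of positive \(G\)-measure, because \(c^*\) has an upward kink. Each such cutoff is therefore strictly worse, which proves strict dominance over all concave cutoffs.
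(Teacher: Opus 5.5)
Your proposal is correct and follows the paper's route. It uses the same certificate \(\lambda\): the interior first-order condition on \((0,1-t)\), and \(\lambda\equiv1\) on the flat piece. It uses the same kink formula for \(R^*\) and \(\operatorname{Rent}(c^*)=at^2/2\), and the same direct check of the posted-charge mechanism. Your expectation about the moments is also right. With \(Q_4(t):=5-80t+78t^2-26t^3+5t^4\), one finds
\[
\int_0^1\lambda\,dp=\frac{Q_4(t)}{6(t-1)},\qquad \int_0^1p\lambda\,dp=-\frac{Q_4(t)}{12}.
\]
Eliminating \(a\) therefore gives exactly \eqref{eq:positive-rent-example-free-boundary} and \eqref{eq:positive-rent-example-slope}.

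There are two differences from the paper.

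\textbf{Strict dominance.} For uniqueness and strict dominance over concave cutoffs, you invoke Theorem~\ref{thm:threshold-sufficiency}. The paper instead reads strictness off the first term of the gap decomposition in Theorem~\ref{thm:global-optimality-certificate}, using strict concavity of \(z\mapsto B(z,p)-\lambda(p)z\). Both are valid, and yours additionally gives that every unrestricted optimum is this threshold \(H\)-almost everywhere.

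\textbf{Dual feasibility.} For \eqref{eq:certificate-dual-feasibility} you plan a brute-force sign analysis of an explicit quartic. The paper uses a short structural argument that you should adopt.
\begin{itemize}
\item On \([0,1-t]\), \(\lambda\) is a strictly convex quadratic with endpoint values \(\lambda(1-t)=t\) and \(\lambda(0)=t+6t/(1-t)^2\), both in \((0,1)\). Hence \(\lambda\le1\) on all of \((0,1)\).
\item The two representations \(h_\lambda(r)=\int_0^r(r-p)\lambda\,dp=\int_r^1(p-r)\lambda\,dp\) then give \(h_\lambda\le r^2/2\) and \(h_\lambda\le(1-r)^2/2\). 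That is, \(h_\lambda\le J_G\).
\item Positivity \(h_\lambda>0\) follows from the sign pattern \(+,-,+\) of \(\lambda\). This pattern is forced by \(\int\lambda=0\) and \(\lambda=1\) near \(1\). Combined with \(L(0)=L(1)=0\) for \(L=h_\lambda'\), it makes \(L\) first positive and then negative.
\end{itemize}

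If you keep the computational route, note that \(J_G(r)-h_\lambda(r)=\int_r^{1-t}(p-r)(1-\lambda(p))\,dp\) vanishes to second order at the contact point \(r=1-t\). Coefficient bounds from \(t\in(1/15,1/14)\) alone therefore cannot certify the inequality near that point. You would have to factor out \((1-t-r)^2\) exactly, using the right-tail representation.

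Finally, add two checks you omitted: that the quartic has a unique root in \((1/15,1/14)\), and that \(a<1/2\) (not only \(a>0\)). The second ensures that \(c^*\) actually maps into \([3,4]\).
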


\begin{proof}
Write \(u=z-3\) and \(K=1+5p\). The survivor function in
\eqref{eq:positive-rent-example-survivor} equals one at \(z=3\), equals zero
at \(z=4\), and has density
\begin{align}
f(z\mid p)
&=
\frac{6p^2+Ku(12p+u)}{(6p+u)^2}>0
\qquad
\text{for }(z,p)\in[3,4]\times(0,1).
\label{eq:positive-rent-example-density}
\end{align}
It therefore defines an absolutely continuous conditional distribution with
common support \([3,4]\). All primitives are jointly measurable, and drawing
\(\omega\) conditionally on \((\theta,p)\) as a Bernoulli variable with mean
\(p\) satisfies posterior consistency.

Since \(V(p)+z=6p+u\), multiplying the survivor function by this term gives
\begin{align}
B(z,p)
&=
6p+(1-p)u-(1+5p)u^2.
\label{eq:positive-rent-example-cutoff-surplus}
\end{align}
Hence \(B_{zz}(z,p)=-2(1+5p)<0\), and its unique pointwise maximizer is
\begin{align}
c^0(p)&=3+\frac{1-p}{2(1+5p)}.
\label{eq:positive-rent-example-pointwise-cutoff}
\end{align}
This benchmark is strictly convex because
\[
(c^0)'(p)=-\frac{3}{(1+5p)^2},
\qquad
(c^0)''(p)=\frac{30}{(1+5p)^3}>0.
\]

We next verify the constants in
\eqref{eq:positive-rent-example-free-boundary}--
\eqref{eq:positive-rent-example-slope}. Let
\[
H(s):=5-80s+78s^2-26s^3+5s^4.
\]
Direct calculation gives
\[
H(1/15)=\frac{58}{10125}>0,
\qquad
H(1/14)=-\frac{12511}{38416}<0.
\]
Moreover, \(H'<0\) on \((1/15,1/14)\), since there
\[
H'(s)
=-80+156s-78s^2+20s^3
<-80+\frac{156}{14}+\frac{20}{14^3}<0.
\]
Thus \(t\) exists and is unique. Because \(t<1/14\), the numerator in
\eqref{eq:positive-rent-example-slope} is positive. It is smaller than
\((1-t)^3\), with the difference equal to \(6t\). Consequently,
\(0<a<1/2\), so \(c^*\) lies in \([3,4]\).

The cutoff in \eqref{eq:positive-rent-example-optimal-cutoff} has curvature
measure \(D^2c^*=a\delta_{1-t}\). Because \(1-t>1/2\), the Jensen kernel
satisfies \(J_G(1-t)=t^2/2\). Theorem
\ref{thm:finite-rent-curvature} therefore gives precisely the floor and rent
in \eqref{eq:positive-rent-example-rent}.

It remains to prove optimality. Define
\begin{align}
\lambda(p)
&:=
\begin{cases}
1-p-2(1+5p)a(1-t-p), & p\le 1-t,\\
1, & p>1-t.
\end{cases}
\label{eq:positive-rent-example-certificate}
\end{align}
Using \eqref{eq:positive-rent-example-slope}, direct integration yields
\begin{align}
\int_0^1\lambda(p)\,dp
&=\frac{H(t)}{6(t-1)}=0,
&
\int_0^1p\lambda(p)\,dp
&=-\frac{H(t)}{12}=0.
\label{eq:positive-rent-example-moments}
\end{align}
On \([0,1-t]\), \(\lambda\) is a strictly convex quadratic. Its endpoint
values are
\[
\lambda(1-t)=t,
\qquad
\lambda(0)=1-2a(1-t)
=t+\frac{6t}{(1-t)^2},
\]
which both lie in \((0,1)\); indeed, \(t<1/14\) implies
\[
t+\frac{6t}{(1-t)^2}
<\frac1{14}+\frac{84}{169}<1.
\]
Hence \(\lambda\le1\) on \((0,1)\). The first identity in
\eqref{eq:positive-rent-example-moments}, together with \(\lambda=1\) on
\((1-t,1)\), implies that \(\lambda\) is negative somewhere on
\((0,1-t)\). Strict convexity then gives exactly two roots there, with sign
pattern positive, negative, positive across \((0,1)\).

Define the certificate potential
\[
h(s):=\int_s^1(p-s)\lambda(p)\,dp.
\]
The two moment identities also give
\[
h(s)=\int_0^s(s-p)\lambda(p)\,dp.
\]
Let \(L(s):=\int_0^s\lambda(p)\,dp=h'(s)\). The sign pattern of
\(\lambda\), together with \(L(0)=L(1)=0\), implies that \(L\) is first
positive and then negative: it increases before the first root, decreases
between the two roots, and after the second root increases to zero from
below. Since \(h(0)=h(1)=0\), it follows that \(h(s)>0\) for every
\(s\in(0,1)\). The inequality \(\lambda\le1\) and the two representations of
\(h\) give
\[
h(s)
\le
\begin{cases}
\displaystyle\int_0^s(s-p)\,dp=s^2/2, & s\le1/2,\\[6pt]
\displaystyle\int_s^1(p-s)\,dp=(1-s)^2/2, & s\ge1/2.
\end{cases}
\]
The right-hand side is \(J_G(s)\) for the uniform distribution, so
\(0\le h\le J_G\). Because \(\lambda=1\) above \(1-t\),
\[
h(1-t)=\int_{1-t}^1\bigl(p-(1-t)\bigr)\,dp
=\frac{t^2}{2}=J_G(1-t).
\]

For \(p>1-t\), the derivative of \(B(z,p)-\lambda(p)z\) at the lower
cutoff bound \(z=3\) is \((1-p)-1=-p<0\), so strict concavity makes
\(c^*(p)=3\) its unique maximizer. For \(p\le1-t\), the derivative
vanishes at \(z=3+a(1-t-p)=c^*(p)\), because
\[
B_z(c^*(p),p)
=1-p-2(1+5p)a(1-t-p)
=\lambda(p).
\]
Thus the pointwise contact condition in Theorem
\ref{thm:global-optimality-certificate} holds. The cutoff has only the
positive curvature atom \(a\delta_{1-t}\), where
\(h(1-t)=J_G(1-t)\), so the
curvature contact condition also holds. The theorem proves that \(c^*\) is
globally optimal.

The same argument also gives strictness. For any other finite-rent cutoff,
the first term in the gap decomposition
\eqref{eq:certificate-gap-decomposition} is strictly positive whenever that
cutoff differs from \(c^*\) on a set of positive \(G\)-measure, because
\(B(\cdot,p)-\lambda(p)(\cdot)\) is strictly concave. A concave cutoff cannot
agree with \(c^*\) almost everywhere: both are continuous, whereas \(c^*\)
has an upward slope change of size \(a\) at \(1-t\). Every concave cutoff is
finite-rent, since the zero floor is feasible, and therefore gives the
principal a strictly lower payoff.

Finally, consider the direct mechanism that applies the cutoff \(c^*\),
charges \(3+a(1-\omega-t)\) when it takes the action, and charges zero otherwise.
Whatever a true type \((\theta,p)\) reports, her payoff is either
\(\theta-3-a(1-t-p)\) if the action is taken or zero otherwise. Truthful
reporting selects the better outcome, taking the action at ties. The mechanism
is therefore globally BIC/IR and implements \(c^*\). Its lowest-type utility
is \(a\bigl(p-(1-t)\bigr)_+=R^*(p)\), and its realized transfers are bounded.
\end{proof}
\end{toappendix}

\subsection{Quadratic Cutoff Loss}
\label{subsec:quadratic-cutoff-loss}

We now specialize the preceding framework to exact quadratic cutoff loss. The
curvature formula turns the threshold problem into a convex program. In the
dependent running example, the appendix verifies that a proposed affine cutoff
solves this program. In general, however, the program must be solved
numerically rather than in closed form.

\begin{assump}[Quadratic cutoff loss]
\label{assump:quadratic-cutoff-loss}
There are measurable functions
\(m:\mathcal P\to(\underline\theta,\bar\theta)\) and
\(\kappa:\mathcal P\to(0,\infty)\), with \(\kappa\in L^1(G)\), such that, for
\(G\)-almost every \(p\) and every \(z\in[\underline\theta,\bar\theta]\),
\begin{align}
B(z,p)
=
B(m(p),p)
-\frac{\kappa(p)}{2}[z-m(p)]^2.
\label{eq:quadratic-cutoff-loss}
\end{align}
\end{assump}

The assumption imposes an exact, rather than local, quadratic loss. For
\(G\)-almost every \(p\), it makes \(m(p)\) the unique pointwise maximizer of
\(B(\cdot,p)\), so \(m=c^0\) \(G\)-almost everywhere in this specialization.
Under Assumption \ref{assump:conditional-density}, it is equivalent to
\begin{align}
W(\theta,p)f(\theta\mid p)
=
\kappa(p)[\theta-m(p)]
\quad\text{for Lebesgue-almost every }\theta\in\Theta.
\label{eq:quadratic-cutoff-loss-assumption}
\end{align}
The equivalence follows from \(B_z=-Wf\), which holds almost everywhere. The
uniform benchmark is the special case
\(\Theta=[0,1]\), \(\kappa(p)=2\), and \(m(p)=1-p\). More generally, if
\(\theta\mid p\) is uniform on
\([\underline\theta,\bar\theta]\), then
\[
\kappa(p)=\frac{2}{\bar\theta-\underline\theta},
\qquad
m(p)=\frac{\bar\theta-V(p)}{2},
\]
provided \(m(p)\) lies in the interior of \(\Theta\).

Equation \eqref{eq:quadratic-cutoff-loss} makes cutoff virtual surplus strictly
concave. Theorem \ref{thm:threshold-sufficiency} therefore implies that
restricting attention to deterministic thresholds does not reduce the
principal's attainable payoff, whether \(m\) is concave, convex, or has mixed
curvature across beliefs.\footnote{If \(G\) has gaps, implementation rent
depends on \(m\) at unrealized as well as realized reports. Thus statements
below about \(\operatorname{Rent}(m)\) concern the full schedule
\(m:\mathcal P\to\Theta\).}

The uniform benchmark is deliberately simple. We now introduce
preference--belief dependence in the same capacity-expansion setting. The
specification is chosen to preserve quadratic cutoff loss while making the
pointwise cutoff convex, so that truthful belief reporting becomes costly.

\Needspace{6\baselineskip}
\begin{mdframed}[style=runningexamplebox]
\begin{runexamp}[Capacity expansion: dependence and quadratic loss]
Restrict beliefs to \(p\in(1/2,3/4)\), write \(q:=2p-1\), and let \(q\) have
density \(g(q)=\frac83(1-q)\) on \((0,1/2)\). Replace the uniform conditional
distribution of preferences with
\begin{align}
1-F(\theta\mid p)
&=
\frac{(1-\theta)[\theta+2q(1-q)]}
{2(1-q)(\theta+q)},
\qquad \theta\in[0,1].
\label{eq:running-example-survival}
\end{align}
This survivor function is increasing in \(q\) at every interior \(\theta\):
more optimistic managers also tend to value expansion more highly. It defines
a strictly positive conditional density on \([0,1]\).

The firm's expected payoff from expansion remains \(V(p)=2p-1=q\).
Substituting \eqref{eq:running-example-survival} into
\eqref{eq:cutoff-virtual-surplus} gives
\begin{align}
B(z,p)
&=
\frac{(1-z)[z+2q(1-q)]}{2(1-q)}
\nonumber\\
&=
B(m(p),p)-\frac{[z-m(p)]^2}{2(1-q)},
\qquad
m(p):=q^2-q+\frac12.
\label{eq:running-example-quadratic-loss}
\end{align}
Thus \(m\) is the unique pointwise cutoff and \(\kappa(q)=1/(1-q)\). The
cutoff is strictly convex across beliefs: \(m''(p)=8\). Relative to the
independent cutoff \((1-q)/2\), the dependence lowers the cutoff most at
intermediate beliefs. The firm consequently expands for more preference
types there, causing the cutoff to fall rapidly at low beliefs and then
flatten. This upward change in slope is the positive curvature priced by
implementation rent.

The mean of \(q\) is \(2/9\), so Corollary
\ref{cor:convex-threshold-inner} gives
\begin{align}
R^m(q)
&=\left(q-\frac29\right)^2,
&
\operatorname{Rent}(m)
&=\frac{13}{648}>0.
\label{eq:running-example-benchmark-rent}
\end{align}
The pointwise benchmark must therefore be adjusted. At the same time,
\eqref{eq:running-example-quadratic-loss} is strictly concave in the candidate
cutoff, so Theorem \ref{thm:threshold-sufficiency} makes the best
rent-adjusted threshold optimal in the unrestricted mechanism class.
\end{runexamp}
\end{mdframed}

Combining the quadratic loss with the curvature-cost formula gives the
floor-free program
\begin{align}
\min_{\substack{c:\mathcal P\to[\underline\theta,\bar\theta]\\
c\text{ locally DC}}}
\quad&
\frac12\int_{\mathcal P}\kappa(p)[c(p)-m(p)]^2\,dG(p)
+
\int_{\mathcal P}J_G(t)\,d(D^2c)_+(t),
\label{eq:quadratic-curvature-program}
\end{align}
where an infinite curvature penalty gives value \(+\infty\). The first term
penalizes movement away from the pointwise benchmark; the second prices the
positive curvature that makes truthful belief reporting costly. Thus
\eqref{eq:quadratic-curvature-program} is the quadratic instance of
rent-adjusted, or soft, concavification. Substituting quadratic loss into
Corollary \ref{cor:rent-adjusted-global-program} gives an equivalent joint
convex program in \((c,R)\), which is convenient for attainment and
computation.

\begin{restatable}[Quadratic soft concavification]{prop}{quadraticglobalsolution}
\label{prop:quadratic-global-solution}
Suppose Assumptions \ref{assump:conditional-density},
\ref{assump:belief-support-span}, and \ref{assump:quadratic-cutoff-loss} hold.
The floor-free program \eqref{eq:quadratic-curvature-program} and the
equivalent joint program obtained from
\eqref{eq:rent-adjusted-global-program} are attained and have the same
minimum and the same set of optimal cutoffs. The unrestricted value equals
\(\int_{\mathcal P}B(m(p),p)\,dG(p)\) minus this minimum. The common optimal
cutoff is unique \(G\)-almost everywhere, and every unrestricted optimal
allocation agrees \(H\)-almost everywhere with its induced threshold.
Implementing floors, tilts, and transfers need not be unique.
\end{restatable}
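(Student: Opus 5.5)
The plan is to reduce everything to results already proved, with Assumption \ref{assump:quadratic-cutoff-loss} supplying two facts. First, \(m(p)\) is the pointwise maximizer, so \(\overline B(p)=B(m(p),p)\) and \(\mathcal L(z,p)=\tfrac{\kappa(p)}{2}[z-m(p)]^2\) for \(G\)-almost every \(p\). Second, \(B(\cdot,p)\) is strictly concave \(G\)-almost everywhere, because \(\kappa>0\).

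\emph{Step 1 (program equivalence).} Substituting this loss into the rent-adjusted program \eqref{eq:rent-adjusted-concavification} gives exactly \eqref{eq:quadratic-curvature-program}. Substituting it into \eqref{eq:rent-adjusted-global-program} gives the joint program in \((c,R)\). For a fixed cutoff \(c\), minimizing the joint objective over feasible floors yields \(\operatorname{Rent}(c)\), and Theorem \ref{thm:finite-rent-curvature} attains this value with \(R^c\) whenever it is finite. Hence the two programs have the same infimum. Moreover, \(c\) is optimal in the floor-free program if and only if \((c,R)\) is optimal in the joint program for some (indeed any least-cost) floor \(R\). One direction is immediate. For the other, any optimal joint pair must use a least-cost floor, by the first part of Proposition \ref{prop:optimal-adjustment-implications}; so its cutoff attains the floor-free minimum.

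\emph{Step 2 (attainment and value).} Corollary \ref{cor:rent-adjusted-global-program} gives attainment of the joint program under Assumptions \ref{assump:conditional-density} and \ref{assump:belief-support-span}. Step 1 transfers attainment to the floor-free program. The minimum is finite: a constant cutoff at an interior point has zero rent, and since \(\kappa\in L^1(G)\) and \(\Theta\) is bounded, its loss is finite. By Corollary \ref{cor:rent-adjusted-global-program}, the threshold value equals \(\int B(m(p),p)\,dG(p)\) minus this minimum; that integral is finite because \(B\) is bounded. Theorem \ref{thm:threshold-sufficiency} applies because Assumption \ref{assump:cutoff-vs-concavity} holds, so the threshold value equals the unrestricted value.

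\emph{Step 3 (uniqueness).} This step carries the only real content, but the strict-concavity part of Theorem \ref{thm:threshold-sufficiency} already covers it. Since \(B(\cdot,p)\) is strictly concave for \(G\)-almost every \(p\), every unrestricted optimal allocation agrees \(H\)-almost everywhere with a threshold, and under belief-support span the optimal cutoff is unique \(G\)-almost everywhere. As a direct check in the present setting: \(c\mapsto\operatorname{Rent}(c)\) is convex (average \(\varepsilon\)-optimal floors), and \(c\mapsto\tfrac12\int\kappa(c-m)^2\,dG\) is strictly convex modulo \(G\)-null sets because \(\kappa>0\). Take two minimizers that differ on a set of positive \(G\)-measure. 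Their midpoint, whose rent is finite by convexity, would then give a strictly smaller objective, which is a contradiction. Any unrestricted optimum has the same value, so its endpoint-matched cutoff \(c_x\) from the proof of Theorem \ref{thm:threshold-sufficiency} must be optimal, and hence it coincides \(G\)-almost everywhere with the common cutoff; the allocation itself agrees \(H\)-almost everywhere with the induced threshold.

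Non-uniqueness of the implementation is a remark rather than a claim to be proved. Floors can be changed off the support of \(G\), or by adding convex pieces that carry zero \(J_G\)-weight; tilts are subgradient selections \eqref{eq:threshold-beta-subgradient}, which need not be unique at kinks; and Definition \ref{def:state-contingent-tilt} then recovers the corresponding transfers. The main point requiring care is Step 1's matching of optimal sets, specifically ensuring that joint optima use least-cost floors. I would take this from Proposition \ref{prop:optimal-adjustment-implications} rather than reprove it.
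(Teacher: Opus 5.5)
Your proposal is correct and follows essentially the paper's proof. It substitutes the quadratic loss and uses Corollary~\ref{cor:rent-adjusted-global-program} for attainment and for the joint/floor-free equivalence, and Theorem~\ref{thm:threshold-sufficiency} for equality with the unrestricted value and for the $H$-a.e.\ threshold conclusion. It then gives a strict-convexity midpoint argument for $G$-a.e.\ uniqueness: the paper averages two optimal pairs $(c_i,R_i)$ and uses linearity of the floor cost, while you average cutoffs and use convexity of $\operatorname{Rent}$, which is the same idea.

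One correction concerns your closing remark on floors. Under Assumption~\ref{assump:belief-support-span}, $J_G>0$ on all of $\mathcal P$, so a least-cost floor for a fixed cutoff must satisfy $R(\mu)=0$ and $D^2R=(D^2c)_+$ exactly. This rules out changing it off $\operatorname{supp}G$ or adding convex pieces of zero $J_G$-weight. For a fixed cutoff, floor non-uniqueness instead comes from slope freedom at $\mu$ when $c$ has an upward kink there (compare the choice of $s\in\partial c(\mu)$ in Corollary~\ref{cor:convex-threshold-inner}). Your point about non-unique subgradient selections for the tilt is fine.
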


\begin{appendixproof}[Proof of Proposition \ref{prop:quadratic-global-solution}]
Assumption \ref{assump:quadratic-cutoff-loss} makes
\(\mathcal L(c(p),p)=\kappa(p)[c(p)-m(p)]^2/2\). Substitution into
\eqref{eq:rent-adjusted-concavification} gives
\eqref{eq:quadratic-curvature-program}. Corollary
\ref{cor:rent-adjusted-global-program} gives the equivalent attained joint
program. Because
\(B(\cdot,p)\) is strictly concave for \(G\)-almost every \(p\), Theorem
\ref{thm:threshold-sufficiency} makes their common value equal to the
unrestricted value.

To prove cutoff uniqueness, suppose two optimal pairs have cutoffs \(c_1\) and
\(c_2\) that differ on a set of positive \(G\)-measure. Their componentwise
average is feasible. Its floor cost equals the average floor cost, while
strict convexity of the quadratic term makes its cutoff loss strictly smaller
than the average cutoff loss. This contradicts optimality. The strict part of
Theorem \ref{thm:threshold-sufficiency} then implies that every unrestricted
optimal allocation agrees \(H\)-almost everywhere with this threshold.
\end{appendixproof}

Appendix \ref{app:quadratic-joint-program} displays the equivalent joint
formulation and describes a finite-grid approximation.

\begin{toappendix}
\subsection{Equivalent Joint Formulation and Numerical Approximation}
\label{app:quadratic-joint-program}
\label{app:quadratic-numerical-approximation}

Under the assumptions of Proposition
\ref{prop:quadratic-global-solution}, the joint formulation equivalent to
\eqref{eq:quadratic-curvature-program} is
\begin{align}
\min_{c,R}\quad
&
\frac12\int_{\mathcal P}\kappa(p)[c(p)-m(p)]^2\,dG(p)
+\int_{\mathcal P}R(p)\,dG(p)
\label{eq:quadratic-global-program}\\
\text{subject to}\quad
&
\underline\theta\le c\le\bar\theta,
\qquad R\ge0,
\qquad R\text{ and }R-c\text{ convex on }\mathcal P.
\nonumber
\end{align}
The program is attained and has the same minimum and optimal cutoffs as the
floor-free formulation.

For computation, approximate \(c\) and \(R\) on a belief grid by
piecewise-linear schedules and replace the integrals with positive-weight
quadrature. Convexity of \(R\) and \(R-c\) becomes a collection of linear
restrictions on adjacent slopes, while the objective is quadratic. Each
finite-dimensional approximation can therefore be solved with a standard
convex quadratic-programming solver. Without a convergence result, this is a
numerical approximation to the continuum program, not an exact solution.
\end{toappendix}

Appendix \ref{app:quadratic-affine-rays} gives a complementary closed-form
diagnostic along any fixed affine direction. Along such a ray, implementation
rent falls linearly while virtual surplus falls quadratically, but the
ray-specific solution need not solve the global program.

\begin{toappendix}
\subsection{Affine-Ray Adjustments under Quadratic Loss}
\label{app:quadratic-affine-rays}

The next result serves a different purpose. Taking an affine target \(a_0\)
and the benchmark rent \(\operatorname{Rent}(m)\) as given, it solves a
restricted one-dimensional problem: within the family
\(c_t=(1-t)m+ta_0\), it identifies analytically how far the principal should
move from the pointwise benchmark toward that target.

\begin{corollary}[Optimal affine-ray adjustment]
\label{cor:quadratic-affine-ray}
Under the assumptions of Proposition \ref{prop:quadratic-global-solution},
suppose \(\operatorname{Rent}(m)<\infty\), let
\(a_0:\mathcal P\to[\underline\theta,\bar\theta]\) be an affine cutoff, and
define
\[
D(a_0)
:=
\int_{\mathcal P}\kappa(p)[m(p)-a_0(p)]^2\,dG(p),
\qquad
c_t:=(1-t)m+ta_0.
\]
Then, for every \(t\in[0,1]\),
\begin{align}
\operatorname{Rent}(c_t)
&=(1-t)\operatorname{Rent}(m),
\label{eq:quadratic-ray-rent}\\
\mathcal J(c_t)-\mathcal J(m)
&=
t\operatorname{Rent}(m)
-\frac{t^2}{2}D(a_0).
\label{eq:quadratic-ray-gain}
\end{align}
If \(0<\operatorname{Rent}(m)<\infty\), the payoff-maximizing step along this
ray is
\begin{align}
t_{\mathrm{ray}}^*
&:=
\begin{cases}
\min\{\operatorname{Rent}(m)/D(a_0),1\},&D(a_0)>0,\\[3pt]
1,&D(a_0)=0,
\end{cases}
\label{eq:quadratic-ray-optimum}
\end{align}
and \(\mathcal J(c_{t_{\mathrm{ray}}^*})>\mathcal J(m)\).
\end{corollary}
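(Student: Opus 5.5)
The plan is to compute both sides directly from the curvature characterization in Theorem \ref{thm:finite-rent-curvature} and the quadratic loss in Assumption \ref{assump:quadratic-cutoff-loss}.

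\emph{Step 1: the rent formula \eqref{eq:quadratic-ray-rent}.} Since \(a_0\) is affine, \(D^2a_0=0\), so \(D^2c_t=(1-t)D^2m\) as distributions. For \(t\in[0,1)\) the factor \(1-t\) is nonnegative, so taking positive parts commutes with it and \((D^2c_t)_+=(1-t)(D^2m)_+\). The case \(t=1\) is immediate. Finiteness of \(\operatorname{Rent}(m)\) gives, through Theorem \ref{thm:finite-rent-curvature}, that \(m\) is locally DC with finite weighted positive curvature; hence so is \(c_t\). The cutoff \(c_t\) takes values in \(\Theta\) because it is a convex combination of two weak cutoffs. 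The rent formula \eqref{eq:curvature-rent-formula} then yields \((1-t)\operatorname{Rent}(m)\).

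One subtlety needs checking: Theorem \ref{thm:finite-rent-curvature} is stated for the pointwise function, while Assumption \ref{assump:quadratic-cutoff-loss} only requires \(m\) to be defined up to the rent statement. As the paper's footnote indicates, I will treat \(m\) as the full schedule.

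\emph{Step 2: virtual surplus.} By \eqref{eq:quadratic-cutoff-loss}, for \(G\)-a.e.\ \(p\),
\[
B(c_t(p),p)=B(m(p),p)-\tfrac{\kappa(p)}{2}t^2[a_0(p)-m(p)]^2,
\]
since \(c_t-m=t(a_0-m)\). Integrating gives \(\operatorname{VS}(c_t)-\operatorname{VS}(m)=-\tfrac{t^2}{2}D(a_0)\). This integral is finite because \(\kappa\in L^1(G)\) and \(|a_0-m|\le\bar\theta-\underline\theta\). Subtracting the rents from Step 1 in the definition \eqref{eq:cutoff-net-payoff} gives \eqref{eq:quadratic-ray-gain}.

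\emph{Step 3: maximizing over the ray.} The map \(t\mapsto t\operatorname{Rent}(m)-\tfrac{t^2}{2}D(a_0)\) is a concave quadratic when \(D(a_0)>0\). Its unconstrained maximizer is \(\operatorname{Rent}(m)/D(a_0)>0\), so the maximizer on \([0,1]\) is the truncation to \(1\). When \(D(a_0)=0\), the map is linear with positive slope, so the maximizer is \(t=1\).

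In both cases the optimal step is strictly positive, and the gain at that step is strictly positive. If \(t^*=\operatorname{Rent}(m)/D(a_0)\), the gain is \(\operatorname{Rent}(m)^2/(2D(a_0))>0\). If \(t^*=1\), the gain is \(\operatorname{Rent}(m)-D(a_0)/2\ge\operatorname{Rent}(m)/2>0\), since \(D(a_0)\le\operatorname{Rent}(m)\) in that case.

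The only step I expect to need care is Step 1, specifically the Jordan-decomposition scaling and confirming that \(c_t\) qualifies as a finite-rent weak cutoff. The rest is algebra.
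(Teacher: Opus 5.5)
Your proof is correct, and your Steps 2 and 3 match the paper's argument. The difference is in Step 1.

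\textbf{The paper's route.} The paper does not invoke the curvature formula for the rent identity. It works directly in the floor program \eqref{eq:threshold-rent}. If $R$ is feasible for $m$, then $(1-t)R$ is feasible for $c_t$, because $(1-t)R-c_t=(1-t)(R-m)-ta_0$ is convex. Conversely, if $S$ is feasible for $c_t$, then $S/(1-t)$ is feasible for $m$. The zero floor handles $t=1$. This scaling bijection is elementary: it needs only Proposition \ref{prop:threshold-implementability}, avoids the locally DC and Jordan-decomposition bookkeeping, and carries least-cost floors along the ray.

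\textbf{Your route.} You read the identity off Theorem \ref{thm:finite-rent-curvature}: $c\mapsto\int J_G\,d(D^2c)_+$ is positively homogeneous and unchanged by adding an affine function. Given that theorem, this is shorter. It also makes transparent that only the positive curvature of $m$ is being scaled. Via \eqref{eq:curvature-least-cost-floor}, it likewise shows that the canonical floor of $c_t$ is $(1-t)$ times that of $m$.

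\textbf{On your flagged subtlety.} The pointwise-versus-almost-everywhere concern is moot. The hypothesis $\operatorname{Rent}(m)<\infty$ already refers to the full pointwise schedule $m$, and $c_t$ is defined pointwise from it. The quadratic-loss identity is needed only $G$-almost everywhere, in your Step 2.
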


\begin{proof}
Fix \(t<1\). If \(R\) is feasible for \(m\), then \((1-t)R\) is nonnegative
and convex, and
\[
(1-t)R-c_t
=
(1-t)(R-m)-ta_0
\]
is convex because \(a_0\) is affine. Taking the infimum over feasible \(R\)
gives
\[
\operatorname{Rent}(c_t)
\le
(1-t)\operatorname{Rent}(m).
\]
Conversely, if \(S\) is feasible for \(c_t\), then \(S/(1-t)\) is nonnegative
and convex, and
\[
\frac{S}{1-t}-m
=
\frac{S-c_t+ta_0}{1-t}
\]
is convex. Taking the infimum over feasible \(S\) gives the reverse inequality
\[
\operatorname{Rent}(c_t)
\ge
(1-t)\operatorname{Rent}(m).
\]
At \(t=1\), \(c_t=a_0\), and the zero floor is feasible because \(-a_0\) is
affine. This proves \eqref{eq:quadratic-ray-rent} on \([0,1]\).

Equation \eqref{eq:quadratic-cutoff-loss} gives
\[
\operatorname{VS}(c_t)-\operatorname{VS}(m)
=
-\frac{t^2}{2}D(a_0).
\]
Combining this equality with \eqref{eq:quadratic-ray-rent} proves
\eqref{eq:quadratic-ray-gain}. If \(r:=\operatorname{Rent}(m)>0\) and
\(D(a_0)>0\), the gain \(rt-t^2D(a_0)/2\) is strictly concave in \(t\) and has
derivative \(r-tD(a_0)\). Its maximizer on \([0,1]\) is therefore
\(\min\{r/D(a_0),1\}\). If \(D(a_0)=0\), the gain is \(rt\) and is maximized
at \(t=1\). In either case, the maximal gain is strictly positive. This proves
\eqref{eq:quadratic-ray-optimum} and the final claim.
\end{proof}

When the benchmark rent is positive, Corollary
\ref{cor:quadratic-affine-ray} gives a closed-form solution to this affine-ray
problem---conditional on the fixed target \(a_0\) and the benchmark rent
\(\operatorname{Rent}(m)\)---not to the global program
\eqref{eq:quadratic-curvature-program}. If the benchmark rent is not available
in elementary closed form, it can be evaluated from the curvature formula
\eqref{eq:curvature-rent-formula}, or from the equivalent floor problem
\eqref{eq:threshold-rent}. The corollary nevertheless constructs a strictly
profitable feasible adjustment, selects the best step along the chosen
direction, and makes the economic tradeoff transparent. Rent falls linearly
with \(t\), whereas virtual surplus falls quadratically.

The affine target need not be chosen arbitrarily. Among affine cutoffs taking
values in \([\underline\theta,\bar\theta]\), choose \(a_0\) to minimize
\(D(a_0)\); this is the constrained \(\kappa\,dG\)-weighted least-squares affine
approximation to \(m\). Because the gain in
\eqref{eq:quadratic-ray-gain} is weakly decreasing in \(D(a_0)\) for every
step \(t\), combining this target with
\eqref{eq:quadratic-ray-optimum} yields the highest optimized payoff across
all admissible affine rays. The globally optimal cutoff still need not lie on
the resulting ray.
\end{toappendix}

We now solve the dependent specification introduced above. Tailoring the
pointwise rule to intermediate beliefs adds little virtual surplus but bends
the cutoff enough to require substantial rent. Soft concavification straightens
the rule. The appendix verifies the resulting affine cutoff, and threshold
sufficiency then establishes that no BIC/IR mechanism---including a randomized
one---can do better.

\Needspace{6\baselineskip}
\begin{mdframed}[style=runningexamplebox]
\begin{runexamp}[Capacity expansion: the optimal adjustment]
In the dependent extension,
\(\kappa(q)=1/(1-q)\) and
\(dG(q)=\frac83(1-q)\,dq\). Their product is constant, so the cutoff part of
\eqref{eq:quadratic-curvature-program} reduces to ordinary squared-error loss:
\begin{align*}
\frac12\int\kappa(q)[c(q)-m(q)]^2\,dG(q)
&=
\frac43\int_0^{1/2}[c(q)-m(q)]^2\,dq.
\end{align*}
The optimal cutoff in the full rent-adjusted program is unique
\(G\)-almost everywhere. One representative is
\begin{align}
c^*(q)
&=
\frac{11}{24}-\frac q2
=
\frac{23}{24}-p.
\label{eq:running-example-optimal-cutoff}
\end{align}
This cutoff is the weighted least-squares affine approximation to \(m\).
Because it is affine, it requires no implementation rent. A concrete
implementation expands exactly when
\(\theta+p\ge 23/24\). Following expansion, the manager pays \(23/24\) if
demand is low and receives \(1/24\) if demand is high; no transfer is made
without expansion. Her truthful utility is
\((\theta+p-23/24)_+\), so the mechanism is globally BIC/IR and leaves zero
floor.

The pointwise benchmark generates virtual surplus \(3/10\) and requires rent
\(13/648\). The optimal adjustment eliminates this rent at a virtual-surplus
cost of only \(1/4320\). The firm's payoff therefore rises by
\[
\frac{13}{648}-\frac{1}{4320}
=
\frac{257}{12960}.
\]
The firm thus sacrifices only \(1/4320\) in virtual surplus to eliminate
\(13/648\) in implementation rent. Its payoff rises by approximately \(7.1\)
percent relative to implementing the pointwise benchmark.
Figure~\ref{fig:running-example-adjustment} shows how the firm achieves this
gain: it straightens rather than uniformly shifts the cutoff, expanding for
more preference types at relatively low and high beliefs and for fewer at
intermediate beliefs.

The appendix verifies that \(c^*\) solves
\eqref{eq:quadratic-curvature-program}. Strict concavity in
\eqref{eq:running-example-quadratic-loss} and Theorem
\ref{thm:threshold-sufficiency} then imply that \(c^*\) is globally optimal
among all BIC/IR mechanisms and that every optimal allocation coincides
\(H\)-almost everywhere with its induced threshold.
\end{runexamp}
\end{mdframed}

\begin{figure}[H]
\centering
\caption{Optimal cutoff adjustment in the dependent example}
\label{fig:running-example-adjustment}
\begin{tikzpicture}[x=34cm,y=14cm]
  \def\plow{0.5528312164}
  \def\phigh{0.6971687836}

  \path[fill=blue!12]
    plot[domain=0.5:\plow,samples=35]
      (\x,{4*\x*\x-6*\x+2.5})
    -- plot[domain=\plow:0.5,samples=35]
      (\x,{23/24-\x}) -- cycle;
  \path[fill=orange!16]
    plot[domain=\plow:\phigh,samples=55]
      (\x,{4*\x*\x-6*\x+2.5})
    -- plot[domain=\phigh:\plow,samples=55]
      (\x,{23/24-\x}) -- cycle;
  \path[fill=blue!12]
    plot[domain=\phigh:0.75,samples=35]
      (\x,{4*\x*\x-6*\x+2.5})
    -- plot[domain=0.75:\phigh,samples=35]
      (\x,{23/24-\x}) -- cycle;

  \draw[->,black!75] (0.495,0.19) -- (0.762,0.19)
    node[below right,font=\small] {belief \(p\)};
  \draw[->,black!75] (0.5,0.18) -- (0.5,0.535);
  \node[rotate=90,font=\small] at (0.472,0.36)
    {minimum stake for expansion};

  \foreach \x/\lab in {0.5/{\frac12},0.625/{\frac58},0.75/{\frac34}} {
    \draw[black!75] (\x,0.186) -- (\x,0.194)
      node[below=3pt,font=\footnotesize] {$\lab$};
  }
  \foreach \y/\lab in {0.25/{\frac14},0.375/{\frac38},0.5/{\frac12}} {
    \draw[black!75] (0.497,\y) -- (0.503,\y)
      node[left=3pt,font=\footnotesize] {$\lab$};
  }

  \draw[very thick,dashed,black!65]
    plot[domain=0.5:0.75,samples=80]
      (\x,{4*\x*\x-6*\x+2.5});
  \draw[very thick,black]
    plot[domain=0.5:0.75,samples=2]
      (\x,{23/24-\x});

  \draw[very thick,dashed,black!65] (0.600,0.505) -- (0.620,0.505)
    node[right,font=\footnotesize] {pointwise benchmark \(m(p)\)};
  \draw[very thick,black] (0.600,0.475) -- (0.620,0.475)
    node[right,font=\footnotesize] {rent-adjusted optimum \(c^*(p)\)};
\end{tikzpicture}
\par\smallskip
\begin{minipage}{\textwidth}
\small
\textbf{Notes:} The dashed curve maximizes virtual surplus separately at each
belief. Dependence makes it especially permissive at intermediate beliefs, so
its slope rises with optimism; this positive curvature requires implementation
rent. The solid affine rule holds the cutoff's response to beliefs constant and
therefore requires no rent. Because a lower cutoff expands for more preference
types, the adjustment adds expansion in the two outer shaded regions and
removes it in the middle.
\end{minipage}
\end{figure}

\begin{appendixproof}[Verification of the dependent extension]
First, \eqref{eq:running-example-survival} equals one at \(\theta=0\), equals
zero at \(\theta=1\), and has density
\[
f(\theta\mid p)
=
\frac{\theta^2+2q\theta+q(1-2q^2)}
{2(1-q)(\theta+q)^2}.
\]
This density is strictly positive because \(0<q<1/2\) and
\(0\le\theta\le1\).
\Needspace{5\baselineskip}
If \(S(\theta,q)\) denotes the survivor function,
then
\[
\partial_q S(\theta,q)
=
\frac{(1-\theta)\theta[\theta+1-2q+2q^2]}
{2(1-q)^2(\theta+q)^2}>0
\]
for \(\theta\in(0,1)\). This verifies the distributional claims in the
dependent extension.

Next, \(V(p)=q\), so \eqref{eq:cutoff-virtual-surplus} and
\eqref{eq:running-example-survival} give
\[
B(z,p)
=
\frac{(1-z)[z+2q(1-q)]}{2(1-q)}.
\]
Since \(1-2m(p)=2q(1-q)\), the difference-of-squares identity
\[
(1-m)^2-(z-m)^2
=(1-z)[1+z-2m]
\]
proves \eqref{eq:running-example-quadratic-loss}. The density of \(q\)
integrates to one and has moments
\[
\mathbb E[q]=\frac29,
\qquad
\mathbb E[q^2]=\frac5{72}.
\]
Consequently,
\[
\mathbb E[m(q)]-m(\mathbb E[q])
=
\frac{25}{72}-\frac{53}{162}
=
\frac{13}{648},
\]
and the supporting-line formula in Corollary
\ref{cor:convex-threshold-inner} gives
\(R^m(q)=(q-2/9)^2\). This proves
\eqref{eq:running-example-benchmark-rent}.

We verify global optimality using Theorem
\ref{thm:global-optimality-certificate}. Because \(q=2p-1\) is a positive
affine reparameterization of the belief report, the theorem may equivalently
be applied in the \(q\)-coordinate. Write
\[
L:=\frac12,
\qquad
K:=\frac83,
\qquad
a(q):=\frac{11}{24}-\frac q2,
\qquad
r(q):=m(q)-a(q)=q^2-\frac q2+\frac1{24}.
\]
Direct integration gives
\[
\int_0^L r(q)\,dq
=
\int_0^L q r(q)\,dq
=0.
\]
Thus \(a\) is the \(K\,dq\)-weighted least-squares affine approximation to
\(m\). Define the certificate
\[
\lambda(q):=\frac{r(q)}{1-q}.
\]
Since \(dG(q)=K(1-q)\,dq\), this function is integrable and the preceding
equalities give
\[
\int_0^L\lambda(q)\,dG(q)
=
\int_0^Lq\lambda(q)\,dG(q)
=0.
\]
Its curvature price is
\begin{align*}
h_\lambda(t)
&:=
\int_t^L(q-t)\lambda(q)\,dG(q)
=K\int_t^L r(q)(q-t)\,dq
=
\frac29t^2(L-t)^2.
\end{align*}

\Needspace{5\baselineskip}
Let
\[
J_G(t)
:=
\mathbb E[(q-t)_+]-(\mathbb E[q]-t)_+
\]
be the Jensen kernel of \(G\). Using \(\mathbb E[q]=2/9\) and the density of
\(q\), direct integration yields
\[
J_G(t)
=
\begin{cases}
\dfrac49t^2(3-t),&0\le t\le 2/9,\\[5pt]
\dfrac49(L-t)^2(2-t),&2/9\le t\le L.
\end{cases}
\]
The displayed formulas imply \(0\le h_\lambda(t)\le J_G(t)\) on \([0,L]\): on the
first interval their ratio is
\((L-t)^2/[2(3-t)]\), and on the second it is
\(t^2/[2(2-t)]\), with the endpoint cases obtained by continuity. Hence
\eqref{eq:certificate-dual-feasibility} holds.

Equation \eqref{eq:running-example-quadratic-loss} gives
\(B_z(z,q)=\kappa(q)[m(q)-z]\). Because
\(\lambda(q)=\kappa(q)r(q)\),
\[
\partial_z[B(z,q)-\lambda(q)z]
=
\kappa(q)[a(q)-z].
\]
Thus \(a(q)\) uniquely maximizes the adjusted virtual surplus for every
\(q\). The cutoff \(a\) is affine, so both of its curvature measures vanish
and the curvature-contact conditions are automatic. Theorem
\ref{thm:global-optimality-certificate} therefore certifies that
\(a=c^*\) is globally optimal. Strict concavity of \(B(\cdot,q)\) and Theorem
\ref{thm:threshold-sufficiency} imply that every optimal allocation coincides
\(H\)-almost everywhere with the threshold induced by \(a\).
Finally,
\[
\int_0^L r(q)^2\,dq=\frac1{5760},
\]
so the virtual-surplus loss at \(c^*\) is
\((K/2)/5760=1/4320\). Combining this loss with
\eqref{eq:running-example-benchmark-rent} gives the stated payoff gain.
\end{appendixproof}

\Needspace{4\baselineskip}
Imposing zero rent would project \(m\) onto concave cutoffs, but that hard
concavification need not be optimal: the principal may prefer to retain
beneficial positive curvature and pay its price. In the dependent example the
optimum happens to be affine and rent free; more generally, the program prices
this choice rather than imposing it.

\section{Conclusion}
\label{sec:conclusion}

Forecasts are valuable because they guide decisions. When the forecaster
privately values that decision, however, that responsiveness creates an
incentive to manipulate the forecast. This paper shows that state-contingent
payments alone do not resolve the conflict. If two distinct preference types
can each hold every belief, exact incentive compatibility in a belief-only
mechanism forces the decision to ignore the reported belief.
The principal must account
for both what the agent believes and how much she values the action.

Joint screening separates these two incentive problems. For a fixed belief
report, standard screening pins down expected payments up to a utility floor.
Across beliefs, the transfer tilt disciplines misreports by
changing the payment difference between realized states. Together, global
BIC/IR is equivalent to nonnegative convex truthful utility, whose slopes
encode the allocation and transfer tilt. The implementation characterization
is distribution-free, whereas evaluating implementation cost and choosing the
profit-optimal allocation are Bayesian. The principal therefore maximizes
virtual surplus net of implementation rent: the minimum expected floor
needed to make the allocation truthful across beliefs, after ordinary
preference-screening rents are already incorporated into virtual surplus.

For threshold decisions, implementation rent is the price of positive cutoff
curvature. Concave schedules require no additional rent. For an interior
pointwise benchmark, any finite positive implementation rent makes distortion
profitable. The optimal threshold therefore solves a soft concavification
problem, balancing lost virtual surplus against rent saved. This solves the
threshold problem. It also solves the unrestricted problem when cutoff virtual
surplus is concave in the candidate cutoff, which rules out gains from
randomization. Otherwise, the concave-envelope bound limits the possible gain
from randomization and establishes global optimality when it binds. Pointwise
ironing alone does not ensure global IC because cross-belief constraints depend
on the entire cutoff lottery, not only its mean.

The capacity-expansion example makes the tradeoff concrete. In the uniform
benchmark, a low-demand charge implements the responsive decision rule at zero
rent. In the dependent extension, optimism and managerial stakes move together,
making the pointwise rule especially permissive at intermediate beliefs and
bending its cutoff into a convex schedule. Straightening the schedule
sacrifices little virtual surplus, eliminates a much larger implementation
rent, and raises the firm's payoff by approximately \(7.1\) percent. The
strict concavity of cutoff virtual surplus makes this rent-adjusted threshold
globally optimal among all BIC/IR mechanisms.
More broadly, when an agent privately values a decision, the principal must
jointly design what to elicit, how reports affect the decision, and the rents
that responsiveness creates.



%
%
%

\singlespacing

\bibliographystyle{aer}
\bibliography{references}

\end{document}

%% file: preamble.tex
 \usepackage{graphicx} 
\usepackage[top=1in, bottom=1in, left=1in, right=1in]{geometry}
\usepackage{natbib} 
\usepackage{mdframed}
\usepackage{lipsum}
\usepackage{setspace} 

\usepackage{placeins}
\usepackage{dcolumn} 
\usepackage{titlesec}
\usepackage{titling}

\usepackage{cancel}
\usepackage{amsmath} 
\DeclareMathOperator*{\argmax}{arg\,max}

\usepackage{accents}
\usepackage{tablefootnote}

\usepackage{amscd} 
\usepackage{xpatch}
\usepackage{mathpazo} 
\usepackage[titletoc]{appendix}
\usepackage{sgame} 
\usepackage{comment} 
\usepackage{titlesec} 
\usepackage{amsmath} 
\usepackage{amsthm}
\usepackage{titletoc}
\usepackage{amssymb} 
\usepackage{bbm}
\usepackage[normalem]{ulem}
\usepackage{tikz}
\usepackage{thmtools}
\usepackage{thm-restate}
\usepackage[bibliography=common]{apxproof}
\usepackage{amscd} 
\usepackage{sgame} 
\usepackage[hyphens]{url} 
\usepackage{pdflscape}
\usepackage{graphicx} 
\usepackage{float}
\usepackage{caption}
\usepackage{threeparttable}
\usepackage{booktabs}
\usepackage{booktabs}
\usepackage{titletoc}
\usepackage{array}
\usepackage{mathtools}
\usepackage{ushort}
\usepackage{stackengine}

\newtheorem{theorem}{Theorem}

\newtheorem{proposition}{Proposition}

\newtheorem{example}{Example}

\newtheoremstyle{extstyle}
{20pt}
{0pt}
{}
{}
{\bfseries}
{}
{.5em}
{}
\theoremstyle{extstyle}

\newtheoremstyle{runexampstyle}
{10pt}
{0pt}
{}
{}
{\bfseries}
{.}
{.5em}
{}
\theoremstyle{runexampstyle}
\NewDocumentEnvironment{runexamp}{o}{%
  \par\noindent
  \textbf{Running Example\IfValueT{#1}{ (#1)}.}\enspace
  \ignorespaces
}{%
  \par
}

\mdfdefinestyle{runningexamplebox}{%
  linewidth=0.6pt,
  linecolor=black!50,
  backgroundcolor=black!2,
  innertopmargin=8pt,
  innerbottommargin=8pt,
  innerleftmargin=10pt,
  innerrightmargin=10pt,
  skipabove=12pt,
  skipbelow=12pt,
  nobreak=false,
  splittopskip=8pt,
  splitbottomskip=8pt
}
\theoremstyle{plain}

\defcitealias{kuhn1955hungarian}{Kuhn-Munkres}
\usepackage{relsize,etoolbox}
\AtBeginEnvironment{quote}{\small\it}
\setstackEOL{\!}

\usepackage{hyperref} \hypersetup{
    colorlinks = true,
    linkcolor = red,
    anchorcolor = red,
    citecolor = blue,
    filecolor = red,
    urlcolor = blue
}
\usepackage{eqparbox}

\declaretheorem[name=Corollary]{corollary}
\declaretheorem[name=Definition]{definition}
\declaretheorem[name=Assumption]{assump}

\usepackage{optidef}
\robustify\vec
\usepackage{pict2e,picture,graphicx}

\makeatletter
\DeclareRobustCommand{\Arrow}[1][]{%
\check@mathfonts
\if\relax\detokenize{#1}\relax
\settowidth{\dimen@}{$\m@th\rightarrow$}%
\else
\setlength{\dimen@}{#1}%
\fi
\sbox\z@{\usefont{U}{lasy}{m}{n}\symbol{41}}%
\begin{picture}(\dimen@,\ht\z@)
\roundcap
\put(\dimexpr\dimen@-.7\wd\z@,0){\usebox\z@}
\put(0,\fontdimen22\textfont2){\line(1,0){\dimen@}}
\end{picture}%
}
\makeatother

\usepackage{cleveref}
\definecolor{vermillion}{HTML}{D55E00}
\definecolor{bluishgreen}{HTML}{009E73}
\definecolor{skyblue}{HTML}{56B4E9}
\definecolor{pastelblue}{HTML}{B3CDE3}  
\definecolor{pastelpink}{HTML}{FBB4AE}  
\definecolor{pastelgreen}{HTML}{CCEBC5} 
\definecolor{lightpink}{HTML}{FFC0CB}   
\definecolor{mediumpurple}{HTML}{9370DB}
\definecolor{darkblue}{HTML}{00008B}    

\usepackage{pict2e,picture,graphicx}

\newtheoremstyle{revcom}
{25pt}
{10pt}
{\color{blue}\itshape}
{}
{\bfseries\color{black}}
{.}
{.5em}
{}
\theoremstyle{revcom}

\usepackage{enumitem}

\input{online-appendix}

%% file: online-appendix.tex
\makeatletter
\newcommand{\printonlineappendixcontents}{%
  \startcontents[sections]%
  \printcontents[sections]{}{1}{\setcounter{tocdepth}{2}}%
  \ttl@startlists
  \global\let\ttl@startlists\@empty
}
\makeatother

\renewcommand{\appendixsectionformat}[2]{Proofs for Section~#1: #2}

\newcommand{\numberonlineappendixcounter}[1]{%
  \counterwithin*{#1}{section}%
  \csdef{the#1}{\Alph{section}\arabic{#1}}%
  \csdef{theH#1}{appendix.\Alph{section}.\arabic{#1}}%
}

\newif\ifonlineappendixreferences
\onlineappendixreferencesfalse

\renewcommand{\appendixprelim}{%
  \clearpage
  \onecolumn
  \renewcommand*{\thesubsection}{\Alph{section}\arabic{subsection}}%
  \renewcommand*{\thesubsubsection}%
    {\Alph{section}\arabic{subsection}.\arabic{subsubsection}}%
  \forcsvlist{\numberonlineappendixcounter}%
    {theorem,proposition,prop,lemma,corollary,definition,assump,claim,%
     example,table,figure}%
  \pagenumbering{arabic}
  \renewcommand*{\thepage}{a\arabic{page}}%
  \setlength{\emergencystretch}{2em}%
  \singlespacing
  \begin{center}
    {\LARGE\textbf{Online Appendix to}\par}
    \vspace{1mm}
    {\Large\textbf{``Contracting for Decision-Relevant Beliefs''}\par}
    \vspace{2mm}
    {\large\textbf{For Online Publication Only}\par}
  \end{center}
  \vspace{5mm}
  \noindent{\large\textbf{Table of Contents}}\par
  \vspace{2mm}
  \hrule
  \vspace{2mm}
  \printonlineappendixcontents
  \vspace{2mm}
  \hrule
  \vspace{4mm}
  \onehalfspacing
}

\renewcommand{\appendixbibliographystyle}{aer}
\renewcommand{\appendixrefname}{References for the Online Appendix}
\renewcommand{\appendixbibliographyprelim}{%
  \stopcontents[sections]%
  \ifonlineappendixreferences\else\let\putbib\relax\fi
}